\renewcommand{\vec}[1]{\mathbf{#1}}
\newtheorem{proposition}{{\bf \sc Proposition}}
\newtheorem{lemma}{{\bf \sc Lemma}}
\newtheorem{corollary}{{\bf \sc Corollary}}
\newtheorem{definition}{{\bf \sc Definition}}
\theoremstyle{remark} \newtheorem{example}{{\bf \sc Example}}
\def\eproof{\qed}
\begin{document}
	
	\title{{\Large Dynamic opinion updating with endogenous networks}\thanks{A previous version of this paper has circulated with the title ``Polarization when people choose their peers''. 
 We thank Yann Bramoull\'e,
			Benjamin Golub, Matthew Jackson, Alessandro Pavan, Roberto Rozzi, Alessandro Stringhi, Yves Zenou, and the
			seminar participants at the Adansonia Conference in Bocconi, at Asset 2019, at CTN 2018, at the 2nd Padua
			Workshop on Economic Design and Institutions, at UTS Sydney, at the Vanderbilt Fourth Annual Conference on Network Science and Economics, and at the Workshop on Dynamic Models of Interaction in Paris
			for their useful comments.
			Paolo Pin gratefully acknowledges funding from the
			Italian Ministry of Education Progetti di Rilevante Interesse Nazionale
			(PRIN) grants 2017ELHNNJ and P20228SXNF.}}
	
	\author[1]{Ugo Bolletta}
	\author[2]{Paolo Pin}
\affil[1]{{\small%
RITM Laboratoire de Recherches en \'Economie et Gestion, Universit\'e Paris Saclay, France. \url{ugo.bolletta@universite-paris-saclay.fr}}}
		\affil[2]{{\small%
 Department of Economics and Statistics, Universit\`a di Siena, Italy
		\ \& \ 
		BIDSA, Universit\`a Bocconi, Italy.  \
		\url{paolo.pin@unisi.it}}}
  	\date{May 2024}
	\maketitle
	\begin{abstract}
		Polarization is a well-documented phenomenon across a wide range of social issues. However, prevailing theories often compartmentalize the examination of herding behavior and opinion convergence within different contexts. In this study, we delve into the micro-foundations of how individuals strategically select reference groups, offering insight into a dynamic process where both individual opinions and the network evolve simultaneously. 
   We base our model on two parameters: people's direct benefit from connections and their adaptability in adjusting their opinions.
   Our research highlights which conditions impede the network from achieving complete connectivity, resulting in enduring polarization. Notably, our model also reveals that polarization can transiently emerge during the transition towards consensus. We explore the connection between these scenarios and a critical network metric: the initial diameter, under specific conditions related to the initial distribution of opinions.
	\end{abstract}
 
	\noindent {\bf Keywords:} network formation, naive learning, opinion polarization, echo chambers\\
	
	\noindent {\bf JEL Codes:} {\bf D83}	Search, Learning, Information and Knowledge, Communication, Belief, Unawareness -- {\bf D85} Network Formation and Analysis: Theory -- {\bf Z1}	Cultural Economics, Economic Sociology, Economic Anthropology
	
	\bigskip


%
 
    \section{Introduction}

    Political polarization has increased in several countries in the past decades and continues to do so \citep{draca2020polarized,boxell2022cross}. Because of the important implication of political polarization on economic decisions, researchers have investigated the issue and proposed an increasing number of potential explanations for the rise in polarization in the past few years. Among the causes of polarization, \cite{gentzkow2016polarization}, among others, has highlighted the impact of the digitalization of media outlets. Many authors argue that the arrival of the internet may have favored \textit{echo chambers} by providing more fragmented and less quality-oriented information.\footnote{See \cite{gentzkow2016polarization} for an extensive discussion of research papers that suggest the internet as an important driver of polarization} However, \cite{boxell2017greater} show evidence that polarization has increased more among individuals with little internet use, as opposed to more avid consumers of the internet. Recently, increasing attention has been devoted to the role of \textit{affective polarization}, which measures the feeling of negativity toward political parties different than the own. Yet, it is not clear whether affective polarization increases polarization, or the other way around.\footnote{The paper by \cite{boxell2022cross} documents that while affective polarization has increased substantially in the US, other countries have experienced different trends.} 
    In general, polarization seems to arise naturally in the presence of a disconnection in the patterns of relationships. If two groups do not interact, the opinions can evolve independently within each group, with no compromise between each other's opinions.
    
    We develop a model of network formation and opinion updating that allows us to analyze the co-evolution of the societal structure and professed opinions, focusing on the conditions for the society to disagree in the long run.
    We examine two crucial parameters: the benefit derived from connections and the adaptability of agents in adjusting their opinions. We establish precise conditions for network disconnection, which results in sustained disagreement. Furthermore, we pinpoint parameter ranges that drive the system towards 
    \added{\emph{temporary disagreement} (or \emph{self--correcting polarization}, when we want to stress the case of two communities)} and consensus. Lastly, for specific distributional assumptions, we characterize the entire dynamic system through a singular statistic: the network's diameter in the initial period.
    
    We depart from the literature on social learning because we assume that the object of debate, upon which individuals have subjective opinions, does not depend on a true state of the world. Individuals do not obtain utility by aggregating information and discovering the true state of the world but rather choose to profess their opinion in their community for the sake of debating.
    Examples of topics inherently independent from a true state of the world can be, but are not limited to, moral issues, political opinions, and the like. As members of a society, we constantly deliberate with others through social interactions. By doing so, we contribute to the determination of social norms which shape the overall functioning of our society. Since social norms are the outcome of a collective effort, it is only natural to consider how the tension between subjective opinions (the individual component) and professed opinions (the collective component) can affect both the structure of the society and the overall distribution of opinions over time.
    
    In our model, we assume that individual preferences depend on three main components: i) a conformist term, in the form of a quadratic cost for any deviation of an individual's endogenous professed opinion, and the professed opinion of her friends. In other words, individuals dislike disagreement among friends; ii) a taste for internal consistency, in the form of a quadratic cost for the deviation of one's professed opinion from her true opinion. Thus, individuals dislike professing an opinion different from their subjective opinion; finally, iii) individuals derive utility from direct friendships in the form of a linear benefit, so they enjoy having friends.\footnote{The linear benefit is equivalent to the term in the connections model found in the seminal paper of \cite{jackson1996strategic}. We assume there are no indirect benefits for connections at a distance greater than one (friends-of-friends), which is a special case of what is assumed in \cite{jackson1996strategic}.} Such preferences lead to best reply functions such that the individual's professed opinion is a convex combination of the average (professed) opinions in her group of friends and her true opinion.

    Within this framework, we study a discrete dynamical model in which, at every period, individuals interiorize their professed opinion, which becomes their own opinion in the next time step.
   \cite{golub2012homophily} study the \cite{degroot1974reaching} model, on which we base our step by step opinion updating. They provide actually two interpretation of the model: a coordination game and an opinion updating rule. Our interpretation lies in--between those two, and is consistent with the literature on how individuals update their preferences to align with past actions, and in general with the theory of cognitive dissonance, first introduced in psychology by \cite{festinger1957theory}. Such theory states that individuals seek to reduce the discrepancies between their subjective values and their actual behavior (in our case, professed opinions). See \cite{akerlof1982economic}, among others, for an economic approach to cognitive dissonance.
    
    In our model, the network also changes across time, as an effect of individual choices, and since opinions change, individuals will change the network accordingly. This allows us to study the co-evolution of networks and opinions over time. In terms of results, we find that the society can persistently agree or disagree, depending on whether individuals (endogenously) end up forming an integrated society or a segregated society with multiple groups. The basic relation between network structure and consensus (or disagreement) is well known,\footnote{The paper by \cite{golub2010naive} provides two necessary and sufficient conditions on the (exogenous) network structure for consensus of opinions to arise in the long run. Consensus is reached eventually if and only if the network is strongly connected. Thus, if we obtain that the network is strongly connected at any point in time, consensus will be reached in our model, too. On the contrary, disagreement will persist if the society is disconnected, i.e., two or more separated groups are present.} but we provide sufficient conditions on the distribution of opinions for segregation to arise at some point in time. Intuitively, if opinions are distant enough at some point in time, the network would disconnect. Moreover, we identify situations where individuals do not become more ``extreme'' in perpetuity but become more extreme only temporarily. \added{This is what we define temporary disagreement or self--correcting polarization.} In such a case, the overall distribution of opinions becomes polarized for some periods, although consensus still arises in the long run. In other words, polarization can be self-correcting. We provide additional results to show that, when this happens, the society spends a non-negligible amount of time in a polarized state. In fact, the speed of convergence to consensus reaches its minimum during the polarized state.

    To summarize, our analysis defines three main scenarios: long run consensus, when the society \added{ends up being} integrated,  \added{includes (i) monotonic convergence and (ii) temporary disagreement, while (iii) long run disagreement happens when}  the society \added{ends up being} segregated into two or more communities.
    \added{We corroborate our findings by studying analytically a special case of our model. We assume that the initial distribution of opinions is uniformly distributed and that the number of agents tends to infinity. 
    Under this assumption, we show analytically that the diameter of the network in the first steps of the dynamics, defined as the largest geodesic distance between any two individuals in $t_1$, is a strong predictor of the long-run distribution of opinions. In particular, even if the diameter is relatively low (i.e., at least 5) and so the society is fully integrated, but relations are sparse, then disagreement arises in the long run. On the contrary, if the society is strongly integrated and relations are dense enough since the beginning (i.e., the diameter is at most equal to 3), consensus will be reached (quickly). The intermediate value of  $4$ of the diameter can lead to the two scenarios mentioned above, but also  to temporary polarization of opinions. So, we determine specific values of the diameter of the network that allow us to classify the topology of long-run stable outcomes, which could be fruitful for further empirical research.}

    
    The rest of the paper is organized as follows. Section \ref{sec:literature} discusses the related literature. Sections \ref{sec:model} and \ref{sec:results} present the model and the general results of the paper, respectively. In section \ref{sec:uniform}, we provide further analytical results by making explicit assumptions on the initial distribution of opinions. Section \ref{sec:conclusion} concludes.
    \added{\ref{app:proofs} contains all our formal proofs.}
    In \ref{app:rational} we include an extension of the model with increased rationality from our agents. \ref{app:simulations} provide support through numerical simulations, while \ref{app:polarization_normative} provides further contextualization of polarization in our framework.
    
    \section{Related literature}\label{sec:literature}

    Since the paper is related to several branches of literature, we conceptually organize the review in sections. 
\subsection{Games on endogenous networks}
The current paper is not the first to consider a network game with an endogenous network. Among others, \cite{galeotti2010law} and \cite{kinateder2017public} consider strategic network formation in the context of public goods, \cite{boucher2016conformism} and \cite{badev2021} in the context of peer effects. None of these models consider the co-evolution of networks and behaviors. The paper by \cite{konig2014nestedness}, while considering a dynamic approach, focuses specifically on nestedness, defined as the situation where a neighborhood of a node is ``nested" into neighborhoods of nodes with a higher degree. Moreover, while strategic decisions are still part of the network formation process, the dynamics are driven stochastically. For these reasons, we believe their paper differs substantially from ours. We contribute to the literature by analyzing the impact of fully strategic linking decisions on opinion formation and its dynamics, providing a model that can explain the motives behind the polarization of opinions.
Recently, \cite{sadler2021games} contributed toward a general understanding of games on endogenous networks. They study games where agents form the network and create spillovers for their neighbors. Results show that we can obtain strong regularity conditions on the stable network structures by categorizing the nature of the spillovers (positive or negative) and whether the links are complements or substitutes of players' actions. The analysis is carried through considering both notions of stability of Nash equilibrium and pairwise stability. Notably, our analysis does not fit directly in their setup because, in our model, link choices depend on the average behavior of other agents. 
\subsection{Networks and polarization}

Several papers have studied the topic of networks and polarization. Among others, \cite{genicot2022tolerance} has used a similar model to ours to show the effect of tolerance on agents’ ability to compromise. She found that under specific distributions of tolerance levels, polarization may occur. While we keep the parameter defined as tolerance in \cite{genicot2022tolerance} as fixed, we focus on a fully-fledged dynamic model to analyze polarization. Thus, we find that polarization can increase regardless of the initial distribution of opinions, although we use the uniform distribution for some of our analytical results. In general, our model produces naturally cognitive dissonance. In fact, in the presence of social interactions, individuals' professed opinions differ from the subjective opinion because while there is an inherent taste for internal consistency, individuals dislike disagreeing with their friends and are thus brought to mediate their professed opinions. This mechanism is similar to what is referred to as \textit{compromise} in \cite{genicot2022tolerance}. 
The tension between internal consistency and disagreement with friends shapes the strategies for network formation. Naturally, individuals form connections with similar-minded individuals because, by doing so, they minimize cognitive dissonance and disagreement simultaneously. In our model, we show that there is an explicit trade-off between cognitive dissonance and variance of opinions in an individual group of friends. This trade-off drives most of the novel aspects of our model.  

Using a different approach, \cite{hegselmann2002opinion} studied the evolution of the network and opinions in the standard updating framework \'a la \cite{degroot1974reaching}. In the paper, the authors assume that individuals have \textit{bounded confidence}, and aggregate opinions until a certain threshold. Our model lets the same rule arise endogenously, providing a rationale for the exogenous rule imposed by \cite{hegselmann2002opinion}.  Rationalization aside, our model differs substantially from \cite{hegselmann2002opinion} because, being strategic, the individuals can, at any time, sever connections with similar opinions to join a group with more cohesive opinions, which cannot happen in \cite{hegselmann2002opinion}.\footnote{In Appendix \ref{app:HK} we explicitly compare the model in \cite{hegselmann2002opinion} with ours using numerical simulations to clarify the differences in the evolution of the network, driven by our choice to let the network arise endogenously.}

Related to this paper, \cite{anufriev2021dissonance} explicitly model \textit{dissonance}. In their paper, agents face a trade-off because they obtain utility by remaining true to themselves and aligning with their beliefs but simultaneously obtain utility by minimizing the distance between their and other agents' beliefs. In a one-shot model, agents may sever connections when the network is endogenous, which can lead to either consensus or polarization, depending on how costly it is to change the network. In this respect, our model follows a similar approach. While we do not model beliefs but rather professed opinions, we propose a canonical setup to study network formation, and analyze the one-shot, limit, and transitional properties of the dynamic system. 
    
Our model contributes to the literature on networks and polarization by showing a natural tendency for societies to become more polarized, even without external interference in the form of media and political elections, which we do not model. In particular, our model shows how the mass of ideologies in the center of the distribution can decrease, in line with the empirical findings by \cite{draca2020polarized}. In our framework, this phenomenon takes place because of the combined effect of the condensation of opinions far from the center and the capability of individuals to sever connections with like-minded individuals to join a cohesive group. They do so to minimize disagreement with their friends.

\added{Finally, a paper related to ours is \cite{gallo2020social}. In that work, a dynamical system is presented wherein agents autonomously update a weighted network, with links strengthened with a rule reminiscent of \cite{krause2000discrete}. Additionally, our paper investigates the asymptotic outcomes and the convergence speed.}

    \section{The model}\label{sec:model}

In our model, there are $n \geq 2$ agents, \added{indexed from $1$ to $n$}.
\added{We study a discrete dynamical system. 
At each time step, the agents are players in a subgame.
In this section, we present the dynamics and characterize the unique equilibrium of this subgame.}

At the beginning of each time step $t$, each agent $i$ holds a \textit{subjective opinion} $x_{i,t-1}\in \mathbb{R}$.
During each time step, each agent changes her connections in the directed social network and updates her \textit{professed opinion}  $x_{i,t}\in \mathbb{R}$ (which in general differs from their subjective opinion).
At the end of each time step, each agent's professed opinion becomes their new subjective opinion for the next time step.

The connections are recorded in an $n \times n$ adjacency matrix $G_t$ at every time step $t$. 
If at time $t$ an agent $i$ decides to connect to an agent $j$, we have that $G_{ij,t}=1$ (we fix $g_{ii,t}=0$ for every $i$ and $t$). 
\added{The linking choice $G_{i,t}$ of agent $i$ is a row of the adjacency matrix $G_t$, where $1$'s identify a subset of all the other players.}
We denote with $N_{i,t}=\{j|G_{ij,t}=1\}$ the out--neighborhood of agent $i$ at time $t$. 
The cardinality of the set of neighbors defines the degree of an agent, and we denote it by $d_{i,t}=|N_{i,t}|$. 

We assume agents are homogeneous, across time, with respect to a parameter $f\in(0,1)$ (for \emph{flexibility}), which measures the relative weight in their preference for the social component. Agents also assign a linear benefit $V\in\mathbb{R}^+$ to every connection they choose to form. 

Finally, at the beginning of the process, each agent $i$ is characterized by an idiosyncratic term $x_{i,0}\in[0,1]$ that defines their initial subjective opinion on a given topic of discussion, which are the initial conditions of the system.
\added{These initial opinions respect the indeces of players, so that if $i < j$, then $x_{i,0} \leq x_{j,0}$.}\footnote{We assume a uni-dimensional space of opinions. While this is a restriction, several topics of interest can be reduced to one single dimension, such as politics, taxation, moral issues etc. See \cite{demarzo2003persuasion} for an extensive discussion of uni-dimensional opinions and conditions to reduce a multi-dimensional space to a left-right spectrum.} 
So, $x_{i,0}$ is exogenously given \added{as initial condition of the process}, but $x_{i,t}$ will update over time. 
Throughout the paper, we use bold notation to indicate matrices and vectors, while normal notation indicates scalar values.

	    \subsection{Opinion updating within period}
	   \label{section:one period}

     Let's see what happens once the network is formed at each time step.
     Agents are \emph{temporally myopic}, so they focus only on the present time step.
     They care about making friends, while balancing the tradeoff between not disagreeing with their friends and maintaining internal consistency between their subjective opinion and their professed opinion.
      Formally, we assume that preferences take the following form: 
	    \begin{eqnarray}
	\label{eq:payoff}
	\pi_{i,t} (x_{i,t}, \added{x_{i,t-1}}, \vec{x}_{t}) & = &  \sum_{j \in N_{i,t}} \left( V - f (x_{i,t}-x_{j,t})^2 - (1-f) (x_{i,t} - x_{i,t-1})^2 \right) \nonumber \\
	& = &   d_{i,t} \left( V - (1-f) (x_{i,t} - x_{i,t-1})^2  \right) -  f \sum_{j \in N_{i,t}} (x_{i,t} - x_{j,t})^2 \ \ ,
\end{eqnarray}
\added{where $x_{i,t-1}$ is the subjective opinion, and $\vec{x}_{t}$ is the endogenous profile of expressed opinions.}
The payoff structure we use is well-known and leads to linear best-response functions. The preferences capture three main aspects: i) an interest in creating friendships, ii) a conformistic term, which represents the cost of disagreeing with friends and iii) a term for internal consistency, in the form of a cost for professing an opinion that differs from the subjective one. The terms i) and ii) are also the main terms through which the network becomes relevant.
To summarize, individuals dislike interactions that lead to some level of disagreement in professed opinions. If agents compromise, i.e. their professed opinion differs from the subjective opinion, their utility is reduced, too. However, having friends is intrinsically beneficial because every friend provides a gross benefit equal to $V$.

We call $\mu_{i,t} = \frac{\sum_{j \in N_{i,t}} x_{j,t}}{d_{i,t}}$ the mean of the professed opinions of $i$'s neighbors if there are any (that is if $d_{i,t}>0$). 
Otherwise, if $i$ has no out--links, $\mu_{i,t}=x_{i,t}$.
We call $\boldsymbol{\mu}_{t}$ the profile of all $\mu_{i,t}$, as we have defined them.
The unique best response for agent $i$ is the following convex combination determined by the level of flexibility, \added{expressing a system of linear equations, one for each agent $i$, where $\vec{x}_{t-1}$ are parameters and $\vec{x}_t$ are unknown}:
\begin{eqnarray}
\label{eq:bestresponse}
x_{i,t}^* (x_{t-1}, \vec{x}_{t}) &=& f \mu_{i,t} + (1-f) x_{i,t-1}. 
\end{eqnarray}
To begin with, we analyze one period only as if the network was exogenously fixed.
Then, we discuss the network's endogeneity and the model's dynamics.

If we call  
	$D_{ij,t} = \left\{
	\begin{array}{ccc}
		\frac{1}{d_{i,t}} & \mbox{ if $j \in N_{i,t}$,} \\
		0 & \mbox{ otherwise},
	\end{array}
	\right.$
\added{(the \emph{adjusted adjacency matrix})} then a compact way to write (\ref{eq:bestresponse}) is as follows:
\begin{eqnarray*}
	(I - fD_t) \vec{x}_t = (1-f) \vec{x}_{t-1} \ \ .
\end{eqnarray*}

In Lemma  \ref{lemma_unique},  in \ref{app:proofs},\footnote{%
Also the proofs of all the following results are in  \ref{app:proofs}. 
Note that we let $f$ be homogeneous, though the model could account for heterogeneity in this dimension. Lemma  {\color{red} \ref{lemma_unique}} actually proves the result in this more general case.} 
we show that the above equation has a unique solution $\vec{x}_t \in [0,1]^n$.
This means that the one-shot sub-game we base our analysis on has a unique Nash equilibrium.
This is crucial, allowing us to study the network formation process with an approach based on backward induction. Prior to this, we focus on the payoff structure and derive a formula for the \textit{payoff in equilibrium}, which is key to analyzing agents' strategies. 
\begin{lemma} \label{lemma_mean_variance}
	The payoff in equilibrium is
	\begin{eqnarray}
		\label{eq:payoff_equilibrium}
		\pi_{i,t} & =  d_{i,t} \left( V - f (1-f) (\mu_{i,t} - x_{i,t-1})^2  - f \sigma^2_{i,t} \right)   \ \ 
	\end{eqnarray}
	where we call $\sigma^2_{i,t} = \frac{\sum_{j \in N_{i,t}} (x_{j,t} - \mu_i)^2}{d_{i,t}}$ the variance of the professed opinions of $i$'s neighbors, at time $t$.
\end{lemma}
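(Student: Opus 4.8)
The plan is to substitute the closed-form best response (\ref{eq:bestresponse}) into the payoff (\ref{eq:payoff}) and simplify the two quadratic blocks separately. I would write $x_{i,t}^* = f\mu_{i,t} + (1-f)x_{i,t-1}$ throughout, replacing every occurrence of $x_{i,t}$ in (\ref{eq:payoff}) by this equilibrium value, so that the payoff becomes a function of $\mu_{i,t}$, $x_{i,t-1}$, and the neighbors' opinions alone.

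First I would handle the internal-consistency term. Since $x_{i,t}^* - x_{i,t-1} = f\mu_{i,t} + (1-f)x_{i,t-1} - x_{i,t-1} = f(\mu_{i,t} - x_{i,t-1})$, this deviation squares to $f^2(\mu_{i,t} - x_{i,t-1})^2$, so the block $d_{i,t}(1-f)(x_{i,t}-x_{i,t-1})^2$ contributes exactly $d_{i,t}(1-f)f^2(\mu_{i,t} - x_{i,t-1})^2$ to the cost.

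The heart of the argument is the conformist sum $\sum_{j \in N_{i,t}}(x_{i,t}^* - x_{j,t})^2$, and here I would use a bias--variance style decomposition: insert and subtract the neighborhood mean, writing $x_{i,t}^* - x_{j,t} = (x_{i,t}^* - \mu_{i,t}) + (\mu_{i,t} - x_{j,t})$. The first bracket is independent of $j$ and equals $(1-f)(x_{i,t-1} - \mu_{i,t}) = -(1-f)(\mu_{i,t} - x_{i,t-1})$, while the second sums to zero over $N_{i,t}$ by the very definition of $\mu_{i,t}$, so the cross term drops out. Squaring and summing the remaining pieces then yields $d_{i,t}(1-f)^2(\mu_{i,t} - x_{i,t-1})^2 + d_{i,t}\sigma^2_{i,t}$, the second summand being precisely $d_{i,t}$ times the neighborhood variance $\sigma^2_{i,t}$.

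Finally I would collect the two contributions and factor. The coefficients multiplying $(\mu_{i,t} - x_{i,t-1})^2$ combine as $f^2(1-f) + f(1-f)^2 = f(1-f)\bigl(f + (1-f)\bigr) = f(1-f)$, which after pulling out the common $d_{i,t}$ gives exactly the stated formula (\ref{eq:payoff_equilibrium}). The only step requiring genuine care --- and the one I expect to be the main obstacle --- is keeping the vanishing of the cross term transparent and correctly tracking the powers of $f$ and $(1-f)$ across both blocks so that they recombine cleanly; everything else is routine algebra.
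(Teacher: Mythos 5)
Your proof is correct and follows essentially the same route as the paper: substitute the best response (\ref{eq:bestresponse}) into (\ref{eq:payoff}), decompose the conformity sum around the neighborhood mean $\mu_{i,t}$ so the cross term vanishes, and recombine the coefficients $f^2(1-f)+f(1-f)^2=f(1-f)$. The only difference is presentational --- you make the bias--variance split and the cancellation of the cross term explicit, whereas the paper compresses the same decomposition into a chain of coefficient rearrangements.
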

Given the quadratic payoff structure, it is  natural that the second moments of the distribution of behaviors appear in the analysis. 
Interestingly, we observe that fully flexible agents $(f \rightarrow 1)$ show a preference for homogeneous groups,\footnote{Formally, we constrain $f$ never to reach exactly 1. If $f=1$, agents would coordinate on a single opinion, although each opinion in the continuous space $[0,1]$ is potentially an equilibrium.} regardless of which opinion the group holds.
Therefore, they will form groups with agents with the most similar opinions. Conversely, if agents tend not to change their opinion $(f \rightarrow 0)$, others' opinions are completely irrelevant, so connections are formed at cost 0. At the intermediate values of flexibility, agents want opinions within the group to be both homogeneous and similar to theirs.

\subsection{Network formation within period}

As discussed above, agents choose whom to form directed links with at each time step before they update their opinion.
That is, they choose the other agents that they want to choose as peers.

In doing so, we assume that our agents are
\emph{strategically myopic}, in the sense that they do not consider that the subjective opinions of others will change with respect to their professed opinion.
In practice, if player $i$ evaluates at time $t$ whether or not to connect to player $j$, she takes into account $x_{j,t-1}$, without considering that what will actually affect her payoff at time $t$ is $x_{j,t}$.
As we show in \ref{app:rational}, \added{where we analyze the model with fully rational agents,}
 this helps solving problems of multiplicity of equilibria in the linking strategies.

Then, player $i$ chooses the subset of other players that maximizes equation \eqref{eq:payoff}, assuming (wrongly) that $\vec{x}_{j,t}=\vec{x}_{j,t-1}$, and computing (correctly) her best response 
$x^*_{i,t}$ to that, from equation \eqref{eq:bestresponse}.
From Lemma \ref{lemma_mean_variance}, 
she will choose the subset that maximizes the quadratic expression \eqref{eq:payoff_equilibrium} with respect to mean and variance.
 We define therefore the payoff that the players think they are going to obtain from a specific set of links. The realized payoff will in general differ because their realization depends on the updated opinions $\vec{x}_{t}$, while the linking choices are made on $\vec{x}_{t-1}$. Players form the network by maximizing the following formula:
\begin{equation}
    \Tilde{\pi}_{i,t}=d_{i,t}\left( V-f(1-f)(\Tilde{\mu}_{i,t}-x_{i,t-1})^2-f\Tilde{\sigma}_i^2 \right)
\end{equation}
where $\Tilde{\mu}_{i,t}=\frac{1}{d_{i,t}}\sum_{j\in N_{i,t}}x_{j,t-1}$, and similarly $\Tilde{\sigma}_{i,t}^2=\frac{\sum_{j\in N_{i,t}}(x_{j,t-1}-\Tilde{\mu}_{i,t})^2}{d_{i,t}}$.
This choice will imply a mean ${\mu}_{i,t}$ of the actions of her chosen peers, and then she will update 
\begin{equation}
	\textbf{x}_{i,t}=f {\mu}_{i,t}+(1-f)\textbf{x}_{i,t-1}.
\end{equation}

Formally, the linking strategy of each player $i$ at each time step $t$ is a subset $G_{i,t}$ of all the other players \added{(or equivalently, as discussed above, a row of the adjacency matrix $G_t$)}: those that player $i$ decides to link to.
\added{We call \emph{configuration} a state $(\vec{x}_t, G_t)$ of our dynamical system at time $t$.} 

\added{In order to make our dynamical system deterministic, we impose a tie--breaking rule for the generic possibility of indifference.
If two or more subsets of other players can be chosen, then a player will adopt lexicographic order based on the indices of the subset members.
This specific rule, with respect to any other, does not change any of our following results.
}

\subsection{Timing of each time step}

Summing up, the full model that we analyze is in discrete time. At every time step, agents start with some opinions, decide how to establish links so that a network realizes, and then they play the sub--game of opinion updating.
We assume that agents are myopic, not only in the temporal sense that they maximize only the payoff one step ahead but also in the strategic sense, because they do not anticipate that the other players will change their opinion when adapting in the present time step.
Formally, the dynamic best reply functions are now given by the following compact form expression: 
\begin{equation}\label{eq:degroot}
	\textbf{x}_t=f\boldsymbol{\mu}_{t}+(1-f)\textbf{x}_{t-1}
\end{equation}
So, starting conditions are the initial opinions $\vec{x}_0$.

The timing of our process, starting at $t=1$, is as follows:	
\begin{definition}[Timing]\label{def:timing2}
	In each time step $t$ of the dynamic model \added{we have three sub--steps}:
	\begin{itemize}
		\item \textbf{Time} $t^1$: Opinions $\textbf{x}_{t-1} $ are given from previous time step to agents;
		\item \textbf{Time} $t^2$: Agents simultaneously and independently form the directed network, which becomes common knowledge -- they decide as if opinions were fixed;
		\item \textbf{Time} $t^3$: $\textbf{x}_{t} $ is formed according to the best replies with respect to opinions and the network.
	\end{itemize}
\end{definition}

\added{We are now ready to study how our model evolves in time.}

    \section{Results}\label{sec:results}

	Lemma \ref{lemma_mean_variance} allows us to highlight the main trade-off agents face when choosing their links. On one hand, agents want to build a group such that the group's average opinion is as close as possible to their ideology because it would minimize the social influence. On the other hand, groups where ideologies are dispersed cause disutility. Thus, there are situations where agents might be willing to drop connections that minimize the social influence to reduce the group's variance. 

\subsection{Ordered configurations}

 	\added{To begin}, we provide a  definition and a result that allow us to  characterize \added{the states of our dynamical system.}
  
	\begin{definition}\label{def:ordered}
		\added{At time $t$, a configuration of network $G_t$ and opinions $\vec{x}_t$} is \textbf{ordered} if for two nodes $i<j$, \added{we have $x_{i,t} < x_{j,t}$, and,} if they both have non--empty set of peers $G_{i,t}$ and $G_{j,t}$, we have that
		\begin{itemize}
			\item for any $\ell \ne i$, $\ell \notin G_{i,t} $, either $x_{\ell,t} < \min \{x_{h,t} : h \in G_{i,t} \} $ or $x_{\ell,t} > \max \{ x_{h,t} : h \in G_{i,t} \}  $;
			\item for any $\ell \ne j$, $\ell \notin G_{j,t} $, either $x_{\ell,t} < \min \{ x_{h,t} : h \in G_{j,t} \}  $ or $x_{\ell,t} > \max \{ x_h : h \in G_{j,t}  $;
			\item $ \min \{ x_{h,t} : h \in G_{i,t} \} \leq \min \{ x_{h,t} : h \in G_{j,t} \} $ and $ \max \{ x_{h,t} : h \in G_{i,t} \} \leq \max \{ x_{h,t} : h \in G_{j,t} \} $. 
		\end{itemize}
	\end{definition}
	
	In the above definition, we consider a \added{configuration} \emph{ordered} if agents are matched only with a set of neighbors that, in terms of \added{opinions},  has no \emph{holes}, meaning that all players whose \added{opinions} can be obtained from convex combinations of the opinions of those chosen, should also be included.
	Moreover, the set of peers chosen are shifted according to the order of initial opinions.
	Here below we show that \added{a configuration} of the \added{dynamics} is always ordered, \added{as a result of the fact that} the ordering of ex-ante opinions $\vec{x}_t$ is preserved over ex-post opinions ${x}_{t+1}$. 
	
	\begin{lemma}\label{res:ordered}
		\added{In our dynamical model, at any time $t$, the configuration of network and opinions} is ordered. 
	\end{lemma}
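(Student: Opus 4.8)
The plan is to argue by induction on the time step $t$, with the induction hypothesis that the incoming profile $\vec{x}_{t-1}$ is weakly ordered in the agents' indices, i.e.\ $x_{i,t-1}\le x_{j,t-1}$ whenever $i<j$; the base case $t=1$ is exactly the assumption on the initial conditions. One preliminary remark organizes the whole argument: linking at time $t$ is decided on $\vec{x}_{t-1}$, whereas the three bullets of Definition \ref{def:ordered} are phrased through $\vec{x}_t$. These two descriptions coincide once weak order preservation is shown, because the permutation that sorts the opinions is then the same before and after the update; hence a peer set that is an \emph{interval} (contains every index whose opinion lies within its opinion range) with respect to $\vec{x}_{t-1}$ is automatically an interval with respect to $\vec{x}_t$, and likewise the $\min/\max$ shift conditions transfer verbatim. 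Ties are resolved throughout by the lexicographic tie-breaking rule already imposed, which is also what upgrades the weak inequalities produced below to the strict statement of the definition, and isolated agents are handled separately (their peer-set bullets are vacuous and $x_{i,t}=x_{i,t-1}$, so order is trivially kept).

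The first component of the inductive step is the characterization of the optimal linking choice relative to the ordered profile $\vec{x}_{t-1}$. Fixing the degree $d_{i,t}$, maximizing $\tilde\pi_{i,t}$ over peer sets of that size is, by Lemma \ref{lemma_mean_variance}, the same as minimizing the composite cost $(1-f)(\tilde\mu_{i,t}-x_{i,t-1})^2+\tilde\sigma_{i,t}^2$. I would establish the no-holes property (bullets one and two) by an exchange argument: if an optimal set contained a hole $\ell$ — an unselected index whose opinion lies strictly inside the selected opinion range — then swapping $\ell$ for the extreme peer lying on the opposite side of the current mean from $x_{i,t-1}$ (the current maximum when $\tilde\mu_{i,t}>x_{i,t-1}$, the minimum otherwise) contracts the spread, lowering the variance, while moving the mean toward $x_{i,t-1}$; to first order this weakly lowers both terms of the cost while strictly reducing the number of holes, so iterating yields an interval, and the tie-breaking rule selects exactly such an interval. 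The monotone shift (bullet three) I would obtain from a single-crossing / Topkis argument: having reduced attention to intervals, parametrize an interval by its left and right endpoints in sorted order, and note that $-[(1-f)(m_S-x_{i,t-1})^2+v_S]$ has increasing differences in $(\text{endpoints},\,x_{i,t-1})$, because shifting an endpoint right raises $m_S$ and $\partial^2/\partial x_i\,\partial m_S$ is positive. Since agent $j$'s internal-consistency target $x_{j,t-1}$ is weakly to the right of agent $i$'s, both optimal endpoints are nondecreasing, giving $\min G_{j,t}\ge\min G_{i,t}$ and $\max G_{j,t}\ge\max G_{i,t}$.

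The second component is order preservation of the update itself. Given the interval-plus-shift network just characterized, I would write the updated profile as the unique fixed point of $T(\vec{y})=fD_t\,\vec{y}+(1-f)\vec{x}_{t-1}$; existence and uniqueness of the solution in $[0,1]^n$ are Lemma \ref{lemma_unique}, and $T$ is a sup-norm contraction because $f<1$ and $D_t$ is row-substochastic. Starting the iteration from the ordered vector $\vec{x}_{t-1}$ and using that the cone of weakly ordered vectors is closed, it suffices to show $T$ maps this cone into itself. For $i<j$ the own-opinion part satisfies $(1-f)x_{i,t-1}\le(1-f)x_{j,t-1}$ by hypothesis, so the claim reduces to $\operatorname{avg}_{G_{i,t}}\vec{y}\le\operatorname{avg}_{G_{j,t}}\vec{y}$ for every ordered $\vec{y}$. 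This is the elementary fact that, writing the two shifted intervals as $L\cup M$ and $M\cup R$ with $M$ the common block, $L$ entirely to its left and $R$ entirely to its right, a nondecreasing $\vec{y}$ obeys $\operatorname{avg}_{L\cup M}\vec{y}\le\operatorname{avg}_{M}\vec{y}\le\operatorname{avg}_{M\cup R}\vec{y}$, which is precisely the configuration forced by the shift property.

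I expect the exchange argument for the no-holes / interval property, together with its companion shift statement, to be the main obstacle, since both are where the quadratic mean–variance trade-off of Lemma \ref{lemma_mean_variance} must be used quantitatively: in particular one must verify that the favorable swap never overshoots the target enough to raise the cost (the overshoot term is second order in the size of the swap while the variance gain is first order, so the swap is locally improving, but this deserves a careful accounting). By contrast, the order-preservation step is a soft monotonicity argument once the network structure is known, and the bridge between the $\vec{x}_{t-1}$-description used for linking and the $\vec{x}_t$-description used in Definition \ref{def:ordered} is immediate from weak order preservation. Assembling the two components verifies all three bullets for the configuration $(G_t,\vec{x}_t)$ and closes the induction, proving Lemma \ref{res:ordered}.
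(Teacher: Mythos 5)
Your overall architecture is the same as the paper's: an induction on $t$ whose inductive step splits into (i) a characterization of the optimal peer sets relative to the ordered profile $\vec{x}_{t-1}$ (no holes plus the monotone shift across agents) and (ii) preservation of the order under the simultaneous opinion update. In fact you are \emph{more} careful than the paper in two places: the paper proves the shift property by a terse revealed-preference contradiction (``$G_{j,t}$ would dominate $G_{i,t}$ for node $i$, and vice versa''), which your Topkis/increasing-differences argument makes precise, and the paper disposes of step (ii) with ``it follows directly,'' whereas your contraction-plus-ordered-cone argument with the $L\cup M$ / $M\cup R$ averaging lemma is a genuine proof of that step, correctly handling the fact that $\vec{x}_t$ is determined simultaneously as a fixed point rather than by a one-sided update.

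However, there is a genuine gap in your exchange argument for the no-holes property, and it is not just a matter of ``careful accounting.'' Your prescribed swap --- remove the extreme peer on the opposite side of the current mean from $x_{i,t-1}$ and insert the hole --- does not, in general, lower the variance, because the variance is not monotone in the spread: as a function of the moved point it is a parabola with vertex at the mean of the \emph{remaining} points, so replacing the maximum by a hole lying near the bottom of the range can strictly increase it. Concretely, take selected opinions $\{0,8,8,10\}$, own prior $x_{i,t-1}=5$, and a hole at $0.5$. The mean is $6.5>5$, so your rule removes $10$ and inserts $0.5$, giving $\{0,0.5,8,8\}$: the variance rises from $14.75$ to about $15.05$, while the mean term falls from $2.25$ to about $0.77$. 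The change in the per-link cost $(1-f)(\tilde\mu_{i,t}-x_{i,t-1})^2+\tilde\sigma^2_{i,t}$ is $-1.48\,(1-f)+0.30$, which is strictly positive whenever $f>0.8$; for such $f$ your swap strictly lowers the payoff, so the iteration you describe does not go through. The defect is local to the choice of which endpoint to remove --- in this configuration swapping out the \emph{minimum} ($0$) for the hole improves both terms --- so the claim is plausibly rescuable by proving that for any hole at least one of the two endpoint swaps weakly improves both terms (or the $f$-weighted cost for every $f$), but that is exactly the quantitative content of the no-holes property and it is missing. To be fair, this is also the one step the paper itself waves through (``it is easy to see''), but the paper merely omits an argument, while the specific argument you supply is false as stated and needs to be repaired.
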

	
	To get some intuition on this result, recall equation \eqref{eq:payoff_equilibrium} interpreting the payoff deriving from a set of links as follows. First, agents want the average behavior in their neighborhood to be consistent with their initial opinion. In other words, agents are \textit{homophilous}. Therefore agents with similar opinions will tend to group together. In addition, consider the second term of equation \eqref{eq:payoff_equilibrium}, which shows agents' preferences regarding diversity of opinions inside their neighborhood. It is clear that agents will tend to form groups with the agents most similar to them, and the reference groups are well defined since the variance term ensures sharp bounds.

    \subsection{Long--run consensus or disagreement}\label{sec:sufficient}
    
    We know  from \cite{golub2010naive} that, generically, on an exogenously fixed network, \added{opinions can either converge to a unique opinion  or to more. In their framework, this depends solely on the network and whether it is connected or not. In our model, the network is endogenous, but there are still properties of the network that once obtained become irreversible. 
We remind that in a directed network \emph{components} are defined as a partition of the nodes such that two nodes  are in the same component if they are connected, even indirectly, and no two nodes in different components are connected.
The next result is based on the formation and evolution of components in our dynamical network.}

\begin{proposition}\label{res:empty_myopic}
\added{Given initial conditions $V \in \mathbb{R}_+$, $f \in (0,1)$ and $\vec{x}_0 \in \mathbb{R}^n$. The dynamical system defined by our model always converges asymptotically to some $\vec{x}_\infty \in \mathbb{R}^n$. At the steady state, any connected component is characterized by all agents having the same opinion and forming a complete sub--network between them.}
	If the first network, at time $t=1$, is empty, then \added{$\textbf{x}_{\infty}=\textbf{x}_{t}= \textbf{x}_0$} for any $t$.
 \end{proposition}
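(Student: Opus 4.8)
The plan is to separate the three assertions and build from the within--period linear structure up to the asymptotics. First I would rewrite the updating rule \eqref{eq:degroot} in matrix form as $\vec{x}_t = M_t\,\vec{x}_{t-1}$ with $M_t = (1-f)(I-fD_t)^{-1}$. Reading $D_t$ as the row--stochastic matrix that either averages the out--neighbours or, for an isolated agent, places unit weight on the diagonal (consistent with the convention $\mu_{i,t}=x_{i,t}$ when $d_{i,t}=0$), one has $(I-fD_t)\mathbf{1}=(1-f)\mathbf{1}$, so $M_t\mathbf{1}=\mathbf{1}$; the Neumann expansion $(I-fD_t)^{-1}=\sum_{k\ge 0}(fD_t)^k$ gives $M_{ij,t}\ge 0$ and $M_{ii,t}\ge 1-f>0$. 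Thus each step is a convex averaging with self--weight bounded below by $1-f$, and by Lemma \ref{lemma_unique} the image stays in $[0,1]^n$. The empty--network claim is then immediate: if $G_1$ is empty every agent is isolated, so \eqref{eq:bestresponse} forces $x_{i,1}=x_{i,0}$ and hence $\vec{x}_1=\vec{x}_0$; since the linking choice at $t^2$ is a deterministic function of the incoming profile (the lexicographic tie--breaking rule of Section \ref{sec:model} makes it single--valued) and the profile is unchanged, the same empty network is selected at $t=2$, and by induction $G_t$ is empty and $\vec{x}_t=\vec{x}_0$ for all $t$, so $\vec{x}_\infty=\vec{x}_0$.

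For the general convergence I would exploit Lemma \ref{res:ordered}: at every $t$ we may assume $x_{1,t}\le\cdots\le x_{n,t}$, and by the no--holes/shift property each out--neighbourhood is an interval, so the components are blocks of consecutive agents. Row--stochasticity combined with the ordering gives $x_{1,t}=\sum_j M_{1j,t}x_{j,t-1}\ge x_{1,t-1}$ and symmetrically $x_{n,t}\le x_{n,t-1}$, so both extremes converge and the total spread $\max_i x_{i,t}-\min_i x_{i,t}$ is non--increasing. The engine of the argument is then a monotonicity (``irreversibility'') step: two agents in the same component at time $t$ stay in the same component thereafter, so the component partition only coarsens. As there are at most $n$ components, the partition stabilises after finitely many merges, at some time $T$.

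From time $T$ on, the block $C$ of each component remains connected with self--weights at least $1-f$; standard contraction estimates for averaging over a connected graph with uniformly positive self--weights then force $\max_{i\in C}x_{i,t}-\min_{i\in C}x_{i,t}\to 0$. This yields the asymptotic convergence $\vec{x}_t\to\vec{x}_\infty$ and a common limiting value on each component. Finally, once the within--$C$ spread drops below the threshold fixed by $V$ and $f$, equation \eqref{eq:payoff_equilibrium} shows that for every agent the mean--deviation term $f(1-f)(\mu_{i,t}-x_{i,t-1})^2$ and the variance term $f\sigma^2_{i,t}$ are dominated by the strictly positive per--link benefit $V$, so by Lemma \ref{lemma_mean_variance} the unique maximiser of the linking problem is to connect to all of $C$. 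Hence the limiting configuration $(\vec{x}_\infty,G_\infty)$ is a disjoint union of cliques, each internally at consensus, which is itself a fixed point of the map; this establishes both the convergence and the steady--state characterisation.

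I expect the main obstacle to be the irreversibility step together with the consensus step inside a component: because links are directed and endogenous, I must rule out that a component ever splits and must upgrade ``connected at every period'' to a genuine strict contraction of the internal spread. The cleanest route is to show, using the ordered interval structure and the optimality of the linking choice, that the network selected inside a component is in fact strongly connected at each step, after which consensus for time--varying strongly connected averaging with positive self--weights is standard. Reconciling the weak--connectivity definition of components with the strong connectivity required for consensus, and verifying the threshold comparison that makes the clique the unique optimal linking near consensus, are the delicate points.
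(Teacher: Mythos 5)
Your setup is sound: the row--stochastic map $M_t=(1-f)(I-fD_t)^{-1}$ with self--weight at least $1-f$, and your handling of the empty--network case via determinism of the linking map (a legitimate alternative to the paper's observation that singletons are components which never reconnect). The fatal problem is the ``engine'' of your convergence argument: you claim that two agents in the same component at time $t$ stay in the same component thereafter, so that the component partition only \emph{coarsens} through merges. This is exactly backwards, and as stated it is false in this model. An initially connected network can endogenously disconnect: that is the content of Proposition \ref{cor:suff_cond} (which explicitly allows the network to be connected at first and disconnect at a later $t$), of Proposition \ref{res:divergence}, and of the simulation in panel (a) of Figure \ref{fig:non-monotone}. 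If components could never split, long--run disagreement starting from a connected society --- the paper's central phenomenon --- would be impossible. Your closing remark that you ``must rule out that a component ever splits'' confirms the strategy cannot succeed: that event cannot be ruled out, because it genuinely occurs.

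The paper proves the reverse monotonicity, which is the correct one. Within any component, Lemma \ref{res:ordered} implies the minimal-opinion agent links only to weakly higher opinions and the maximal-opinion agent only to weakly lower ones, so each component's opinion range weakly \emph{shrinks}; consequently the opinion gap between two distinct components weakly \emph{widens}, and since linking across the gap was unprofitable when the gap was smaller, it stays unprofitable forever. Hence components never merge; they can only split, and splits are irreversible. Since each split increases the number of components, which is bounded by $n$, the partition stabilizes after at most $n-1$ splits; only after that point does your (correct) machinery take over --- contraction of the spread inside a permanently connected component with self--weights bounded below by $1-f$, and then the threshold argument from Lemma \ref{lemma_mean_variance} showing that near-consensus makes linking to everyone in the component uniquely optimal, yielding cliques at the steady state. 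So your architecture is repairable by flipping the direction of the irreversibility claim and proving it via the shrinking-range/widening-gap argument, but as written the proof rests on a false statement.
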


\added{The intuition behind this characterization lies in the fact that in our model, at any time step, the network is divided in components (if the network is connected the unique component is the whole set of players).
In the next time steps, players from two different components cannot connect together again because the condition that made players decide not to connect cannot improve.
So, a component can only disconnect, and eventually, this partition of the players can only (weakly) increase the number of these components.
If a component does not disconnect, its players will converge to the same opinion and, as they become closer, find it all profitable to connect together.}

\added{In the following, we will talk about \emph{consensus} when all agents converge to the same opinion and to \emph{persistent disagreement} when the system converges to $2$ or more separate components with different opinions.
From Proposition \ref{res:empty_myopic} this distinction covers all the possibilities.}


We are interested in characterizing situations where the network becomes endogenously disconnected. In the literature of opinion dynamics, where the network is often exogenously given, disagreement in the society is trivially a consequence of a disconnected network. In our model, the players choose the network strategically, and thus we can focus on the initial conditions that lead eventually to a disconnected network. The model of strategic network formation is entirely specified for a given initial distribution of opinions, the value of the parameter that determines the value of connections $V$, and the parameter $f$ that weights the relative importance of social influence and own opinion. 
We start with the specification of a local sufficient condition.

\begin{proposition}\label{cor:suff_cond}
	There is a threshold $\xi$, depending on $f$ and $V$, such that if there exist two adjacent agents $i$ and $i+1$, and a time $t$, such that $|x_{i,t}-x_{i+1,t}|>\xi $, then all networks  are disconnected for any $t' > t$.\\
	This threshold is $\xi=\left( \frac{V}{f(1-f)}\right)^{1/2}$.	
\end{proposition}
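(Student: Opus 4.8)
The plan is to split the statement into a single-stage (cross-sectional) claim and a persistence claim. The cross-sectional claim is: at any network-formation stage at which two opinion-adjacent agents $i,i+1$ satisfy $x_{i+1}-x_i>\xi$, no agent forms a link across the gap, so the network formed is disconnected at that cut. The persistence claim is that such a gap, once present, never closes, so disconnection propagates to all later periods. First I would rewrite the myopic objective $\tilde\pi_{i,t}$ of a player $k$ contemplating a peer set $S$ in terms of the signed distances $w_j=x_{j,t-1}-x_{k,t-1}$. Expanding the mean/variance form of Lemma \ref{lemma_mean_variance} and using $\sum_{j\in S}(x_j-\bar x)^2=\sum_j w_j^2-\frac1{|S|}(\sum_j w_j)^2$ gives
\[
\tilde\pi_k(S)=|S|\,V-f\sum_{j\in S}w_j^2+\frac{f^2}{|S|}\Big(\sum_{j\in S}w_j\Big)^2=|S|\Big(V-f(1-f)\bar w^2-f\tilde\sigma^2\Big),
\]
which makes transparent that a lone link at distance $\delta$ yields exactly $V-f(1-f)\delta^2$, positive iff $\delta<\xi$. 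This is the origin of the threshold $\xi=(V/(f(1-f)))^{1/2}$.

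For the cross-sectional claim I would fix the formation stage and set $B_L=\{k:x_k\le x_i\}$, $B_R=\{k:x_k\ge x_{i+1}\}$; since $i,i+1$ are adjacent in the opinion order (preserved by Lemma \ref{res:ordered}), no agent lies strictly between them, so every agent belongs to exactly one side. Take $k\in B_L$ (the case $k\in B_R$ is symmetric), let $S$ be $k$'s chosen peers, and suppose toward a contradiction that $S_R:=S\cap B_R\neq\emptyset$; write $S_L:=S\cap B_L$. The decisive step is the partition identity
\[
\tilde\pi_k(S_L)+\tilde\pi_k(S_R)-\tilde\pi_k(S)=f^2\,\frac{|S_L|\,|S_R|}{|S|}\big(\bar w_L-\bar w_R\big)^2\ge 0,
\]
which yields $\tilde\pi_k(S_L)-\tilde\pi_k(S)\ge-\tilde\pi_k(S_R)$. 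Now every member of $S_R$ sits at distance at least $x_{i+1}-x_k\ge x_{i+1}-x_i>\xi$ from $k$, so $\bar w_R>\xi$ and hence $\tilde\pi_k(S_R)\le |S_R|\,(V-f(1-f)\bar w_R^2)<0$. Therefore $\tilde\pi_k(S_L)>\tilde\pi_k(S)$ strictly (and if $S_L=\emptyset$ the empty peer set already beats $S$, whose payoff equals $\tilde\pi_k(S_R)<0$), contradicting optimality of $S$. Thus no agent in $B_L$ links into $B_R$, and symmetrically none in $B_R$ links into $B_L$, so the network formed at this stage has no edge across the $i\,/\,i{+}1$ cut.

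Finally I would run the persistence step. Applying the cross-sectional claim to the subjective opinions carried into period $t+1$ disconnects $G_{t+1}$ at the cut; since updated opinions are convex combinations of own-side opinions only, one gets $\max_{k\le i}x_{k,t+1}\le x_{i,t}$ and $\min_{k\ge i+1}x_{k,t+1}\ge x_{i+1,t}$, so the boundary gap weakly widens and stays above $\xi$, and induction propagates disconnection to every $t'>t$. The main obstacle, and the one point needing care, is the sign bookkeeping in the cross-sectional step: one must notice that the quadratic mean-correction in $\tilde\pi_k$ enters with the favorable sign, so the Cauchy--Schwarz term produced by merging the two blocks' means is nonnegative and therefore \emph{aids} the deviation that drops the far block rather than opposing it. Once this is in place, the distance bound giving $\tilde\pi_k(S_R)<0$, the reduction to a single cut, and the monotone-gap induction are all routine.
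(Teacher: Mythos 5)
Your proof is correct, and it takes a genuinely different route from the paper's. The paper identifies the threshold from the $n=2$ case exactly as you do, but then establishes the no-crossing step by explicit low-dimensional computation: it takes $n=3$, an agent $i$ holding one link to a closer agent $\ell$, sets $V=f(1-f)\left(x_{j,t-1}-x_{i,t-1}\right)^2$, and shows that adding the far agent $j$ cannot raise the payoff by reducing the comparison to an inequality that is checked at the endpoints $f=0$ and $f=1$ (the expression being affine in $f$), after which it asserts that the argument is unchanged when $i$ has several other connections. Your partition identity
\[
\tilde\pi_k(S_L)+\tilde\pi_k(S_R)-\tilde\pi_k(S)=f^2\,\frac{|S_L|\,|S_R|}{|S|}\bigl(\bar w_L-\bar w_R\bigr)^2\ \ge\ 0
\]
replaces that case analysis entirely: it handles an arbitrary peer set $S$ in one stroke, shows the improvement from dropping the far block is \emph{strict} (so the tie-breaking rule never matters), and needs no extrapolation from $n=3$ to general $n$. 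You also prove the persistence claim ("all networks disconnected for $t'>t$") explicitly, via the observation that post-cut opinions on each side are convex combinations of own-side opinions, so the boundary gap weakly widens; the paper's proof of this proposition stops at the one-shot linking decision and leaves persistence to the component-separation argument given in the proof of Proposition \ref{res:empty_myopic} (disconnected components' extreme opinions only move inward, so they never reconnect) — substantively the same induction you run, but located elsewhere. In short, both proofs rest on the same economics (a block of peers beyond distance $\xi$ has strictly negative stand-alone value, and merging blocks can only help through the mean term), but yours packages it as a clean superadditivity argument that is self-contained, fully general in $n$ and in the existing link structure, and internalizes the dynamic step that the paper outsources.
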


This result characterizes local \added{sufficient} conditions for the network to become disconnected \added{at some} time $t$.          
\added{While the network can be connected initially, it can still disconnect if there exists a time $t$ such that the conditions are satisfied for two agents $i$ and $i+1$.}
\added{If this happens,} for any time after $t$, the network remains disconnected. 
Our model generally predicts that opinions within each component of the network will converge to the average opinion of the same component. When agents are connected, either through direct or indirect connections, their opinions will regress to the mean opinion of the group. This is a natural consequence of the best response functions, which entail averaging opinions within a group of individuals. Hence, in order to understand polarization, it becomes crucial to understand when the population becomes divided. 

By studying conditions that lead to a fragmented society, we can characterize the situations where polarization can increase over time until it reaches a point of stabilization where it no longer changes. \added{If we imagine any discrete distribution of opinions, and we identify that, for one couple of agents, the condition of Proposition \ref{cor:suff_cond} is satisfied, then the network at the limit is composed of two or more components. If so, the opinions (at the limit) will be the same within each component but far apart between the  components, a typical case of polarization (see \ref{app:polarization} for the  formal definition)}. Notably, Proposition \ref{cor:suff_cond} does not require that the opinions of two agents are distant enough as an initial condition. This can happen dynamically in our model, even though the network has been connected for some time. 
    
The polarization of opinions is a natural outcome in our model. There are two main reasons for this: i) extreme agents can only update their opinion by getting closer to the center mass of opinions, and ii) agents can trade off connections in order to join a cohesive group of individuals whose opinion is further from the center. The first point is not novel. In the paper by \cite{hegselmann2002opinion}, the authors identify that agents with \textit{extreme opinions} are under \textit{one sided influence}, which leads to a \textit{condensation} of opinions in the extremes of the distribution.\footnote{See Pag. 12, Figure 4 in \cite{hegselmann2002opinion}. We discuss more in depth the relations between our model and that model in Appendix \ref{app:HK}.}
The overall range of opinions \textit{shrinks}, but the condensation attracts agents from the center of the distribution, causing polarization. We have the same effect in our model. The main difference is that condensation is the consequence of agents' decisions. Therefore, the condensation arises endogenously, corroborated by the strategic link formation, which leads us to discuss point ii). We recall the result from Lemma \ref{lemma_mean_variance}. In our model, players can trade off connections toward the center of the distribution of opinions to join the condensed, cohesive group close to the extreme. In other words, players may choose to profess an opinion further from their true opinion, increasing their cognitive dissonance to join a cohesive group. This reinforces the results by \cite{hegselmann2002opinion} introducing a novel theoretical aspect to the origins of polarization. In the next section, we analyze this aspect by characterizing situations where the polarization of opinions may not be a persistent outcome but could be self-correcting.

	
	
    \subsection{Temporary disagreement}\label{sec:nonmono}

        When looking at the recent evidence about polarization in the real world, the causes remain unclear. Many have explored the role of media and the advent of the internet in explaining the polarization of opinions.\footnote{Among others, \cite{boxell2017greater} show that internet penetration is not associated with increased polarization. On the other hand, the paper by \cite{allcott2020welfare} shows that reduced access to social media significantly decreases several measures of polarization among individuals. } Yet, there is no conclusive evidence. Our model allows us to identify a novel explanation for the rise in polarization through endogenous network formation. In the previous section, we have identified a local condition on the distribution of opinions for polarization to occur in a society. In particular, Proposition \ref{cor:suff_cond} tells us under which conditions the network can disconnect. When that happens, the society remains perpetually disconnected, and their opinions are characterized as \textit{persistent disagreement}. 
        However, this result \added{describes} the convergence properties of the model \added{and disregards the evolution of opinions during the transition to the steady state, while} polarization can occur temporarily in a population that converges to consensus. Therefore, it is important to characterize the evolution of societal polarization further. In particular, it can be interesting to gather insight into whether polarization is a persisting outcome or can be transitory and self-correct over time. Our model predicts both of these outcomes via endogenous network formation. 

    As discussed in the previous Section, polarization can arise in our model due to the condensation of opinions toward the extremes of the distribution. This phenomenon attracts agents from the middle of the distribution, causing an increase in polarization. From Proposition \ref{cor:suff_cond}, we know that if two adjacent agents disconnect at some point, the whole network becomes disconnected, and polarization will strengthen over time (up to a limit) and remain persistent. However, this does not necessarily happen. We can observe an increase in polarization due to the condensation of opinions in the tails of the distribution of opinions. If the network never becomes disconnected, then there will be a time $t$ where consensus will be achieved, as we know from \cite{golub2010naive}. In this instance, we can observe polarization of opinions, which self-corrects over time. Also, some agents necessarily exhibit non-monotonic updating of opinions.\footnote{%
    \added{%
    The fact that opinions can diverge before they reach a consensus is not novel. It is observed already by \cite{golub2010naive} in the DeGroot model, and it is obtained, in the context of cultural transmission, in the model analyzed by  \cite{panebianco2014socialization}, \cite{prummer2017community}  and \cite{panebianco2018convergence}.}} 
    From the data, it can be difficult to explain how some individuals can change their opinions in different directions over time. Our theory provides a rationalization of this phenomenon because this case remains fully consistent within our strategic framework. In Figure \ref{fig:non-monotone}, we show numerical results to corroborate the discussion using a uniform distribution. 
    \added{In particular, panel $(b)$ of Figure \ref{fig:non-monotone} shows a case of \emph{temporary} polarization, that lies in--between segregation of the society and monotone convergence to consensus.
    This is a case in which an external observer could assess polarization, while the existence of few individuals who connect separate communities will eventually lead the process to consensus. 
    In this sense, our model suggests that, in order to predict the evolution of opinions in a population, it may not be enough to look at the distribution of opinions, but also at the network of connections through which opinions evolve.
   In Section \ref{sec:uniform}, we provide  conditions for this type of behavior to happen.
    }

    We provide extensive numerical results with different starting distributions in Appendix \ref{distribution}. Moreover, to accompany Figure \ref{fig:non-monotone}, we provide visualization of the network for all time steps in Appendix \ref{app:networks}, with consistent parameters for the numerical results of Figure \ref{fig:non-monotone}, and the associated measures of polarization in \ref{app:polarization}.

    \begin{figure}[ht]
		\centering

		\begin{subfigure}[t]{0.485\textwidth}
			\centering
			\includegraphics[width=\textwidth]{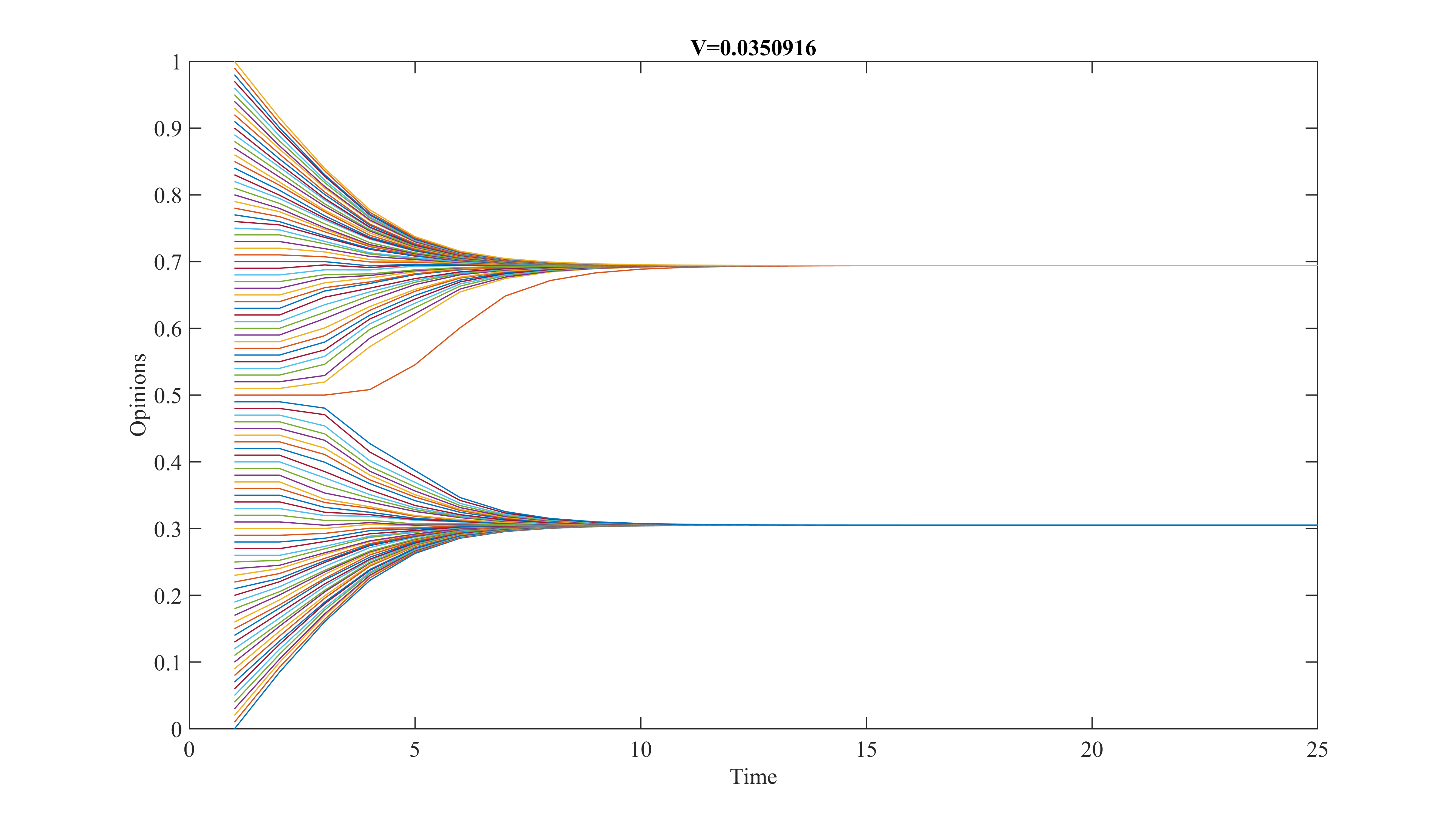}
			\caption{$f=0.5$ and $V=0.0350916$: The network ends up disconnected, and polarization persists in the society.}
		\end{subfigure}
    		~
		\begin{subfigure}[t]{0.485\textwidth}
			\includegraphics[width=\textwidth]{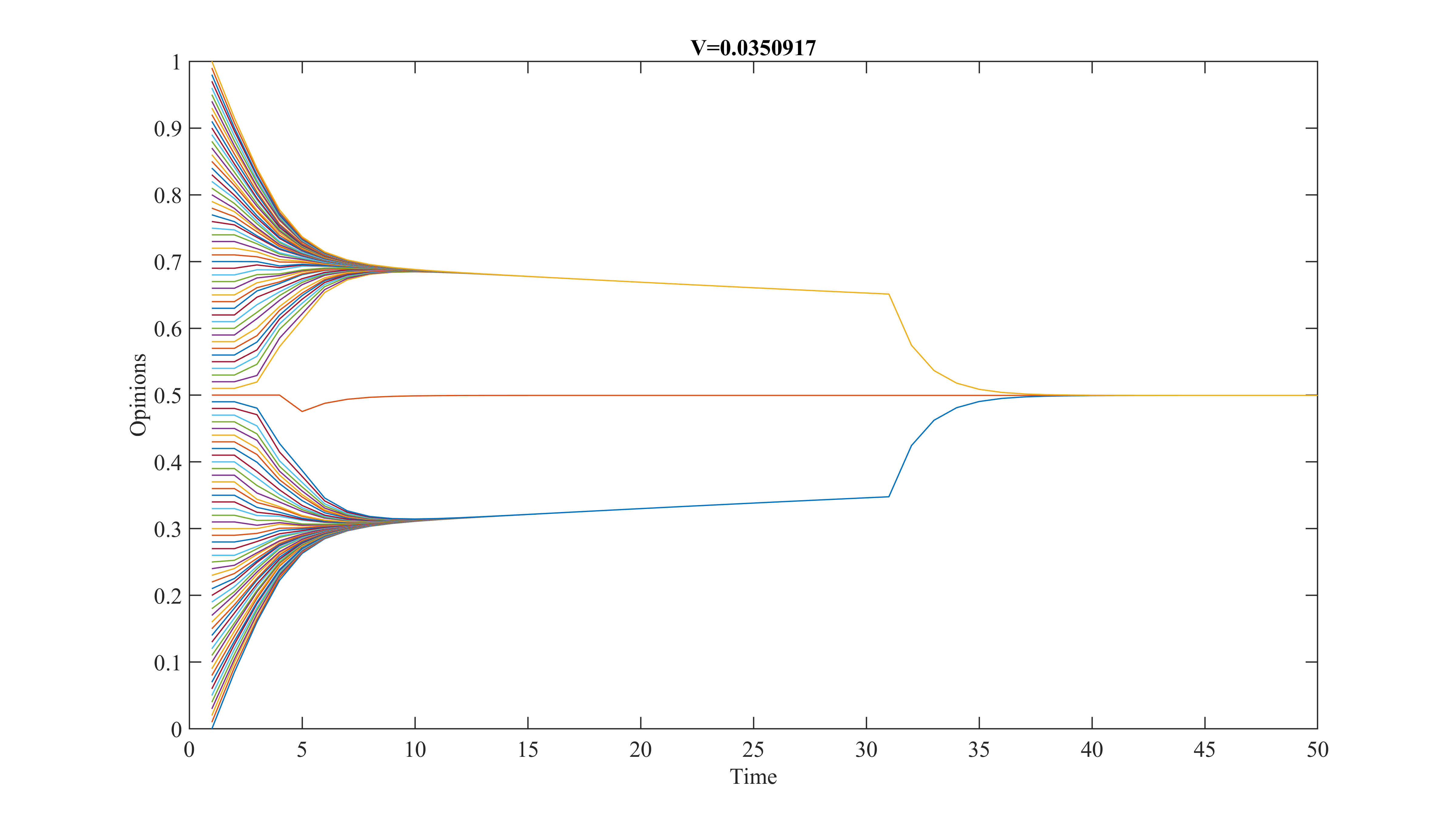}
			\caption{$f=0.5$ and $V=0.0350917$: Passed a threshold value, the opinions reach a consensus in the long run. Some agents display non-monotonic updating of opinions.}
		\end{subfigure}

		\caption{Plot of the dynamics starting with a uniform distribution, $n=101$ and intermediate values of $V$.}\label{fig:non-monotone}
	\end{figure}

    Obtaining analytical results on non-monotonic updating is challenging. It entails keeping track of all players and their decisions over time and the distribution of opinion evolves accordingly. Nonetheless, we are interested in investigating two further issues: i) the speed of convergence in our model and ii) the parameter space for convergence, self-correcting polarization, and persistent disagreement to occur. We analyze point i) in the next Section. We compare our model's convergence speed to the benchmark given by the case of the exogenous network studied in \cite{golub2012homophily}. Also, we investigate specifically the case of self-correcting polarization to understand the duration of a polarized distribution of opinions. Point ii) is discussed extensively in Section \ref{sec:uniform}, where we further investigate the relationship between the model's initial conditions and long-term outcomes.

    
    \subsection{Speed of convergence}
    \label{sec:speed}

    We have established conditions for different long-term outcomes of the dynamic process with strategic network formation. While it is relevant to understand how opinions will be distributed in a society in the long run, we believe it is also important to study what happens during the transition. In this section, we study the speed of convergence, focusing on the situations when consensus happens. To compare our model to a benchmark, we use the \cite{golub2010naive,golub2012homophily} model as the reference point.\footnote{While the model in \cite{golub2010naive,golub2012homophily} substantially differs from ours because we assume that there is not a true state of the world, we refer specifically to the results pertaining to the speed of convergence, which do not depend on this stark difference between the two models.} 
    
    

    In this section, we analyze the speed of convergence of this process, taking as a benchmark the case in which the network remains exogenously fixed during the whole process.
We will consider two dynamics, one in which the network remains fixed as in the first step of the dynamics of our model for all the future steps, and one in which the network evolves endogenously as in our model. 
We show that, if the endogenous process converges to consensus, the endogenous process is asymptotically faster than the exogenous one.
For this, we need to recall that $D_0$  is the adjusted adjacency matrix of a network.
We call $W_0=(1-f)I + f D_0$ the \emph{hearing matrix} of the process, the stochastic matrix used by our agents to update their opinions.
When we consider the endogenous case we need a time index: at time $t$, $D_t$ will be the adjusted adjacency matrix and $W_t$ the hearing matrix.
In the following, $|| \cdot||$ is any norm in $\mathbb{R}^n$.

\begin{proposition}
\label{prop:time}
Suppose that $D_0$ is not the complete network, and that the endogenous process converges to some consensus ${\bf x}_{\infty}$. Then, there is $n$ large enough and  $T>0$, such that, for all $t \geq T$:
$ ||  W_0^{t} \textbf{x}_0 - W_0^{\infty} \textbf{x}_0 || > ||   \textbf{x}_t - \textbf{x}_{\infty} ||$.
\end{proposition}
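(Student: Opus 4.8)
The plan is to reduce both dynamics to products of the stochastic hearing matrices and then compare the two contraction rates toward consensus. Writing $W_t = (1-f)I + f D_t$, the endogenous trajectory is $\mathbf{x}_t = W_t W_{t-1}\cdots W_1 \mathbf{x}_0$ (with $W_1 = W_0$), while the exogenous one is $W_0^t \mathbf{x}_0$. Since the endogenous process converges to consensus, Proposition \ref{res:empty_myopic} tells us the limit configuration is a single connected component that is complete and on which all opinions coincide. The first step is to upgrade this to a finite-time statement: because opinions converge and the linking decisions are governed by the fixed threshold $\xi$ of Proposition \ref{cor:suff_cond}, once all pairwise gaps fall below $\xi$ every agent strictly prefers to link to every other, so there is a finite $T_0$ with $G_t = K_n$, hence $W_t = W^c := (1-f) I + \tfrac{f}{n-1}(J - I)$, for all $t \ge T_0$; and completeness is self-sustaining because one further averaging step only shrinks the gaps.

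The second step exploits the very special spectrum of the complete-network hearing matrix. The matrix $W^c$ is symmetric, with the simple eigenvalue $1$ (eigenvector $\mathbf{1}$) and the single remaining eigenvalue $\lambda := (1-f) - \tfrac{f}{n-1}$ of multiplicity $n-1$. Because $\mathbf{x}_\infty$ is a multiple of $\mathbf{1}$ and $\mathbf{x}_t - \mathbf{x}_\infty$ is orthogonal to $\mathbf{1}$, applying $W^c$ scales this deviation by exactly $\lambda$. Hence for $t \ge T_0$,
\[
\|\mathbf{x}_t - \mathbf{x}_\infty\| = \lambda^{\,t-T_0}\,\|\mathbf{x}_{T_0} - \mathbf{x}_\infty\| = C_e\,\lambda^t,
\]
so the endogenous deviation decays at the exact geometric rate $\lambda$, which for $n$ large is close to $1-f$. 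For the exogenous side I would use standard Perron--Frobenius and spectral-decomposition bounds: since $D_0$ is connected and not complete and $W_0$ is primitive (the self-loop weight $1-f>0$ makes it aperiodic), $W_0^t \mathbf{x}_0 \to W_0^\infty \mathbf{x}_0$, and the residual $\|W_0^t \mathbf{x}_0 - W_0^\infty \mathbf{x}_0\|$ is bounded below by $c\,r^t$ for $r := |\lambda_2(W_0)|$ the second-largest eigenvalue modulus (whenever $\mathbf{x}_0$ excites the corresponding invariant subspace, which holds generically). Since the eigenvalues of $W_0$ are $(1-f)+f\nu$ with $\nu$ ranging over the eigenvalues of $D_0$, the comparison $\lambda < r$ reduces to the purely spectral claim that some non-principal eigenvalue $\nu$ of $D_0$ satisfies $|(1-f)+f\nu| > (1-f) - \tfrac{f}{n-1}$, equivalently $\nu_2(D_0) > -\tfrac{1}{n-1}$.

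The main obstacle is exactly this last comparison: showing that the endogenously formed, non-complete network $D_0$ mixes strictly more slowly than the complete network, i.e. $\nu_2(D_0) > \nu_2(K_n) = -\tfrac{1}{n-1}$. For symmetric (undirected) $D_0$ this is the statement that the complete graph uniquely minimizes the second-largest eigenvalue of the normalized adjacency matrix, and here one can go further: because the configuration is \emph{ordered} (Lemma \ref{res:ordered}), agents only link within an opinion window, so $D_0$ is an interval/banded stochastic matrix whose leading non-trivial eigenvalue is positive and bounded away from $0$ as $n \to \infty$ (its spectrum approaches that of the positive ``nearness'' integral operator on $[0,1]$). This gives $\nu_2(D_0) > 0$, hence $r = (1-f)+f\,\nu_2(D_0) > 1-f > \lambda$, and the two rate estimates combine to yield $\|\mathbf{x}_t - \mathbf{x}_\infty\| = C_e \lambda^t < c\, r^t \le \|W_0^t \mathbf{x}_0 - W_0^\infty \mathbf{x}_0\|$ for all $t$ past some $T$; equivalence of norms on $\mathbb{R}^n$ then extends the inequality from the Euclidean norm to any norm. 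The delicate points I expect to have to argue carefully are the finite-time completeness in the first step, the genericity lower bound for the exogenous residual, and especially the uniform positivity of $\nu_2(D_0)$ in the $n\to\infty$ regime, including the case where the endogenous network is genuinely directed and $D_0$ is not symmetric.
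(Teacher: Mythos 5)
Your proposal shares the paper's skeleton exactly: establish that the endogenous network becomes complete at some finite time, then win the asymptotic comparison by showing the exogenous chain contracts strictly more slowly than the complete-network chain. But the execution is genuinely different. The paper never diagonalizes anything explicitly: it lower-bounds $\lambda_2(W_0)\ge 1-f\,h(G_0)$ via the Cheeger inequality of \cite{chung1996laplacians}, computes the Cheeger constant of the initial configuration (a banded, ring-lattice-like structure coming from the uniform-distribution formulas, $h=2\delta=2\sqrt{V/f}$), and then converts the two rates into times using the consensus-time bounds of \cite{golub2012homophily}. You instead diagonalize the complete-network phase exactly -- and your non-principal eigenvalue $(1-f)-\tfrac{f}{n-1}$ is in fact the correct value, where the paper's proof states $1-f+\tfrac{f}{n}$, a slip that is immaterial since both tend to $1-f$ -- and you reduce the whole problem to the purely spectral inequality $\nu_2(D_0)>-\tfrac{1}{n-1}$. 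That reduction is a real simplification, and it is cheaper to close than you suggest: for symmetric $D_0$ you do not need any integral-operator limit or uniform positivity of $\nu_2$, because the non-principal eigenvalues of a zero-diagonal stochastic matrix sum to $-1$, so their maximum is at least $-\tfrac{1}{n-1}$, with equality forcing all of them to equal $-\tfrac{1}{n-1}$, i.e.\ forcing the complete network. For fixed $n$, this strict inequality is all that is needed to get $C_e\lambda^t<c\,r^t$ eventually, so your third step is actually overkill relative to your own second step.

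Two caveats, both of which you flag but neither of which you resolve. First, the model's $D_0$ is genuinely directed (under the uniform distribution, extremal agents link over radius $\vartheta>\delta$ and are not reciprocated), and this is precisely where the paper leans on the directed Cheeger inequality of \cite{chung2005laplacians}; your trace argument does extend (complex eigenvalues come in conjugate pairs, so some non-principal eigenvalue has real part $\ge-\tfrac{1}{n-1}$, giving $|(1-f)+f\nu|\ge\lambda$ with strictness outside a degenerate spectrum, and Jordan blocks only slow the exogenous decay, which helps), but that extension must be written out. Second, your lower bound $c\,r^t$ on the exogenous residual requires $\mathbf{x}_0$ to excite the relevant invariant subspace; note that the paper does not escape this either, since inequality \eqref{GJbounds} lower-bounds the \emph{worst-case} consensus time $\sup_{\mathbf{x}}(\cdot)$ rather than the convergence time of the particular $\mathbf{x}_0$ in the statement, so a genericity qualification is implicitly needed in both arguments. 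A last small point: your finite-time-completeness step should not be justified by the threshold $\xi$ of Proposition \ref{cor:suff_cond}, which governs when a pair will \emph{not} link; the right argument, and the one implicit in the paper via Proposition \ref{res:empty_myopic}, is that once the opinion diameter of the surviving component is small enough (relative to $V$, $f$ \emph{and} $n$, since the variance cost accumulates over the whole group), linking to everyone dominates any smaller peer set.
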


In the proof of Proposition \ref{prop:time}, we do not make any assumption on the initial distribution of links. 
If we assume that the initial distribution of opinions is symmetric around $\frac{1}{2}$, then by symmetry, we will have that $ W_0^{\infty} \textbf{x}_0 = {\bf x}_{\infty}$, meaning that the two processes converge to the same common opinion.
The proof is based on the fact that the size, in absolute value, of the second largest eigenvalue of the hearing matrix $W_t$ is inversely correlated with the speed of convergence.
Our endogenous process reaches, at some point, the complete network.
The complete network is the network for which this size is the smallest possible, so the speed of convergence reached once the network is complete will be the maximum possible.

Bearing this in mind, it is interesting to analyze the dynamics from panel (b) in Figure \ref{fig:non-monotone}.
Figure \ref{fig:evolution_uni_connected1} in Appendix \ref{app:networks} reports all the network configurations of this dynamic process.
Figure \ref{fig:secondlargesteigen} reports the evolution of the second largest eigenvalue of $W_t$ along this path, showing clearly that this evolution is not monotone.
\added{During the first steps of the dynamic process, the speed of convergence is slow (and gets slower)}, and then suddenly, when the network becomes complete, as is used in the proof of Proposition \ref{prop:time}, it reaches the maximal possible speed of convergence.
Even more surprisingly, along all the steps that seem polarized (from step 5 to step 30 in that simulation), the network is made out of two separated components that are complete and bridged only by a few central nodes. \cite{powers1988graph,powers1989bounds} shows that this is exactly the network configuration where the second largest eigenvalue of the hearing matrix is the largest, meaning that this is exactly the situation in which, among all the connected networks, the speed of convergence will be slowest.
Nevertheless, as proven by Proposition \ref{prop:time}, in the long run, this process will reach a complete network and converge asymptotically faster than if the network was exogenously fixed and not complete.

\begin{figure}[ht]
	\centering
	\includegraphics[width=\textwidth]{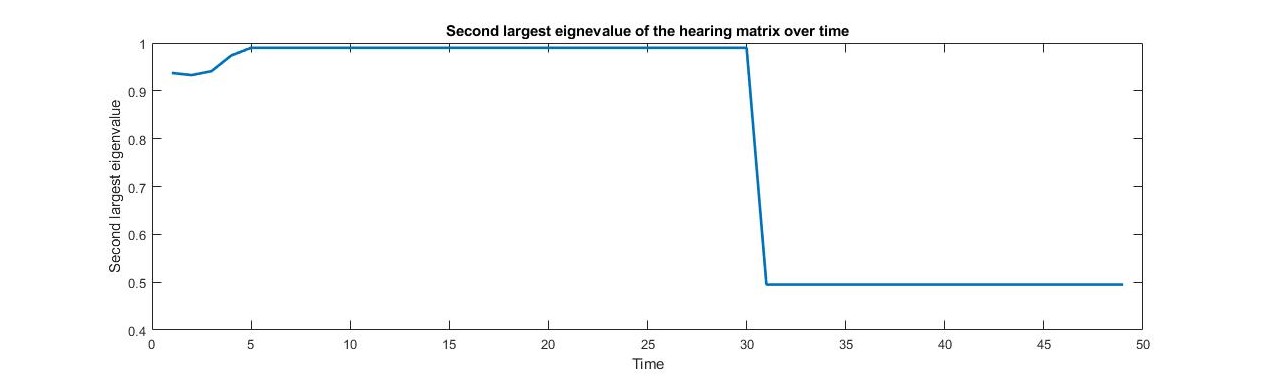}
	\caption{This figure shows the evolution of the second largest eigenvalue of the hearing matrix $W=(1-f)I+fD$ using the parameters of Figure \ref{fig:non-monotone}b. The curve shows that the evolution of this value is non-monotone. Importantly, during the periods where polarization is at its highest (time 5-30), the second largest eigenvalue reaches its theoretical maximum, so the speed of convergence is at its minimum. Polarization persists for a relatively long period of time.\label{fig:secondlargesteigen}}
\end{figure}


    \section{Predicting long-run outcomes from initial conditions: the case of the uniform distribution}\label{sec:uniform}

    Summarizing the findings of the previous section, we observe overall that for low benefit values, the network is likely to be disconnected,
exhibiting multiple components. \added{Consequently, the opinions exhibit }disagreement over time. If we increase the benefit, there is a region where the network is connected, so opinions will reach consensus. However, an interesting pattern emerges: some agents’ opinions evolve non-monotonically over time. Finally, if we increase the benefit even more, all opinions will evolve monotonically, and the speed of convergence will be consistently higher. The intuition is simple: by increasing the benefit, we are increasing the density of links, and this largely, but not solely, explains the mechanism behind our findings.
\added{To explore this issue further, we focus on a special case that relies on two strong assumptions: the agents are uniformly distributed, and the number of agents tends to infinity. So, agents in $t_0$ are equally spaced from each other. This choice has two motivations.}

\added{First, in Section \ref{sec:sufficient}, we have provided a sufficient condition for the network to become disconnected. Our result relied entirely on a local condition that can manifest at any point in the dynamics. In Section \ref{sec:nonmono}, we have also discussed that there can be a fine line between the case of persistent disagreement and consensus and that non-monotonic updating of opinions can occur for some agents. In light of these findings, by studying the uniform distribution, we can investigate the thresholds in parameters that distinguish between those two important cases while imposing the least restrictions on the initial distribution. In other words, the uniform distribution provides a conservative benchmark to study the evolution of the network and the distribution of opinions.}



Second, studying the uniform distribution has an undeniable technical advantage as we can obtain sharp results on the initial conditions and how they contribute to long-term outcomes. Moreover, we can do so while studying a neutral benchmark, as opposed to more general distributions that naturally imply the presence of one or more modes, which could favor (e.g. a bimodal distribution) or hinder (a uni-modal distribution, like the normal distribution) polarization and disagreement.\footnote{We provide in Appendix \ref{distribution} numerical simulations for these two cases. The parameter space for persistent disagreement increases substantially if we impose a bimodal distribution. On the other hand, starting from a normal distribution, the model cannot exhibit a perfectly bi-polarized persistent disagreement because a mass of agents will always persist in the middle of the distribution of opinions. However, the society can still become fragmented, exhibiting three poles.}

The approach here is thus to characterize the payoff for an extreme agent as a function of the number of links that the agent could make in the first step of the dynamics (because of this, in this section we omit the subscript of time). Let us call that agent $e$. Since the distribution is uniform, we just plug in the statistic for that distribution. 
If we let the extreme agent link with $k$ agents, since the optimal network will be ordered, agent $e$ will be linked from the closest agent to the $k^{th}$. 
However, we know that $e$ has an initial opinion of $0$, and the second and closest agent has an opinion of $1/n$, whereas the $k^{th}$ agent has an opinion of $k/n$. 
Therefore agent $e$'s payoff will follow from equation \eqref{eq:payoff_equilibrium}, which depends on the average and the variance. Such statistics in the uniform distribution will be respectively  $\mu_e=\frac{k-1}{2n}$ and $\sigma^2_e=\frac{(k-1)^2}{12}$.\footnote{We approximate the statistics with a continuous distribution. However, that choice is justified by the fact that our simulations are based on a large number of agents, most of the figures reported here showing results for $n = 101$ agents.} 
After plugging in the local statistics for the uniform distribution, we are able to express payoff for agent \textit{e}, abusing some notation, as a function of $ V$,$f$,$k$ and $n$: 
\begin{equation}\label{eq:uniform_extreme}
\pi_e(V,f,k,n)=k \left(V-\frac{1}{4} (1-f) f \left(\frac{k-1}{n}\right)^2-\frac{1}{12} f \left(\frac{k-1}{n}\right)^2\right)
\end{equation}

A similar exercise can be performed with any agent that will form a \textit{symmetric group}. 
By symmetric group we mean that an agent will form an equal number of connections on her left and on her right. Again, considering that agents are all equally spaced in the very first period, we can also determine the optimal $k$ for a central agent called $c$. 
Provided this equation, we can now maximize for $k$ and thus find the size of the groups (at left and right) for a central agent $c$:
\begin{equation}\label{eq:uniform_central}
\pi_c(V,f,k,n)=k \left(V-\frac{f \left((2 k+1)^2-1\right)}{12 n^2}\right)
\end{equation} 


One aspect worth investigating is the relation between the initial density of links and the long-run outcome of the model. Thanks to the use of a uniform distribution, we can calculate conditions on parameters that affect this diameter in the initial step.

\begin{proposition}\label{res:eqdiameter}
	There exist an $n$ sufficiently large such that, for every $z \in \{0,1,2,3,\dots \}$, if
	\begin{equation}
	\label{eq_diameter}
	\sqrt{ \frac{V}{f} } \left(  z + \frac{ 2 }{ \sqrt{4-3f} } \right) \leq 1 \ \ ,
	\end{equation}
	 then in period 1 the network has directed diameter at least $D_1=z+1$
\end{proposition}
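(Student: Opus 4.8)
The plan is to realize the diameter at the two extreme agents, those starting at opinions $0$ and $1$, and to bound from above how far to the right a directed path can travel in a fixed number of steps. If $z$ directed steps cannot carry a path across the whole interval $[0,1]$ of opinions, then the directed distance from the left extreme to the right extreme exceeds $z$, so it is at least $z+1$, and the diameter inherits this lower bound. Thus the whole argument reduces to an upper bound on the per-step rightward reach.

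First I would identify the two step sizes governing reach by maximizing the equilibrium payoffs \eqref{eq:uniform_extreme} and \eqref{eq:uniform_central} over $k$. The first-order condition for $\pi_e$ reads $V=\frac{f(4-3f)}{12n^2}(k-1)(3k-1)$, whose maximizer $k_e$ satisfies $k_e/n\to\frac{2}{\sqrt{4-3f}}\sqrt{V/f}$ as $n\to\infty$; the first-order condition for $\pi_c$ reads $V=\frac{f}{3n^2}(3k^2+2k)$, with maximizer $k_c$ obeying $k_c/n\to\sqrt{V/f}$. Because $\frac{2}{\sqrt{4-3f}}>1$ whenever $f\in(0,1)$, the extreme agent reaches strictly further to its right ($k_e/n$) than a central agent does ($k_c/n$): having no neighbors on its left to absorb its linking budget, it attains the maximal rightward reach in the network.

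Next I would exploit orderedness (Lemma \ref{res:ordered}): each agent's out-neighborhood is an interval of opinions, so the set of positions reachable from the left extreme in at most $m$ steps is itself an interval $[0,R_m]$. Writing $r(p)$ for the rightward reach of the agent at position $p$ and $L(p)=p+r(p)$ for the furthest point it attains in one step, one has $R_1=L(0)=k_e/n$ and $R_m=\max_{p\le R_{m-1}}L(p)$. Provided $L$ is nondecreasing, this maximum is attained at the right endpoint, a bulk node with reach $k_c/n$, yielding the recursion $R_m=k_e/n+(m-1)k_c/n$. In the limit $R_z\to\big(z-1+\frac{2}{\sqrt{4-3f}}\big)\sqrt{V/f}$, so the hypothesis $\big(z+\frac{2}{\sqrt{4-3f}}\big)\sqrt{V/f}\le 1$ forces $R_z<1$ with a full $\sqrt{V/f}$ to spare. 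Hence position $1$ is unreachable from the left extreme within $z$ steps, the directed distance between the extremes is at least $z+1$, and $D_1\ge z+1$.

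The hard part will be the monotonicity of $L(p)=p+r(p)$, which is exactly what rules out a shortcut detouring through near-boundary agents whose rightward reach exceeds the bulk value $k_c/n$. Although $r(p)$ decays from $k_e/n$ at the boundary to $k_c/n$ in the bulk, the assumption $f>0$—equivalently $\sqrt{4-3f}>1$—makes it decay at rate strictly below one, so the positional gain outweighs the reach loss and $L$ stays nondecreasing; pinning this down requires analyzing the optimal \emph{asymmetric} linking of off-center agents, which the symmetric-group computation behind \eqref{eq:uniform_central} does not deliver directly. Finally, taking $n$ large converts the continuous maximizers into the stated limits and lets the $\sqrt{V/f}$ slack absorb the $o(1)$ approximation and integer-rounding errors, giving the conclusion uniformly in $z$.
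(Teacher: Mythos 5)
Your proposal follows essentially the same route as the paper's proof: it realizes the diameter between the two extreme agents, bounds the first step by the extreme agent's reach $\vartheta\to\frac{2}{\sqrt{4-3f}}\sqrt{V/f}$ and every subsequent step by the interior (symmetric-group) reach $\delta\to\sqrt{V/f}$, and passes to the $n\to\infty$ limit of $z\delta+\vartheta\leq 1$ to recover condition \eqref{eq_diameter}, exactly as the paper does via the optimizers of \eqref{eq:uniform_extreme} and \eqref{eq:uniform_central}. The monotonicity of $L(p)=p+r(p)$ that you single out as the hard part (ruling out shortcuts through near-boundary agents with asymmetric groups) is silently assumed in the paper's much terser proof, so your write-up is, if anything, more explicit about the one step both arguments rely on.
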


The benchmarks given by condition \eqref{eq_diameter}, for $z \in \{1,2,3,4,5 \}$, are reported in Figure \ref{fig:diameter} together with numerical results obtained through a set of simulations. 
Although the simulations are performed with a relatively low number of agents ($ n= 80$), while the analytical curves are calculated using an approximation with a continuous uniform distribution, we observe that the two match extremely well.

\begin{figure}[ht]
	\centering
	\includegraphics[width=\textwidth]{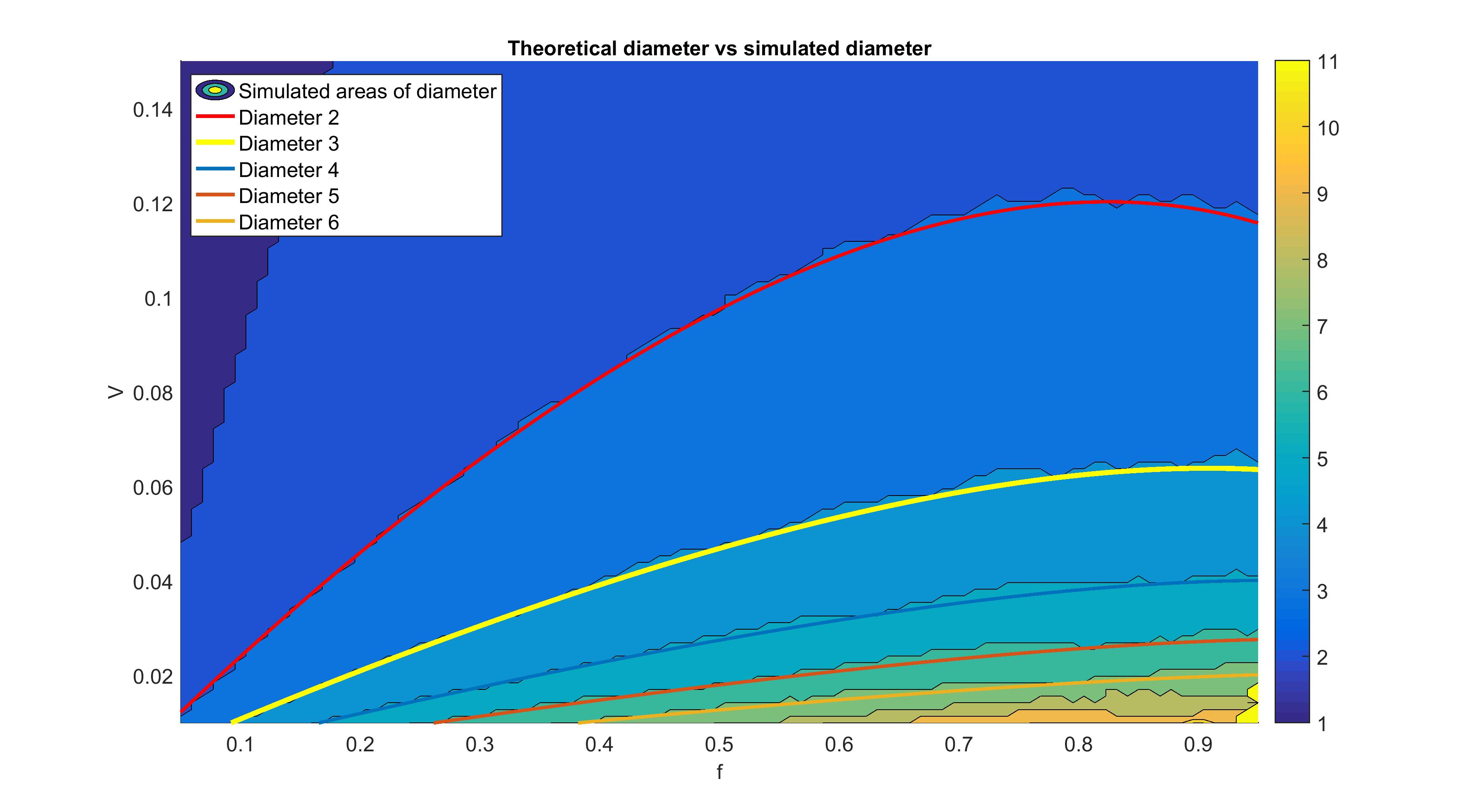}
	\caption{Full simulation of the model for   a $100 \times 100$ grid of parameters for  $V$ and $f$, for $n=80$ agents, and with 100 periods. On the horizontal axis we have the parameter $f$, while on the vertical axis we have, instead the parameter $V$. The different colored areas represent the diameter of the network in the first period as the result of a given simulation. The smooth lines  are the lines obtained analytically from the benchmark given by inequality \eqref{eq_diameter}.}\label{fig:diameter}
\end{figure}

\bigskip

Starting with a uniform distribution, using equations \eqref{eq:uniform_extreme} and \eqref{eq:uniform_central}, we can also estimate the size of the peer groups for nodes that are initially extremal (i.e.~with initial opinion close to $0$ or $1$) or central (i.e.~with initial opinion close to $1/2$).
We have that: 
\begin{enumerate}[i)]
\item for a central node the optimal most distant link is at distance 
\begin{equation} \label{eq:delta}
\delta=\frac{-f+\sqrt{f(f+9n^2V)}}{3fn} \ \ ; 
\end{equation}
\item for an extremal node the optimal most distant link is at distance 
\begin{equation} \label{theta}
\vartheta= \frac{6-6f+\frac{\sqrt{3f(4+3(-2+f)f)+36(4-3f)n^2V}}{\sqrt{f}}}{(12-9f)n} \ \ ; 
\end{equation}
\item at the limit $n\rightarrow \infty$ we have that $\delta\rightarrow \sqrt{\frac{V}{f}}$ and $\vartheta\rightarrow \frac{2}{\sqrt{4-3f}}\sqrt{\frac{V}{f}}$; 
\item at the limit $n\rightarrow \infty$, $4\delta<1$ is equivalent to $V<\frac{1}{16}f$;
\item at the limit $n\rightarrow \infty$, $2\vartheta>1$ is equivalent to $V>\frac{f}{16}(4-3f)$. 
\end{enumerate}

\added{The thresholds highlighted in points iv) and v) are of particular importance. When $4\delta<1$, it means that the central agent will form links such that the furthest agent is located in the right neighborhood of 1/4 (and the left neighborhood of 3/4). This means that agents that are in the interval $[0,\delta)$ and $(1-\delta,1]$, in the initial distribution $\vec{x}_0$, will not connect to those that are in $[2\delta,1-2\delta]$.}\footnote{\added {We use a similar reasoning in the proof of Proposition \ref{res:divergence}, where we prove that the opinions will not reach consensus if $4\delta<1$. All the details can be found in the Appendix.  In other words, an extreme agent will not connect to an agent that is in the middle of the distribution of opinions in the first step of the dynamic process. This implies that the diameter is at least 4 in all region of parameters such that $V<\frac{1}{16}f$.}}

\added{Similarly, the threshold identified in point v), identifies the region of parameters $(V,f)$ such that the diameter of the network in the first step is at most 3. Intuitively, if the extreme left agent has their furthest link located beyond 1/2, it means that they could reach the extreme right agent passing through at most 2 other nodes, i.e. the diameter is equal to 3. In the next two Propositions, we use these threshold values on parameters (that imply specific values of the diameter of the network in the first step) to predict the long run behavior of the distribution of opinions. In particular, we find that when the diameter in the first step is at most 3, the opinions always reach a consensus. Moreover, when the diameter in the first step is at least 5, the network is always disconnected (two or more components), and so opinions persistently differ in the long run. The formal results follow.}

\begin{proposition}[Long--run consensus] \label{res:convergence}
	 There is an $n$ large enough, such that if $V>\frac{f}{16}(4-3f)$, then the process converges to $x_i\rightarrow \frac{1}{2}$ for all $i$. 
\end{proposition}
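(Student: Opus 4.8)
The plan is to read the benefit hypothesis as a geometric condition on how far an extremal agent reaches, use it to keep the network connected along the whole trajectory, and then invoke Proposition~\ref{res:empty_myopic} to get convergence to a common opinion, with symmetry pinning that opinion at $\tfrac12$. The first step is to rewrite $V > \frac{f}{16}(4-3f)$ using point v) above: as $n \to \infty$ this is exactly $2\vartheta > 1$, where by point iii) the extremal reach is $\vartheta \to \frac{2}{\sqrt{4-3f}}\sqrt{V/f}$. Connectivity at $t=1$ is in fact the easy part and does not even need the hypothesis: for $n$ large every reach ($\delta$ and $\vartheta$) is a fixed positive number while the spacing $1/n$ vanishes, so every agent links to both immediate neighbors and the period-$1$ network already contains the backbone $1 \leftrightarrow 2 \leftrightarrow \cdots \leftrightarrow n$. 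The work is therefore to show that, under $2\vartheta>1$, this single component never splits.

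For persistence I would use the mechanism behind Proposition~\ref{res:empty_myopic}: the partition of agents into components is monotone (components can only split), so it suffices to rule out any split. A split at the boundary between adjacent agents requires, by the logic of Proposition~\ref{cor:suff_cond}, that a local opinion gap open up beyond the reach at which linking across it is still profitable in \eqref{eq:payoff_equilibrium}. The only place such a gap can appear is between a condensing extremal cluster and the centre of the distribution, since one-sided influence piles opinions up near $0$ and near $1$ while depleting the middle. The condition $2\vartheta>1$ is precisely what closes this gap: the left cluster's reach extends past $\tfrac12$ and the right cluster's reach extends below $\tfrac12$, so the two reach-intervals overlap around the centre and the extremes stay linked across it (the diameter-at-most-$3$ picture of point v)). I would then argue that this margin is self-reinforcing --- as opinions concentrate, the mean-deviation and variance penalties in \eqref{eq:payoff_equilibrium} that an extremal agent pays to reach across only shrink relative to the benefit $V$ --- so no adjacent pair ever triggers the disconnection criterion of Proposition~\ref{cor:suff_cond}. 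By induction the network stays a single connected component, and Proposition~\ref{res:empty_myopic} (equivalently the connectivity criterion of \citet{golub2010naive}) yields convergence to consensus.

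Finally the consensus value must be $\tfrac12$ by symmetry. The uniform profile $\vec{x}_0$ is symmetric about $\tfrac12$, and both the linking problem and the update \eqref{eq:degroot} are equivariant under the reflection $x \mapsto 1-x$ combined with the index reversal $i \mapsto n+1-i$; hence, generically (the ties resolved by the tie-breaking rule form a measure-zero set, negligible as $n \to \infty$), the whole configuration stays symmetric about $\tfrac12$, so the unique limit $c$ obeys $c = 1-c$. The main obstacle is exactly the persistence step: once the dynamics leaves the uniform distribution, the closed form for $\vartheta$ no longer holds, so the heart of the proof is to show that the slack in $2\vartheta>1$ survives the transient condensation --- i.e.\ that contraction toward the centre never lets the deplete-the-middle effect sever the extremes from the bulk --- rather than recomputing reach distances period by period.
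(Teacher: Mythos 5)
Your overall architecture --- keep the network connected along the whole trajectory, invoke Proposition~\ref{res:empty_myopic} for convergence, and pin the limit at $\tfrac12$ by symmetry --- is sound, and the symmetry step at the end is fine. But the persistence step, which you yourself identify as the heart of the proof, has two genuine gaps. First, you rule out splits by arguing that ``no adjacent pair ever triggers the disconnection criterion of Proposition~\ref{cor:suff_cond}.'' That criterion is a \emph{sufficient} condition for disconnection, not a necessary one: a gap exceeding $\xi=\left(\frac{V}{f(1-f)}\right)^{1/2}$ guarantees the two sides never relink, but the network can split even when every adjacent gap is far below $\xi$, because linking is an optimization over whole neighborhoods --- an agent may drop a cross-link in order to join a more cohesive, lower-variance group. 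This is exactly the mechanism the paper emphasizes (and it is what produces the disconnection in Figure~\ref{fig:non-monotone}a, where the initial gaps are $1/n$, orders of magnitude below $\xi$). So establishing that the $\xi$-gap never opens does not keep the network connected.

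Second, your ``self-reinforcing margin'' claim is asserted rather than proved: once the dynamics leaves the uniform profile, the closed form behind $2\vartheta>1$ is no longer available, and you give no argument that the extremal agents' reach continues to span the middle. This is precisely where the paper does its work. It takes the boundary value $V=\frac{f}{16}(4-3f)$, notes that the extreme agent's period-$1$ group then reaches exactly the agent at $\tfrac12$ (so $\mu_{0,t_1}=\tfrac14$ and $x_{0,t_1}=\tfrac{f}{4}$), and shows by direct computation that if she kept exactly that same group at $t_2$, both penalty terms in \eqref{eq:payoff_equilibrium} would weakly shrink: $\sigma^2_{0,t_2}<\sigma^2_{0,t_1}$, and $\left(\mu_{0,t_2}-x_{0,t_1}\right)^2\leq\frac{1}{16}$, the key input being that the extreme agent moves toward the center strictly faster than any of her neighbors ($x_{0,t_2}-x_{0,t_1}>x_{j,t_2}-x_{j,t_1}$ for all $j\in N_0$). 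Hence her payoff from the old group rises, so she can add links at no extra net cost until the penalty terms are restored to their period-$1$ levels; iterating, her neighborhood grows until it covers the entire population, the opposite extreme does the same by symmetry, and with both extremes linked to everyone the process can only converge to the population average $\tfrac12$, with the ordering result (Lemma~\ref{res:ordered}) dragging all interior agents along. Supplying this concrete monotonicity-and-iteration argument (or an equivalent substitute) is what your sketch is missing; without it, the claim that the connectivity margin survives the transient condensation is exactly the open point.
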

 
This proposition states a sufficient condition on the values of $V$ and $f$ for the process to converge to consensus. It is intuitive that the value of $V$ should be high enough in order for the density of links in the first period to be high, too. 
When that is the case, the contraction of opinions is strong enough for the system to move strongly toward the complete network, which in turn leads to opinion consensus.

Probably less intuitive is the following result on a different case, where the network begins as a single connected component but then splits up over time. Technically, to address the proof, we consider the limiting case of large $n$ where the distribution of agents is continuous in the interval $[0,1]$.
It is easy to extend equations  \eqref{eq:degroot} and \eqref{eq:payoff_equilibrium}, that define the dynamics, to this limiting case.\footnote{%
We choose to move to a continuous description of the problem only to simplify the proof. We have not been able to write a rigorous and reasonably concise proof that takes into account all the caveats on discreteness for the discrete case with large enough $n$.
The simulations shown in Figure \ref{fig:region} show that this result holds also for \emph{large} $n$ in the discrete case.}

\begin{proposition}[Long--run disagreement]
\label{res:divergence}
\added{There exist an $n$ sufficiently large such that,}
if $V<\frac{1}{16} f$, then the network \added{becomes disconnected}, and there are at least two separate components in the limiting network. 
\end{proposition}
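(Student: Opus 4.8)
The plan is to work in the limiting continuum of agents placed densely and uniformly on $[0,1]$, and to exploit two structural features. First, by Lemma~\ref{res:ordered} every configuration is \emph{ordered}, so each agent's peer set is an opinion interval; second, the uniform initial data and the best-reply map \eqref{eq:degroot} are symmetric about $1/2$, so the profile stays symmetric and any consensus must sit at $1/2$. I would begin by recording the first-step geometry forced by the hypothesis $V<\tfrac{1}{16}f$, i.e. $4\delta<1$ with $\delta=\sqrt{V/f}$. Maximizing \eqref{eq:uniform_central} shows that the opinion-reach of a symmetric (central) agent equals $\delta$ and is \emph{independent of the local density}, while the reach of the leftmost (extremal) agent is $\vartheta=\tfrac{2}{\sqrt{4-3f}}\sqrt{V/f}<2\delta$. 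Hence no agent with opinion in $[0,\delta)$ links into the central band $[2\delta,\,1-2\delta]$, which is nonempty precisely because $4\delta<1$; the band is reachable from the ends only through the thin bridging layers $[\delta,2\delta)$ and $(1-2\delta,1-\delta]$.

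The goal is then to show that this sparse bridging cannot survive the dynamics, so that a permanent gap opens in the opinion profile. I would track the condensation at the ends. In the first step the central band $[\delta,1-\delta]$ is stationary (symmetric neighbors over uniform density, so $\mu_{i}=x_{i}$ in \eqref{eq:payoff_equilibrium}), while the extremal layer $[0,\delta)$ is pulled inward and compresses. Iterating, the extra mass accumulating just below $\delta$ shifts the neighbor-mean of agents slightly above $\delta$ to the left, so an inward-moving front propagates from $\delta$ rightward; but the agent exactly at $1/2$ never moves by symmetry, and a neighborhood of $1/2$ moves only slowly. Consequently the left end condenses toward some limit $m_\infty$ while the centre remains near $1/2$, and the substantive claim becomes $m_\infty<1/2$ with a macroscopic void forming between the two.

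To finish, I would isolate the ``watershed'' separating agents eventually absorbed into the left cluster from those that stay with the centre: at that watershed two adjacent agents split apart, and once their opinion gap exceeds the threshold $\xi=\sqrt{V/\bigl(f(1-f)\bigr)}$ of Proposition~\ref{cor:suff_cond}, the network is disconnected for all later times. By Proposition~\ref{res:empty_myopic} each surviving component then collapses to a single opinion on a complete subnetwork, giving at least two components with distinct limiting opinions — exactly long-run disagreement (equivalently, this is the statement that consensus is never reached). The main obstacle is controlling this non-uniform density evolution: unlike the stationary central band, the condensing layers have a density that changes every period, so the density-independence of $\delta$ no longer applies verbatim and one must bound how far right the front can travel before the void opens. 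This is the step that forces the continuum idealization (the discrete case with finite spacing $1/n$ introduces caveats I would not resolve rigorously), and it is where the sharp threshold $4\delta<1\iff V<\tfrac{1}{16}f$ must be used: it guarantees that the band $[2\delta,1-2\delta]$ is wide enough for the gap to open before the inward front can reach $1/2$.
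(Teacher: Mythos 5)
Your setup matches the paper's: both work in the continuum limit, identify $\delta=\sqrt{V/f}$, note that $V<\tfrac{1}{16}f$ is equivalent to $4\delta<1$, and establish that agents in $[0,\delta)$ never link into the central band $[2\delta,1-2\delta]$ (the paper's condition \textbf{(a)}). But the core of the proof is exactly the step you concede you cannot control — ``how far right the front can travel before the void opens'' — and here your dynamic picture is the wrong one. There is no inward-propagating absorption front to bound. The paper's proof shows, by induction on three linking conditions, that the condensation at the extreme pulls the bridging layer $[\delta,2\delta)$ \emph{away} from the central band (these agents see more peer mass on their extreme side and drop links toward the center), and this evacuation in turn rarefies the region just left of $2\delta$, so that central-band agents \emph{drop} their leftward links and move weakly toward the center rather than being pulled out. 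Consequently $[2\delta,1-2\delta]$ is an invariant region for the agents who start in it, the separation opens at the fixed benchmark $2\delta$, and no quantitative bound on front propagation is ever needed. Your proposal identifies the obstacle but supplies no mechanism to overcome it; the paper's conditions \textbf{(b)} and \textbf{(c)} and the induction step are precisely that missing mechanism.

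Your concluding device is also unreliable. You propose to trigger Proposition \ref{cor:suff_cond} once the watershed gap exceeds $\xi=\sqrt{V/\bigl(f(1-f)\bigr)}$, but $\xi=\delta/\sqrt{1-f}$ blows up as $f\to 1$: for $V$ near the threshold $f/16$ and $f>3/4$ one has $\xi>\tfrac{1}{2}$, which exceeds any gap that can open between a left cluster trapped below $2\delta<\tfrac12$ and a center near $\tfrac12$, and for $f>15/16$ one even gets $\xi>1$. So the sufficient condition of Proposition \ref{cor:suff_cond} may simply never fire in part of the parameter region covered by the statement, even though disconnection does occur there. The paper avoids this entirely: having shown that bridging agents remain (weakly moving away) below $2\delta$ while central-band agents stay in $[2\delta,1-2\delta]$, it concludes by contradiction — if the network never disconnected, Proposition \ref{res:empty_myopic} would force all opinions to a common limit, which the permanent separation makes impossible. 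You should replace the $\xi$-threshold argument with this contradiction, and replace the front-bounding step with the invariance-by-induction argument; as it stands, the proposal has a genuine gap at both places.
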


All the results in this section are performed approximating the formulas of the discrete uniform distribution with those of very large $n$. The results of a simulation performed on a 100x100 grid of parameters $V$ and $f$, for $n=80$ agents, reported in Figures \ref{fig:diameter} and \ref{fig:region}, show that our results holds also for limited $n$ and a relatively short number of iterations. 
	Figure \ref{fig:region} summarizes most of our findings and additionally shows that the theoretical curves of Propositions \ref{res:convergence} and \ref{res:divergence} are sufficiently close to the realized thresholds, when the population is limited.

\begin{figure}[ht!]
	\centering
	\includegraphics[width=\textwidth]{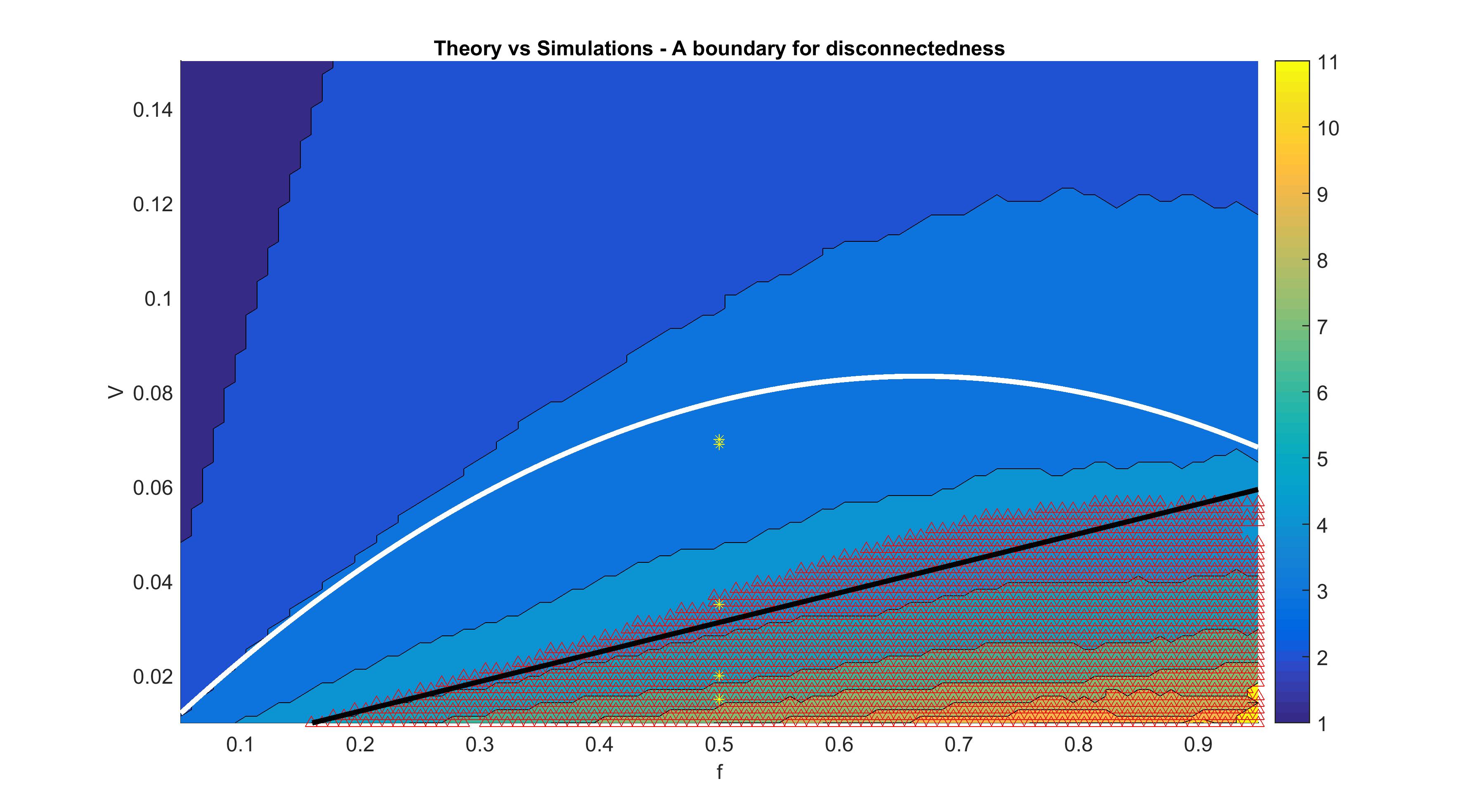}
	\caption{This figure summarizes most of the results of Section \ref{sec:uniform}. The grid corresponds to a full simulation of the model for a given combination of parameters, for $n=81$ agents, and with $T=20$ periods. On the horizontal axis we have the parameter $f$. On the vertical axis we have, instead the parameter $V$. 
	The white curve is given by equation $V=\frac{f}{16}(4-3f)$, which we refer to in Proposition \ref{res:convergence}.
	The black line is given by equation $V=\frac{1}{16} f$, which we refer to in Proposition \ref{res:divergence}.
	Overall, the figure summarizes the results of 100x100 simulations. We highlight the region of parameters for which the results of the simulation exhibit a diameter in period 1 equal to, respectively from the top left to bottom right, diameter equal to 1, 2, 3, 4, 5... which are represented by the heatmap in at the right of the figure. Finally, we have in the bottom right part of the figure an area highlighted by red triangles. These triangles show up in the figure whenever the long run outcome of the simulation exhibits multiple components.  It is clear that most interesting aspects of the model happen to be contained in a region of parameters such that the diameter of the network in the first period is around 4, consistently with the analytical results. \label{fig:region} }
\end{figure}

\bigskip

Notice that, combining Propositions \ref{res:eqdiameter}, \ref{res:convergence} and \ref{res:divergence},  a simple statistic of the network, the diameter, is a strong predictor of where the system is headed. 
In fact, the threshold condition from Proposition \ref{res:convergence} happens when the diameter is 3, while the threshold condition from Proposition \ref{res:divergence}  happens when the diameter is 4 (this is an analytical result from Proposition  \ref{res:eqdiameter} that is perfectly confirmed by the simulations).
Thus, if at step $1$ the initial diameter of the network is  2 or less, then opinions will always converge.
If instead at step $1$ the initial diameter of the network is 5 or more, then opinions will always diverge \added{and the diameter will never decrease along the dynamics}.

Looking back at the discussion in Sections \ref{sec:nonmono} and \ref{sec:speed}, self--correcting polarization seems to be the behavior of our dynamical process for the parameter values $V$ and $f$  at the boundary between convergence and polarization: that is, the upper bound of the red region, which lies a bit above the black benchmark line provided by Proposition \ref{res:divergence}.

\added{Moreover, simulations indicate that the actual threshold between reaching consensus or disagreement always lies in the region of parameters $(V,f)$ such that the diameter at step $1$ is 4. 
Thus, 
an initial diameter of 3 will always predict convergence to a unique opinion. Actually, since in a network converging to polarization the diameter is weakly increasing, while in a  network converging to unanimity it is weakly decreasing, simulations show that if we start from a uniform distribution a diameter of 3 observed at any point in the dynamics predicts consensus.
Also, consider non–-monotonic cases of temporary disagreement, like that in  Figure \ref{fig:non-monotone},  where opinions reach consensus after a phase of polarization.
From our analysis, using  Propositions \ref{res:convergence} and \ref{res:divergence}, we indirectly show that, by exclusion, this interesting region lies between the two thresholds (the transition region is the boundary of the red shaded area, lying between the white and black lines in Figure \ref{fig:region}).
Simulations show that such cases are only possible when the  diameter of the network is $4$.}

\bigskip

We believe that all these insights from observing the diameter of the network are relevant, especially when we turn to real-world cases. The so-called \emph{small world} phenomenon is evocative of the fact that average distance and diameter are small in most real-world social networks. 
\cite{backstrom2012four} show that the average distance and the average diameter have decreased over the years in the Facebook network, and the limit appears to be between 4 and 5.

It is worth making a small digression to consider the phenomenon of so-called \emph{echo chambers}, defined as the reinforcement of opinions in a closed system composed of like-minded individuals. There is extensive evidence that, through social network platforms, individuals today tend to interact in closed communities. The reasons are partly exogenous, such as the feed-refresh algorithms used in Facebook that expose individuals to suggestions heavily correlated with their activities on the platform, aiming to maximize user attention. However, individuals can also endogenously start deleting connections with different-minded individuals, in moments of pronounced divergence of opinions.\footnote{%
One example is found in the Washington Post in an article entitled\textit{\href{https://www.washingtonpost.com/news/the-intersect/wp/2015/12/16/you-might-think-trump-is-bad-but-unfriending-his-supporters-on-facebook-is-worse/?noredirect=on&utm_term=.b309899b61c3}{``Unfriending'' Trump supporters is just another example of how we isolate ourselves online}}. The article reveals that many websites provide support for massively deleting Facebook friends.} 
This kind of behavior is highlighted in the proof of Proposition \ref{res:divergence}. We also corroborate this discussion with visual support in Appendix \ref{app:networks}, where the evolution of the network structure is shown under different selections of parameters. 
We believe this is a useful insight into the natural formation of echo chambers that can help in the implementation of interventions aimed at reducing opinion polarization.

Furthermore, our study reveals a counter-intuitive observation: heightened sensitivity to social influence among agents (that is, high $f$, when the white line in Figure \ref{fig:region}, obtained in Proposition \ref{res:convergence}, starts decreasing) imposes stricter parameter requirements on the value of connections (that is, lower values of parameter $V$), leading to increased disagreement and pronounced polarization during the transition phase. This outcome, driven by agents' short-term focus on coordination, hinders consensus formation, despite the potential for higher payoffs. This insight underscores the challenges in achieving consensus in the presence of diverse ideological groups.

    \section{Conclusion}\label{sec:conclusion}

In this study, we extend the classical \cite{degroot1974reaching} model of na\"ive \added{opinion updating} by introducing endogenous network formation among agents. Our research yields valuable insights into the dynamics of opinion polarization within societies.
\added{The model is based only on two parameters characterizing the agents: a benefit from connections and a flexibility in adapting their opinions.}

Our analysis partitions the \added{dynamics} into three cases. \added{In one case, the network ends up being disconnected, with  persistent disagreement between the different separated components.}
As the benefit of interactions increases, we enter \added{the two cases} where a long-run consensus is achieved.
\added{In one of them, where the benefit of interactions is not too high, we observe temporary polarization and}
opinion dynamics exhibit non-monotonic behavior. \added{In the last one}, at the highest level of this benefit from interactions, consensus is reached, and notably, it occurs at a rate faster than what would be observed in an exogenously given network.
\added{For the case of uniform distribution of opinions to begin with}, we have found that the diameter of the network in the first updating period serves as an accurate predictor of the parameter region, explaining both transitional and long-run opinion behaviors within a society.

\added{When the initial starting condition of the opinions approximates a uniform distribution, we also characterize regions of the two parameters for these three regions,.
This characterization is partial, so we specify the transition regions with simulations.
Doing so, we show that the diameter of the network in the first step of the process and in future ones is a predictor of the type of the dynamics.
A diameter of $5$ or higher at any step of the simulation predicts long--run disagreement.
If it is $3$ or less in any step of the simulation, the diameter will decrease during the process and  we will surely end up to consensus.
This happens smoothly, if the diameter is $3$ in the very first step.
Finally, when it is $4$ in the first step of the simulation we are around the transition phase, and all the three cases are possible, including temporary disagreement.}


While our work significantly enhances the standard naive learning approach by considering endogenous network connections and referral dynamics, we acknowledge certain limitations. Our model focuses solely on interactions between equally important individuals, disregarding prominent agents, influencers, media, and politicians who may influence opinion distribution. Future research should explore the impact of such influential entities on polarization patterns, including reinforcement and cycles between consensus and polarization. Additionally, introducing network formation frictions, akin to social media algorithms, may offer insights into consensus and polarization dynamics in the digital age.

In summary, our work not only extends the \cite{degroot1974reaching} model in a parsimonious manner but also lays the groundwork for understanding the underlying mechanisms of opinion dynamics. Future extensions that consider prominent agents and network frictions promise to provide a more comprehensive view of polarization in contemporary societies.

\clearpage

	\appendix
	\global\long\def\thesection{Appendix \Alph{section}}
	\global\long\def\thesubsection{\Alph{section}.\arabic{subsection}}
	\setcounter{proposition}{0} \global\long\def\theproposition{\Alph{proposition}}
	\setcounter{lemma}{0} \global\long\def\thelemma{\Alph{lemma}}
	\setcounter{definition}{0} \global\long\def\thedefinition{\Alph{definition}}
	\setcounter{figure}{0} \global\long\def\thefigure{\Alph{figure}}
	\setcounter{example}{0} \global\long\def\theexample{\Alph{example}}
 \setcounter{proposition}{0} \global\long\def\theproposition{\Alph{proposition}}

\section{Proofs} \label{app:proofs}
	
We start with a simple Lemma in linear algebra.

\begin{lemma} \label{lemma_unique}
Consider a directed network between $n$ nodes, where $N_i$ is the set of node $i$'s neighbors and $d_i$ is the cardinality of $d_i$.
Consider a diagonal matrix 	$F = \left(
	\begin{array}{ccc}
		f_1 &  & 0 \\
		& \ddots & \\
		0 & & f_n
	\end{array}
	\right) \in [0,1]^{n \times n}$, a vector $\vec{\theta}  \in \mathbb{R}^n$, and an  \emph{adjusted adjacency matrix} $D$ such that 
	$D_{ij} = \left\{
	\begin{array}{ccc}
		\frac{1}{d_i} & \mbox{ if $j \in N_i$,} \\
		0 & \mbox{ otherwise}.
	\end{array} \right.$\\
 
	The following equation:
\begin{eqnarray}
	\label{eq:bestresponse_matrix}
	(I - FD) \vec{x} = (I-F) \vec{\theta} \ \ 
\end{eqnarray}
 has a unique solution $\vec{x} \in [0,1]^n$. \eproof
\end{lemma}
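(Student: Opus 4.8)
The plan is to separate the two assertions --- existence/uniqueness of the solution and its containment in $[0,1]^n$ --- since they rest on different structural features. For existence and uniqueness I would exploit that $FD$ is entrywise nonnegative and row-substochastic. Each row $i$ of $D$ sums to $1$ when $d_i>0$ and to $0$ when $d_i=0$, so the $i$-th row of $FD$ sums to $f_i\,\mathbf{1}[d_i>0]\le f_i$. Hence $\|FD\|_\infty=\max_i f_i\,\mathbf{1}[d_i>0]<1$ as soon as the diagonal entries of $F$ are bounded away from $1$ (which holds in the model, where $f\in(0,1)$). Then the Neumann series $\sum_{k\ge 0}(FD)^k$ converges, $I-FD$ is invertible with a nonnegative inverse, and the system has the unique solution $\vec{x}=(I-FD)^{-1}(I-F)\vec{\theta}$ in $\mathbb{R}^n$.

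Next I would establish containment via a discrete maximum principle, under the understanding that $\vec{\theta}\in[0,1]^n$ (the case relevant to the model, where $\vec{\theta}=\vec{x}_{t-1}$). Rewriting the system rowwise gives $x_i=f_i (D\vec{x})_i+(1-f_i)\theta_i$, where $(D\vec{x})_i$ is the average of $\{x_j: j\in N_i\}$ when $d_i>0$ and equals $0$ when $d_i=0$. Let $M=\max_i x_i$, attained at some $k$. If $d_k>0$ then $(D\vec{x})_k\le M$, so $M\le f_k M+(1-f_k)\cdot 1$, and since $f_k<1$ this yields $M\le 1$; if $d_k=0$ then $M=(1-f_k)\theta_k\le 1$ directly. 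The symmetric argument at a minimizing coordinate, now using $\theta_i\ge 0$, gives $\min_i x_i\ge 0$. Hence $\vec{x}\in[0,1]^n$.

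The main obstacle is not the core computation but two edge cases that must be pinned down. First, invertibility genuinely requires $\max_i f_i<1$: if a closed communicating class of nodes all had $f_i=1$, then $1$ would be an eigenvalue of $FD$ and uniqueness would fail --- for instance, $f_i=1$ for all $i$ on an irreducible $D$ forces $\vec{x}=c\mathbf{1}$ for any constant $c$. I would therefore state and use the result for $f_i\in[0,1)$, which already covers the heterogeneous generalization the paper needs. Second, the claim as written takes $\vec{\theta}\in\mathbb{R}^n$ yet concludes $\vec{x}\in[0,1]^n$; the containment is correct only once $\vec{\theta}$ is restricted to $[0,1]^n$, so I would make that hypothesis explicit. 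The vanishing rows of $D$ at isolated nodes ($d_i=0$) are the remaining delicate point in both steps, but the case split above handles them directly.
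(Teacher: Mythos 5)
Your proof is correct, and its overall structure mirrors the paper's: invertibility of $I-FD$ for existence and uniqueness, then a maximum/minimum principle for containment in $[0,1]^n$. The containment argument is essentially identical — the paper supposes the maximal coordinate exceeds $1$, notes it is a convex combination of the neighborhood average and $\theta_m \le 1$, and derives a contradiction; your direct version of the same estimate merely adds the explicit case split for isolated nodes ($d_k=0$), which the paper skips. Where you genuinely go beyond the paper is the first step: the paper simply asserts that ``$(I-F)$ and $(I-FD_t)$ are always full rank matrices'' with no justification, whereas your Neumann-series argument (row sums of $FD$ bounded by $\max_i f_i<1$, hence spectral radius strictly below $1$) actually proves invertibility, and moreover pinpoints exactly why the assertion can fail when some $f_i=1$ on a closed communicating class. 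You are also right that the lemma as stated is too generous in its hypotheses: the conclusion $\vec{x}\in[0,1]^n$ requires $\vec{\theta}\in[0,1]^n$ (the paper's own proof tacitly uses $x_{m,t-1}\le 1$), and uniqueness requires $\max_i f_i<1$; both restrictions hold in the model ($f\in(0,1)$, opinions in $[0,1]$), so nothing downstream is affected, but your explicit flagging of them is a genuine improvement over the paper's proof rather than a deviation from it.
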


	\subsection*{Proof of Lemma \ref{lemma_unique} }
	
	There is clearly a unique solution to the unconstrained equation (\ref{eq:bestresponse_matrix}), because $(I-F)$ and $(I-FD_t)$ are always full rank matrices. \\
	Now suppose that the maximum element of $\vec{x}_t$, call it $x_{m,t}$ is such that $x_{m,t}>1$.
	Then, by (\ref{eq:bestresponse}), it is a convex combination of $\frac{\sum_{j \in N_{m,t}} x_{j,t}}{d_{m,t}}$ and $x_{m,t-1}$.
	But since $x_{m,t-1}\leq1$, it follows  that $\frac{\sum_{j \in N_{m,t}} x_{j,t}}{d_{m,t}} > x_{m,t}$, which contradicts the initial assumption. \\
	In the same way it is impossible for the minimum element of $\vec{x}_t$ to be less than $0$. \eproof
	
	\subsection*{Proof of Lemma \ref{lemma_mean_variance} (page \pageref{lemma_mean_variance})}

	From (\ref{eq:payoff}) and (\ref{eq:bestresponse}), the payoff in equilibrium is 
	\begin{eqnarray}
		\pi_{i,t} & = &  d_{i,t} \left( V - f^2 (1-f)  (\mu_{i,t} - x_{i,t-1})^2  \right) -  f \sum_{j \in N_{i,t}} (x_{i ,t}- x_{j,t})^2 \nonumber \\
		& = &  d_{i,t} \left( V - f (1-f)  (\mu_{i,t} - x_{i,t-1})^2  \right)   - f \sum_{j \in N_{i,t}} \left[ (x_{i ,t}- x_{j,t})^2 + (1-f)^2  (\mu_{i,t} - x_{i,t-1})^2 \right] \nonumber \\
		& = & d_{i,t} \left( V - f (1-f)  (\mu_{i,t} - x_{i,t-1})^2  \right)   - f \sum_{j \in N_{i,t}} (x_{j,t} - \mu_{i,t})^2 \nonumber \\
		& = & d_{i,t} \left( V - f (1-f) (\mu_{i,t} - x_{i,t-1})^2  - f \sigma^2_{i,t} \right)   \ \ ,
	\end{eqnarray}
	where we have called $\sigma^2_{i,t} = \frac{\sum_{j \in N_{i,t}} (x_{j,t} - \mu_{i,t})^2}{d_{i,t}}$ the variance of the actions of $i$'s neighbors. \eproof

	\subsection*{Proof of Lemma \ref{res:ordered} (page \pageref{res:ordered})}
 
	\begin{proof}
  \added{We prove our result by induction, starting with the consideration that $\vec{x}_{0}$ respects, by definition of players' labels, the order labels, and that, as no network is defined at time $0$, before the dynamics starts, the configuration respects trivially also the conditions on the network.}
  
  \added{Suppose that $\vec{x}_{t-1}$ respect the order of labels.}
Take a player $i$ at time $t$, and let her choose optimally her set of peers $G_{i,t}$, maximizing the payoff in \eqref{eq:payoff}, which we know that in equilibrium is given by \eqref{eq:payoff_equilibrium}.
		It is easy to see that the optimal set of peers, given \eqref{eq:payoff_equilibrium}, must be such that their actions $x_{j,t-1}$'s are ordered: meaning that for any $k \in G_{i,t}$, $\min \{ x_{j,t-1} : j \in G_{i,t} \} \leq x_{k,t-1} \leq \max \{ x_{j,t-1} : j \in G_{i,t} \} $ (the condition holds trivially if $G_{i,t}$ is empty). \\
		This result characterize the best response with respect to the actions of the other players.
		
		\bigskip
		
		Now, we need to show that the \added{evolution of opinions $\vec{x}_t$} is well ordered with respect to the initial opinions  $\vec{x}_{t-1} $.
		Suppose it is not, then there are two nodes $i$ and $j$, with $x_{i,t-1} < x_{jt-1}$, such that at least one of the following two conditions hold:
		\begin{enumerate}
			\item $\min \{ x_{k,t} : k \in G_{i,t} \} > \min \{ x_{k,t} : k \in G_{j,t} \} $ ;
			\item $\max \{ x_{k,t} : k \in G_{i,t} \} > \max \{ x_{k,t} : k \in G_{j,t} \} $ .
		\end{enumerate}
		However, both $i$ and $j$ are maximizing  \eqref{eq:payoff_equilibrium}, with the same parameters $V$ and $f$, so that if at least one of the two conditions above holds, then $G_{j,t}$ (assuming it is optimal for $j$) would dominate $G_{i,t} $ for node $i$, and $G_{i,t}$ (assuming it is optimal for $i$) would dominate $G_{j,t} $ for node $j$. This results in a contradiction.
		
		\bigskip

  \added{This shows that the network at time $t$ respects the bullet points of Definition \ref{def:ordered}.}
		
		\bigskip
		
		It follows directly that the best response of nodes $i$ and $j$, given  their ex--ante opinions and given the fact that their reference sets are shifted accordingly, are such that $x_{i,t} < x_{j,t}$.
		
		\bigskip
		
		To finish the proof, consider the situation in which there is a node $\ell$ with empty reference set. This node will have $x_{\ell,t}= x_{\ell,t-1}$. 
		With an argument similar to the one used above, since $\ell$ is also maximizing \eqref{eq:payoff_equilibrium}, no node at the left (respectively, right) of $\ell$ will include any node in her reference set  at the right (respectively, left) of $\ell$.
		So, if a node $h$ is at the left (respectively, right) of $\ell$ it will have $x_{h,t} < x_{\ell,t}$ (respectively, $x_{h,t} > x_{\ell,t}$). \\
		This concludes, \added{by induction,} the proof.
	\end{proof}

	\subsection*{Proof of Proposition \ref{res:empty_myopic} (page \pageref{res:empty_myopic})}
	
	\begin{proof}
  \added{We first show that if at some point $t$ we have two  components of the network (i.e.~a subset of players connected between each others and not connected to others), players in those two components will not connect together at any time $t'>t$.
  From Lemma \ref{res:ordered}, at any time step an equilibrium is ordered, according to Definition 
		\ref{def:ordered}.
  So, if we have two components $C_t$ and $C'_t$, and for some $i \in C_t$ and $j \in C'_t$, we have $x_{i,t}<x_{j,t}$, then for all $k \in C_t$ and $h \in C'_t$ we will have $x_{k,t}<x_{h,t}$.}
  
  \added{Call $C_t$ the set of players in a component  at time $t$. Call 
  $k = \min_{C_t} $, and $h = \max_{C_t} $.
  Because of the labelling order of players, and because of Lemma \ref{res:ordered}, $x_{k,t}$ is the minimal opinion in the component, and $x_{h,t}$ is the maximal.
  By definition of components, player $k$ has chosen only peers with an ex--ante  opinion higher than her own, so she will update to some $x_{k,t+1} \geq x_{k,t}$.
  With the same reasoning, $x_{h,t+1} \leq x_{h,t}$.
  This proves that, once two components are disconnected, they will never connect together again, because the extrema (i.e.~the players in that component with extreme opinions) will converge, so that if they did not find it profitable to connect earlier, they will neither in the future.
  If the extrema of components does not find it profitable to connect, then neither interior players will.}

\added{The above result avoid cycles in the dynamics, because a component will either converge asymptotically to a unique opinion (because of averaging of opinions), and along this process all its nodes will connect reciprocally to each others (because the closer they get the more profitable it is for them to connect), or it will disconnect, and the process of splitting is irreversible.
  }

  \added{To conclude the proof, if the first network is empty, because no player found it profitable to connect to any other node, then $\vec{x}_1 = \vec{x}_0 $.
  However, from now on, every singleton is a component, and will never connect to others, so that $\textbf{x}_{\infty}=\textbf{x}_{t}= \textbf{x}_0$ for any $t$.}
	\end{proof}

\subsection*{Proof of Proposition \ref{cor:suff_cond} (page \pageref{cor:suff_cond})}

	\begin{proof}
To prove the result, we first characterize the threshold of the proposition. The threshold is obtained by studying analytically the case of $n=2$, and then we show that the result applies for $n>2$. 

We assume that $n=2$. We can calculate the payoff for players $(i,j)$ as follows:
\begin{equation}\label{eq:threshold}
    \Tilde{\pi}_{i,t}=\Tilde{\pi}_{j,t}=V-f(1-f)\left(x_{j,t-1}-x_{i,t-1}\right)^2
\end{equation}
From equation \eqref{eq:threshold} we can verify that a link between $i,j$ is not profitable as long as $\left( \frac{V}{f(1-f)}\right)^{1/2}<|x_{j,t-1}-x_{i,t-1}|$. Notice that this condition implies that if the agent is indifferent, she will form the link. We need to prove that this lower bound on $V$ still applies for $n>2$. To do so, we first consider $n=3$. 

We call the $n=3$ agents $i,j,\ell$, respectively. Without loss of generality we set $x_{\ell,t-1}<x_{i,t-1}<x_{j,t-1}$ and assume that $V=f(1-f)\left(x_{j,t-1}-x_{i,t-1}\right)^2$. If the network is characterized by only one link between $i$ and $\ell$. Since this link exists by construction, it implies that $|x_{j-t-1}-x_{i,t-1}|>|x_{\ell-t-1}-x_{i,t-1}|$. Given these conditions, agent $i$ considers whether to form a link with $j$ comparing the payoff of maintaining the connection with $\ell$ only and the payoff from having both links with $\ell, j$. Calling $g$ the network where agent $i$ is linked with $\ell$, the payoff are
\begin{equation}
    \Tilde{\pi}_{i,t,g}=f(1-f)\left(x_{j,t-1}-x_{i,t-1}\right)^2-f(1-f)(x_{\ell, t-1}-x_{i,t-1})^2
\end{equation}
Hence, we call $g'$ the network where $i$ is linked both with $\ell$ and $j$. The payoff in this case are
\begin{equation}
    \begin{aligned}
        \Tilde{\pi}'_{i,t,g'}&=2f(1-f)\left(x_{j,t-1}-x_{i,t-1}\right)^2-2f(1-f) \left(\frac{x_{j,t-1}+x_{\ell,t-1}}{2}-x_{i,t-1}\right)^2+\\
        &-f\left(\left(x_{j,t-1}- \frac{x_{j,t-1}+x_{\ell,t-1}}{2}\right)^2+\left(x_{\ell,t-1}- \frac{x_{j,t-1}+x_{\ell,t-1}}{2}\right)^2\right)
    \end{aligned}
\end{equation}
If $\Tilde{\pi}'_{i,t,g'}>\Tilde{\pi}_{i,t,g}$ we would obtain a contradiction to our statement. With some simple algebra we can reduce the problem to the following inequality
\begin{equation}
\begin{aligned}
        &(1-f)\left(x_{j,t-1}-x_{i,t-1}\right)^2-2(1-f) \left(\frac{x_{j,t-1}+x_{\ell,t-1}}{2}-x_{i,t-1}\right)^2+\\
        &-\left(\left(x_{j,t-1}- \frac{x_{j,t-1}+x_{\ell,t-1}}{2}\right)^2+\left(x_{\ell,t-1}- \frac{x_{j,t-1}+x_{\ell,t-1}}{2}\right)^2\right)>-(1-f)(x_{\ell, t-1}-x_{i,t-1})^2
\end{aligned}
\end{equation}
From this inequality, we consider two extreme cases by setting $f=1$ and $f=0$, respectively. If $f=1$ it is trivial that the inequality cannot be satisfied. We then proceed to study the case of $f=0$. The latest equation reduces to the following
\begin{equation}
\begin{aligned}
    & \left(x_{j,t-1}-x_{i,t-1}\right)^2+(x_{\ell, t-1}-x_{i,t-1})^2-2 \left(\frac{x_{j,t-1}+x_{\ell,t-1}}{2}-x_{i,t-1}\right)^2+\\
   & -\left(\left(x_{j,t-1}- \frac{x_{j,t-1}+x_{\ell,t-1}}{2}\right)^2+\left(x_{\ell,t-1}- \frac{x_{j,t-1}+x_{\ell,t-1}}{2}\right)^2\right)>0
\end{aligned}
\end{equation}
Solving all squares and re-arranging, we see that the left-hand side is always equal to $0$. This contradicts the initial statement, so we conclude that if $V=f(1-f)\left(x_{j,t-1}-x_{i,t-1}\right)^2$, a link between $i$ and $j$ is not profitable, even if player $i$ has a connection with another player. Clearly, the argument would not change if player $i$ would have multiple connections other than $j$. Moreover, by symmetry, the argument holds for player $j$ as well, so if $\left( \frac{V}{f(1-f)}\right)^{1/2}<|x_{j,t-1}-x_{i,t-1}|$, the adjacent agents $i$ and $j$ will not link together. 

	\end{proof}

\subsection*{Proof of Proposition \ref{prop:time} (page \pageref{prop:time})}
\begin{proof}
	If  there are no nodes with out-degree $0$, then by definition, $W_t$ is row--stochastic, so it has a maximum eigenvalue equal to $1$.
	Let us call $\lambda_2 ( W_t )$ the second largest eigenvalue of the hearing matrix. \\
	Given a network $G$, with nodes $N$, we can call $A\subseteq N$ any subset of her nodes, and $\partial A$ the links between the nodes in $A$  and those outside $A$.
	The  \emph{Cheeger constant} \citep{chung1996laplacians} for a network  is defined as:\footnote{%
		The Cheeger constant is \emph{large}  if any possible division of the vertex set into two subsets has \emph{many} links between those two subsets.
		Intuitively, a large Cheegerr constant is present when there are \emph{bottlenecks} in the graph that slow down the diffusion of a process in the network.}
	\[
	h(G) = \min \left\{ \frac{ | \partial A |}{|A|} : A \subseteq  N,  \    0 < |A| \leq |N|/2  \right\}
	\]  
	The \emph{Laplacian matrix} $L(G)$ of a network $G$ of $n$ nodes  is defined as an $n \times n$ matrix where
	\[
	L_{ij} = \left\{
	\begin{array}{ccl}
		1 & \mbox{if} & i=j \\
		- \frac{1}{\sqrt{k_i k_j}} & \mbox{if} & \mbox{$i$ is linked to $j$} \\
		0 & & otherwise
	\end{array}
	\right.
	\]
	where $k_i$ and $k_j$ are the degree of $i$ and $j$. 
	If the network is regular we have that $L(G) = I - D$, and so $W= I - fL$. \\
	\emph{Cheeger inequality} \citep{chung1996laplacians} states that:\footnote{%
		\cite{chung2005laplacians} proves that this inequality holds also if $G$ is not symmetric.
	}
	$\lambda_2 (L (G) ) \leq h(G)$. From this, we have that $\lambda_2 (W) = 1 - f \lambda_2 (L (G) ) \geq 1 - f h(G)$. \\
	The initial configuration $D_0$ of our dynamics is a regular ring lattice where the Cheeger constant is obtained when we take all the nodes in half of the spectrum (see e.g., \citealt{gu2010consensus}).
	We know  from \eqref{eq:delta} that the size of the links from this subset of nodes to all the other nodes is $\delta= \sqrt{ \frac{V}{f} }$, so that $h(W_0)= 2\delta = 2\sqrt{ \frac{V}{f} }$.
	From this we obtain that
	\[
	\lambda_2 (W_0) = 1 - f \lambda_2 (L (G) ) \geq 1 - 2  \sqrt{ f V }
	\]
	From Proposition \ref{eq_diameter} we know that if the network has diameter at least $2$ (that is, it is not the complete network), then $1 - 2  \sqrt{ f V }$ is strictly positive (implying also that $2 \sqrt{V} > \sqrt{f}$, which we use below).
	If the stochastic process converges, \emph{consensus time } is defined as: 
	\[
	CT(\epsilon, W) = \sup_{ \textbf{x}  \in [0,1]^n} \min \left\{  T: ||  W^{T} \textbf{x} - W^{\infty} \textbf{x} ||  < \epsilon  \right\} \ \ ,
	\]
	where $|| \textbf{x} - \textbf{y} ||$ is a distance defined by any specified norm. 
	Consensus time measures how many steps we would need to be $\epsilon$ close to consensus, if the network were fixed to remain $D_t$ for all the future steps, and if we started from the worst possible distribution of opinions $\textbf{x}$. \\
	In their Lemma 2, \cite{golub2012homophily}  prove that, if the stochastic process defined by $W$  converges, then, for any positive $\epsilon$ we have
	\begin{equation}
		\left\lfloor \frac{\log(2 \epsilon)  - K(W) }{\log \left( | \lambda_2 ( W  ) | \right) } \right\rfloor
		\leq 
		CT (\epsilon, W)
		\leq
		\left\lceil    \frac{ \log(\epsilon) }{ \log \left(  | \lambda_2 ( W  ) | \right) }  \right\rceil \ \ ,
		\label{GJbounds}
	\end{equation}
	where $K(W)$ is a constant that depends on the degree distribution of the underlying network.
	Now, we can focus on our two dynamics.
	The exogenous time will have a consensus time, for a certain precision $\epsilon$, that is longer than 
	\begin{equation}
		\left\lfloor \frac{\log(2 \epsilon)  - K (W  )  }{\log \left( | \lambda_2 ( W  ) | \right) } \right\rfloor \geq 
		\left\lfloor \frac{\log(2 \epsilon)  - K (W  )  }{\log \left( 1 - 2  \sqrt{ f V } \right) } \right\rfloor 
		\label{cond1}
	\end{equation}
	We know that the endogenous process will converge, at some time $t'$ to the complete network, for which the second eigenvalue of the Laplacian matrix is $1-1/n$ \citep{powers1988graph,powers1989bounds}, and hence the second eigenvalue for our hearing process is $1-f+\frac{f}{n}$.
	This means that the consensus time will be shorter than 
	\begin{equation}
		\left\lceil    \frac{ \log(\epsilon) }{ \log \left(  1-f +  \frac{f}{n} \right) }  \right\rceil  \ \ .
		\label{cond2}
	\end{equation}
	As $2 \sqrt{V} > \sqrt{f}$, we can always choose an $\epsilon$ small enough and an $n$ large enough, so that the lower bound in \eqref{cond1} for the exogenous process is greater than the upper bound in \eqref{cond2} for the complete network, and that this difference is greater than $t'$.
	Hence, for that $\epsilon$ the resulting upper bound computed in \eqref{cond2}, and then summed to $t'$, provides the desired $T$. 
\end{proof}

	\subsection*{Proof of Proposition \ref{res:eqdiameter} (page \pageref{res:eqdiameter})}
	\begin{proof}
		Imagine odd nodes at \added{the initial conditions $\vec{x}_0$} evenly spaced. Then for the central node we have that the optimal $k$ is given by $\frac{-f+\sqrt{f(f+9n^2V}}{3f}$. For the extreme left node, instead, the optimal $k$ is given by $\frac{6-6f+\frac{\sqrt{3f(4+3(-2+f)f)+36(4-3f)n^2V}}{\sqrt{f}}}{(12-9f)n}$. Therefore the network has directed diameter $D=z+1$ if 
		\begin{equation}
			\lim\limits_{n\rightarrow \infty} z\left(\frac{-f+\sqrt{f(f+9n^2V}}{3fn}\right)+\frac{6-6f+\frac{\sqrt{3f(4+3(-2+f)f)+36(4-3f)n^2V}}{\sqrt{f}}}{(12-9f)n}=1
		\end{equation}
		The limit is equal to $\frac{-2\sqrt{f}\sqrt{(4-3f)V}-4\sqrt{fV}z+3f\sqrt{fV}Z}{f(-4+3f)}$, which is equivalent to \eqref{eq_diameter}.
	\end{proof}

	\subsection*{Proof of Proposition \ref{res:convergence} (page \pageref{res:convergence})}
	\begin{proof}
		
		To prove this result we look at the extreme player, called for the purposes of this proof, agent 0. We show that this agent will keep adding links until he or she has connected with the whole network. By symmetry this is true for the other agent that sits in the opposite extreme of the distribution. Therefore, these agents will converge towards the center. Thus because of the Proposition on ordered equilibrium, we conclude that all other agents will converge there, too.
		
		We first consider this with a value of $V=\frac{f}{16}(4-3f)$. With such a value we know that agent $0$ will send links until to the agent that is located exactly at $\frac{1}{2}$. With $n$ sufficiently large, $\mu_{0,t_1}=\frac{1}{4}$. Thus $x_{0,t_1}=f\frac{1}{4}$. Because of symmetry we observe that in $t_2$, $x_i\in[f\frac{1}{4},1-f\frac{1}{4}]$. Moreover, we claim that the connections for agent 0 will increase in period 2. To prove this assume that agent 0 maintains the same connections in period 2. It is easy to observe that $\sigma_{0,t_2}^2<\sigma_{0,t_1}^2$. Thus, if $\left(\mu_{0,t_2}-f\frac{1}{4}\right)^2\leq\frac{1}{16}$, then we see that the  payoff for agent 0 has gone up. To show this we rewrite the inequality above as follows, denoting with $k_0$ the cardinality of $\{j\in N_0\}$.
		
		\begin{equation}
			\left(\frac{1}{k_0}\sum_{j\in N_0}x_{j,t_2}-x_{0,t2}\right)\leq\left(\frac{1}{k_0}\sum_{j\in N_0}x_{j,t_1}-x_{0,t1}\right)
		\end{equation} 
		
		which can be rewritten as 
		
		\begin{equation}
			\frac{1}{k_0}\sum_{j\in N_0}x_{j,t_2}-\frac{1}{k_0}\sum_{j\in N_0}x_{j,t_1}+k_0\left(\frac{1}{k_0}x_{0,t_1}-\frac{1}{k_0}x_{0,t_2}\right)<0
		\end{equation}
		Then the observation that $x_{0,t_2}-x_{0,t_1}>x_{j,t_2}-x_{j,t_1}$ for all $j\in N_0$ completes the argument. Note that the inequality $\left(\mu_{0,t_2}-f\frac{1}{4}\right)^2<\frac{1}{16}$ is strict.  
		
		Thus we have shown that if agent 0 keeps exactly the same connections in the second period as in the first one, her payoff will unambiguously go up. Agent 0 can therefore increase her utility by adding further links as far as $\sigma_{0,t_2}^2=\sigma_{0,t_1}^2$ and $\left(\mu_{0,t_2}-f\frac{1}{4}\right)^2=\frac{1}{16}$. If she does so, agent 0 will bear the same costs as in period 1, but will have added links and therefore will receive a greater benefit. 
		
		The argument proposed above can be iterated, up to the point where agent 0 will be connected with the whole population. By symmetry the opposite agent on the spectrum of opinions will do the same. Thereafter, the process can only converge to the average opinion in the population, which is $x_i\rightarrow \frac{1}{2}$ for all $i$.
	\end{proof}

	\subsection*{Proof of Proposition \ref{res:divergence} (page \pageref{res:divergence})}
	\begin{proof}
		Let us consider variable $\delta$ (from equation \eqref{eq:delta}), which defines the radius of an interior agent at time $1$, i.e.~an agent who chooses an interior reference group symmetrically distributed around her.
		It still holds in the continuum approximation, as in the  limit for $n\rightarrow \infty$, that $V<\frac{1}{16}$ (our hypothesis) is equivalent to $4\delta<1$.
		Let us consider what happens to our dynamics in the first $3$ steps, which will provide us with the starting hypothesis for an induction argument.
		
		\bigskip
		
		{\bf Step $0$}: Since $\delta<1/4$, only the agents in the intervals $[0,\delta)$ and $(1-\delta,1]$ will form an asymmetric reference group, all those in the interval $[\delta, 1-\delta]$ will form a symmetric reference group. \\
		Another implication is that, {\bf (a) those agents that are in  $[0,\delta)$ and $(1-\delta,1]$ at time $1$ will not connect to those that are in $[2\delta, 1-2 \delta]$ at time $1$.} \\
		It results from these choices of reference groups that those  agents in the intervals $[0,\delta)$ and $(1-\delta,1]$ will move their $x$'s  towards the center, while all the others will maintain their $x$'s.
		
		\bigskip
		
		{\bf Step1}: now, the agents in the intervals $[\delta,2\delta)$ and $(1- 2 \delta,1-\delta]$ (they have not changed $x$ since step $0$ -- note also that since $4 \delta<1$ these intervals are separated) have more concentration of peers at the extrema than towards the center.\\
		As a consequence,  {\bf (b) those agents that are in  $[\delta,2\delta)$ and $(1- 2 \delta,1-\delta]$  at time $1$ will decrease, at least weakly, their connection towards the center of the distribution, and increase their connections towards the extrema of the distribution.} \\
		It results from these choices of reference groups that those  agents that were originally in the intervals $[0,\delta)$ and $(1-\delta,1]$ will keep moving their $x$'s  towards the center, 
		those agents that have been in  $[\delta,2\delta)$ and $(1- 2 \delta,1-\delta]$ since step $0$ will move towards the extrema,
		while all the others will maintain their $x$'s.
		
		\bigskip
		
		{\bf Step2}: now, at least some agents in the interval $[2 \delta, 1-2 \delta]$ (only those in the extrema of this interval, if $6 \delta < 1$, all of them otherwise) have less concentration  at the extrema than they have towards the center.\\
		As a consequence,  {\bf (c) those agents that are in  $[2 \delta, 1/2]$  at time $1$ will weakly decrease their connection towards those agents that are in $[0 , 2 \delta)$ at time $1$; those agents that are in  $[1/2, 1-2 \delta]$  at time $1$ will weakly decrease their connection towards those agents that are in $(1-2 \delta,1]$ at time $1$.} \\
		It results from these choices of reference groups that those  agents that have been in the intervals   $[\delta,2\delta)$ and $(1- 2 \delta,1-\delta]$ since step $0$ will move towards the extrema, while all those in the interval $[2 \delta, 1-2 \delta]$ will either move to the center or maintain their $x$'s.
		
		\bigskip
		
		From the three considerations in bold above we can construct a proof based on numerical induction. The statement that we want to prove is that those  agents that were originally in the intervals   $[\delta,2\delta)$ and $(1- 2 \delta,1-\delta]$, at time $1$, will (weakly) move away from those agents that were originally in the interval $[2 \delta, 1-2 \delta]$ at time $1$. This is true in the first 3 steps, where conditions {\bf (a)}, {\bf (b)} and {\bf (c)} above holds.
		
		\bigskip
		
		{\bf Induction step:} if conditions {\bf (a)}, {\bf (b)} and {\bf (c)} on the way agents choose their reference groups hold, then those agents that were originally in  $[\delta,2\delta)$ and $(1- 2 \delta,1-\delta]$ will move to the extrema with respect to benchmarks $2 \delta$ and $1-2 \delta$, respectively; 
		on the other hand, those that were originally in the interval  $[2 \delta, 1-2 \delta]$ will not move outside this interval, and those at the extrema of this interval will move towards the center.
		As a consequence, statements {\bf (a)}, {\bf (b)} and {\bf (c)} will hold at any step of the process.
		
		\bigskip
		
		Finally, to conclude the proof, we need to show that the network ends up being disconnected.
		In fact, if this were not the case, all agents would end up having the same $x$.
		This goes against what we have proved, namely that those agents that were originally in the intervals   $[\delta,2\delta)$ and $(1- 2 \delta,1-\delta]$, at time $1$, will (weakly) move away from those agents that were originally in the interval $[2 \delta, 1-2 \delta]$ at time $1$.
	\end{proof}

\section{Analysis of a fully rational model} \label{app:rational}
	
	In this appendix we consider a model with fully rational agents and a single step.
	Initial distribution of opinions is given, players choose whom they want to link to, and then how to update their opinions, as in Section \ref{section:one period}.
 \added{The difference here is that players anticipate, using backward induction, that also other players will update their opinions.}

\added{We provide this extension for two main reasons. 
First, we can prove results for the fully rational model, analogous to those in Proposition \ref{res:empty_myopic}, with the same logic of those that we use for our main model.
Second, we show that the case with fully rational players has multiple equilibria, that would make the description of a model with many steps undefined. 
}

\added{The issue of multiplicity is well-known in the literature on endogenous networks and is particularly relevant in the empirical estimation of models with endogenous networks. The empirical literature has so far treated this issue by imposing restrictions on the rationality of agents to select equilibria and reduce the multiplicity. In some instances, it has been assumed that agents meet according to a dynamic meeting protocol; alternatively, other papers have assumed a setting of incomplete information or limited the influence that passes through the network to a certain distance.}\footnote{\added{See \cite{comola2022estimating} for a recent example of application of incomplete information to solve the issue of multiplicity in estimating a model with endogenous directed network and externalities. We also remand to \cite{comola2022estimating} for their extensive literature review on the issue of multiplicity in the empirical literature on networks.}} 
\added{The model we study in the main body of the paper falls in the same category because we can solve a multiplicity of equilibria by restricting individuals' rationality, even though with a different flavor to what has been done in the empirical literature.}

To avoid subscripts, in this appendix we call the initial profile $\vec{x}_0$ of opinions with $\vec{\theta}$ and the update profile of opinions $\vec{x}_1$ with $\vec{x}$.
	The timing of the one period game (that can be compared with the multi--step process of Definition \ref{def:timing2}) is as follows.
	\begin{definition}[Timing with rational players]\label{def:timing1} 
	In the two stage game with rational players:
	\begin{itemize}
			\item \textbf{Time} $0$: Opinions $\vec{\theta}$ are exogenously assigned to agents;
			\item \textbf{Time} $1$: agents simultaneously and independently form the directed network, which becomes common knowledge;
			\item \textbf{Time} $2$: $\vec{x}$ is formed, according to best replies with respect to opinions and the network.
		\end{itemize}
	\end{definition}
	
	Formally, the strategy space $S_i$ of player $i$, is $S_i= \{0,1\}^{n-1}\times \left( x_i:G \rightarrow \mathbb{R}^+ \right)$, where $x_i$ is a function. The agents choose first whether to form links with all the other agents, we call $G_i \in \{0,1\}^{n-1} $ this choice for player $i$, and then, after the observed realization $G$ of the network, they update their opinion $x_i$. 
	So, $G$ is an adjacency matrix obtained from the rows $\{ G_1, \dots , G_n \}$.  Note that $G_i$ can be seen also as a  subset of the other players, and this is how we will interpret it when referring to it as a set. From the timing and the definition of strategies, we introduce the equilibrium concept. Formally we see this as a sequential game that can therefore be solved by backward induction. We now focus on the solution of the system of best responses, when the network is given, and therefore $\mu_i$ is uniquely defined for all $i$. Then in the next section, we move back to the previous stage letting agents form the network, and finally move back to the very initial stage on the distribution of opinions to characterize the possible equilibria.

	Endogenizing the network, the first natural step is to consider the equilibrium concept of sub-game perfect Nash equilibrium. Since the network is directed, the solution is fully non-cooperative, worked out in two stages via backward induction. In fact, the agents will first choose their neighbors, and then will update their behaviors according to equation \eqref{eq:bestresponse}.  \added{Lemma \ref{lemma_unique} establishes} uniqueness of equilibrium for the sub--game given fixing a network and we can now focus on the network formation stage.

	Multiplicity of equilibria is still an issue. Intuitively there can be an agent that is a priori indifferent whether forming the link to her right or to her left. However, after choosing one of the two, it is no longer beneficial to form the link in the other direction, because the opinion would update according to the link that has been formed. Therefore the choice will determine the shape of the equilibrium. The following example shows a simple case where this multiplicity arises  and is non--generic. 
	
	\begin{example}[Three nodes]
		\label{ex:threenodes}
		Consider three agents with $\theta_1=0$, $\theta_2=\frac{1}{2}$ and $\theta_3=1$. Assume $f=\frac{1}{3}$ and $1/32 < V < 3/64$. \\
		Let us show that one equilibrium has nodes $1$ and $2$ connected together and node $3$ disconnected.
  \added{In this case,} $x_1=\frac{1}{8}$, $x_2=\frac{3}{8}$, and since $V>\frac{1}{32} =  \frac{f(1-f)}{(1+f)^2} (\theta_2-\theta_1)^2$, they keep their connection. \\
		If node 2 connects to node 3, then $x_3=1$ and, from equation (\ref{eq:bestresponse}),
		\[
		\left\{
		\begin{array}{ccl}
			x_1 & = & \frac{1}{3} x_2 \\
			x_2 & = & \frac{1}{3} + \frac{1}{3} (1+x_1)
		\end{array}
		\right.
		\Rightarrow
		\left\{
		\begin{array}{ccl}
			x_1 & = & \frac{1}{4}  \\
			x_2 & = & \frac{3}{4} 
		\end{array}
		\right. \ \ .
		\]
		The new payoff for node 2 is,  from equation (\ref{eq:payoff_equilibrium}),
		\[
		\pi'_2 = 2 \left(  V- \frac{2}{9} \frac{1}{8^2} - \frac{1}{3} \frac{3^2}{8^2} \right) < 2 \left(V - \frac{3}{64} \right) < 0 \ \  .
		\]
		So, node $2$ will not connect to node $3$. \\
		Similarly, node $3$, connecting to node $2$, would play $x_3 = 2/3+1/8 = 19/24$, and get a payoff of 
		\[
		\pi'_3= V- \frac{2}{9} \left(  \frac{19}{24} - \frac{3}{8}  \right)^2 = V - \frac{25}{72} < V - \frac{3}{64} < 0 \ \ .
		\]
		Finally, it is easy to see that by symmetry there is also an equilibrium where nodes $2$ and $3$ connect together, and node $1$ remains isolated. \eproof
		
	\end{example}
	
	Therefore we determined that if an equilibrium exists, meaning that the set of equilibria is non-empty, then this set is entirely composed by ordered equilibria. However, so far this characterization has been insufficient to answer the main question of the paper: ``when does higher polarization occur?''. To get there, we need the network to be disconnected. We dig, then, into the set of equilibria described above to search for the subset of equilibria such that the network is indeed disconnected. We provide a condition on the initial distribution of types, guaranteeing that we will observe the phenomenon we are looking for. 
	
	Intuitively the condition states that two agents should be ``distant'' enough in the first period. Then in the future period, the distance in their opinion cannot decrease. In fact, if they remain isolated then the distance between opinions remains constant, and if a link is formed by any of these two agents, the distance actually increases. This characterizes a local condition on the initial distribution of opinions.
	
	\begin{proposition}\label{res:suff_cond}
		Assume that ex--ante opinion $\{ \theta_1 , \dots \theta_n \}$ are  in increasing order.
		There is a threshold $\phi$, depending on $f$ and $V$, such that if there exist two agents $i$ and $i+1$ such that  $|\theta_i-\theta_{i+1}|>\phi $, then all equilibria are disconnected.\\
		This threshold is $\phi=(1+f)\left(\frac{V}{f(1-f)}\right)^{\frac{1}{2}}$, and is convex at $f$, with a minimum at $\frac{1}{3}$.\\
	\end{proposition}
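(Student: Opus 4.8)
The plan is to follow the same two-part strategy as the proof of Proposition \ref{cor:suff_cond}: first solve the two-agent case in closed form to pin down the threshold $\phi$, and then argue that no equilibrium can connect two adjacent agents whose gap exceeds $\phi$, forcing the network to split at that gap.

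First I would treat $n=2$. The difference from the myopic case is that each agent now anticipates the opponent's update, so if agents $i$ and $j$ link reciprocally their opinions solve the coupled system $x_i = f x_j + (1-f)\theta_i$, $x_j = f x_i + (1-f)\theta_j$ (unique by Lemma \ref{lemma_unique}). Solving gives $x_i = \frac{\theta_i + f\theta_j}{1+f}$ and $x_j = \frac{\theta_j + f\theta_i}{1+f}$, so each sees the other at effective distance $|x_i-x_j| = \frac{|\theta_i-\theta_j|}{1+f}$. Plugging $d=1$ and $\sigma^2=0$ into the equilibrium payoff of Lemma \ref{lemma_mean_variance} yields $\pi_i = V - \frac{f(1-f)}{(1+f)^2}(\theta_i-\theta_j)^2$, so the link is unprofitable exactly when $|\theta_i-\theta_j| > (1+f)\left(\frac{V}{f(1-f)}\right)^{1/2} = \phi$. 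This both identifies $\phi$ and explains the extra factor $(1+f)$ relative to the myopic $\xi$: anticipated mutual updating contracts the relevant distance by $(1+f)$, making links more attractive than under myopia.

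Second, for general $n$ I would use that every equilibrium is ordered (as noted in the appendix, by the same reasoning behind Lemma \ref{res:ordered}). Suppose two adjacent agents $i$ and $i+1$ with $|\theta_i-\theta_{i+1}|>\phi$ were in the same component of some equilibrium network. By ordering, every agent to the left has opinion at most $x_i$ and every agent to the right at least $x_{i+1}$, so connectivity forces at least one directed out-link bridging the gap, owned by some agent $a$ pointing across it. I would then show that dropping $a$'s across-gap out-links strictly raises $a$'s payoff. The intended mechanism is that the reciprocal two-agent configuration is the most favorable one for a bridging link: there mutual updating contracts the effective distance maximally (by exactly $1+f$), whereas with additional neighbors the mean-distance term in Lemma \ref{lemma_mean_variance} cannot shrink below the $n=2$ value and the variance term $\sigma^2$ only adds cost. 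Hence a gap exceeding $\phi$ already fails the $n=2$ profitability test, a contradiction; the node set therefore separates at the gap and all equilibria are disconnected.

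The main obstacle is precisely this last comparison, because in the rational model changing $a$'s link set re-equilibrates the opinions of the entire component through the coupled best-response system, so the simple marginal-cost accounting of the myopic proof must be redone with the ripple effects controlled. I would handle this by using the ordering to sign those ripples (re-equilibration keeps $a$'s across-gap neighbors on the far side of the gap, so it can only pull $a$'s neighborhood mean outward, keeping the mean-distance term at least as large as the frozen-opinion estimate) together with uniqueness from Lemma \ref{lemma_unique} to make the comparison well defined. Finally, the convexity of $\phi$ and its minimizer are a one-variable calculus check: writing $\phi(f)=V^{1/2}(1+f)\big(f(1-f)\big)^{-1/2}$, the condition $\frac{d}{df}\log\phi=0$ reduces to $2f(1-f)=(1+f)(1-2f)$, i.e.\ $3f=1$, and since $\phi\to\infty$ as $f\to0^+$ or $f\to1^-$ this is the unique interior minimum at $f=\tfrac13$.
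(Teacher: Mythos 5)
Your two-agent step is essentially the paper's: solving the reciprocal system gives $x_j-\theta_i=\frac{\theta_j-\theta_i}{1+f}$, and plugging into Lemma \ref{lemma_mean_variance} yields $\pi_i=V-\frac{f(1-f)}{(1+f)^2}(\theta_i-\theta_j)^2$, hence the threshold $\phi$. (One mislabeling: the quantity entering the payoff is $|x_j-\theta_i|=\frac{|\theta_i-\theta_j|}{1+f}$, the distance from $i$'s subjective opinion to $j$'s professed opinion; the distance between the two professed opinions is $\frac{(1-f)|\theta_i-\theta_j|}{1+f}$. This does not affect your formula.) The calculus giving the minimum at $f=\tfrac13$ is also fine.

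The genuine gap is in your general-$n$ step, specifically in the claim you use to reduce everything to the $n=2$ case: that ``with additional neighbors the mean-distance term in Lemma \ref{lemma_mean_variance} cannot shrink below the $n=2$ value and the variance term $\sigma^2$ only adds cost.'' The first half of this claim is false. If the bridging agent $a$ has $d$ same-side neighbors whose professed opinions sit at $\theta_a$ and one across-gap neighbor at distance $D$, her mean-distance term is $\left(\frac{D}{d+1}\right)^2$, which for any $d\geq 1$ is strictly below the two-agent value $\left(\frac{D}{1+f}\right)^2$ because $(d+1)^2\geq 4>(1+f)^2$. So the cost of the bridging link cannot be bounded below by the $n=2$ mean-distance cost; what makes the link unprofitable in such configurations is the variance term, here $\sigma_a^2=\frac{dD^2}{(d+1)^2}$, and one must verify that the \emph{combined} cost $d_a\left(f(1-f)(\mu_a-\theta_a)^2+f\sigma_a^2\right)$ outweighs the extra $V$ --- a joint accounting in which the two terms move in opposite directions as neighbors are added. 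That joint accounting is exactly what the paper's proof performs: it fixes $V$ at the threshold value $f(1-f)\left(\frac{\theta_j-\theta_i}{1+f}\right)^2$, writes the payoff of agent $i$ with and without the across-gap link (keeping her same-side link to $\ell$), and shows the profitability inequality fails, checking the endpoint cases $f=0$ and $f=1$. Your domination principle --- that the reciprocal two-agent configuration is the most favorable one for a bridging link --- is the right intuition behind the $(1+f)$ factor and may well be true, but as justified it rests on a monotonicity of the mean term that does not hold, so the reduction to $n=2$ is not established; the same flaw carries into your treatment of the ripple effects, which again signs only the mean term and ignores how the variance responds.
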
	
		
	\begin{proof}
	This proof follows the steps of the proof of Proposition \ref{cor:suff_cond}. We report the passages because the value of the threshold differs from the myopic model studied in the main body of the paper, as stated in Proposition \ref{res:suff_cond}. 
 
 First, let us assume that $n=2$. Then we can calculate the payoff for players $\{i,j\}\in N$ as follows:
		\begin{equation}
			\pi_i=\pi_j=V-f(1-f)\left(\frac{\theta_j-\theta_i}{1+f}\right)^2
		\end{equation}
  {\color{red}  }
		From this we obtain that a link between $i$ and $j$ is never profitable as long as $\phi=\left(\frac{V(1+f)^2}{f(1-f)}\right)^{\frac{1}{2}}<|\theta_j-\theta_i|$, which is the condition identified in the Proposition.	We now have to prove that the link between $i,j$ keeps remaining not profitable regardless of any other existing link that $i$ and $j$ might have. To do so we now assume that $n>2$. For simplicity we further assume that $i$ has only one existing link with player $\ell$. If so, we know that such link must be profitable and thus from equation \ref{eq:payoff_equilibrium} we obtain
		\begin{equation}
			\pi_i=V-f(1-f)\left(x_{\ell}-\theta_i\right)^2
		\end{equation}
		Setting $V=f(1-f)\left(\frac{\theta_j-\theta_i}{1+f}\right)^2$ then we get 
		\begin{equation}
			\pi_i=f(1-f)\left(\frac{\theta_j-\theta_i}{1+f}\right)^2-f(1-f)\left(x_{\ell}-\theta_i\right)^2>0
		\end{equation}
		Now we compare this value to the payoff that $i$ would obtain from being simultaneously linked with $j$ and $\ell$. We call this payoff $\pi_i'$ and it is equal to the following
		\begin{equation}
			\begin{aligned}
				\pi_i'&= 2(V-f(1-f)(\mu_i-\theta_i)^2-f\sigma^2_i)\\
				&=2(V-f(1-f)\left(\frac{x_j+x_{\ell}}{2}-\theta_i\right)^2-f\frac{1}{2}\left((x_j-\frac{x_j+x_{\ell}}{2})^2+(x_{\ell}-\frac{x_j+x_{\ell}}{2})^2\right))
			\end{aligned}
		\end{equation}
		We than set again $V=f(1-f)\left(\frac{\theta_j-\theta_i}{1+f}\right)^2$ and build the inequality $\pi_i'>\pi_i$, supposing that our statement is not true. We then obtain, after some simple algebra
		\begin{equation}
			\begin{aligned}
				(1-f)\left(\frac{\theta_j-\theta_i}{1+f}\right)^2-2(1-f)&\left(\frac{x_j+x_{\ell}}{2}-\theta_i\right)^2-\left((x_j-\frac{x_j+x_{\ell}}{2})^2+(x_{\ell}-\frac{x_j+x_{\ell}}{2})^2\right)>\\
				&>-(1-f)(x_{\ell}-\theta_i)^2
			\end{aligned}
		\end{equation}
		From this we take two extreme cases with $f=0$ and $f=1$. The latter case is evident how the inequality is never satisfied, so it remains to show that the contradiction stand when $f=0$. The above inequality becomes
		\begin{equation}
			(\theta_i-\theta_j)^2+(x_{\ell}-\theta_i)^2- \left( x_j-\frac{x_j+x_{\ell}}{2} \right)^2-\left( x_{\ell}-\frac{x_j+x_{\ell}}{2} \right)^2-2\left(\frac{x_j+x_{\ell}}{2}-\theta_i\right)^2>0
		\end{equation}
		Solving all the squares and rearranging we obtain
		\begin{equation}
			\theta_i \left( \frac{x_j+x_{\ell}}{2}-\theta_j-x_{\ell} \right)-\frac{x_j+x_{\ell}}{2} \left( 2\frac{x_j+x_{\ell}}{2}-x_j-x_{\ell} \right)>0
		\end{equation}
		from which we obtain a contradiction since the first term is always negative, and the second is null. Thus, if $V\leq f(1-f)\left(\frac{\theta_j-\theta_i}{1+f}\right)^2$, $i$ does not find it profitable to link with $j$. The same inequalities apply even if more links are at $i$'s right. 
		\end{proof}

Comparing the threshold obtained in Proposition \ref{res:suff_cond}, we notice that it is more stringent than the one in Proposition \ref{cor:suff_cond}. The threshold is in fact increased by the factor $(1+f)$. The reason for this difference lies in the fact that players take into account that connections affect others' opinions. In particular, their opinion would become ``closer'', and thus the distance required for the sufficient condition on disconnected networks is greater. 
This difference is higher the higher the flexibility $f$ of agents. Furthermore, we can formalize the following Corollary, which identifies a lower bound this point simply states that there will be a number of components $C$ at least equal to the number of pairs of agents that do not satisfy the sufficient condition for the network to be connected, meaning that it will have a single component. It can be seen easily that if there is a single couple  $ij$, then $C=2$.
		This is a lower bound, since we cannot control the components that arise because of the rule of order chosen to select one of the equilibria.
		\begin{corollary}
		    Equilibria exhibit a number of components $C$ that obeys the inequality:
		\[ C\geq | \{i \in \{1, \dots, n-1 \} : \phi<|\theta_i-\theta_{i+1}|\} |+1 \ \ . \]
		\end{corollary}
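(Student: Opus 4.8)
The plan is to upgrade Proposition \ref{res:suff_cond} from a statement about a single adjacent pair into a statement about a genuine graph cut, and then read off the bound by a purely combinatorial counting argument. Write $m = |\{i \in \{1,\dots,n-1\} : \phi < |\theta_i - \theta_{i+1}|\}|$ and let $i_1 < i_2 < \dots < i_m$ be the indices at which the gap exceeds $\phi$. First I would record the only structural input I need from the network-formation stage: as established just before the statement, every equilibrium is \emph{ordered}, so each agent's chosen peers form a contiguous block of opinions (no holes) and these blocks shift monotonically with the agent index.

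The core step is the following claim: whenever $\phi < |\theta_{i_k} - \theta_{i_k+1}|$, no directed edge joins the set $L = \{1,\dots,i_k\}$ to the set $R = \{i_k+1,\dots,n\}$, in either orientation. Suppose, for contradiction, that some agent $a \in L$ keeps a neighbor $b \in R$ (the reverse orientation is symmetric). Since $a \le i_k < i_k+1 \le b$, monotonicity of opinions gives $\theta_b - \theta_a \ge \theta_{i_k+1} - \theta_{i_k} > \phi$, so the link $a \to b$ spans a distance strictly larger than $\phi$. The computation carried out in the proof of Proposition \ref{res:suff_cond} shows precisely that a link of span at least $\phi$ is never strictly profitable, and---this is the part I would spell out---that this remains true irrespective of whatever other links $a$ already holds on either side of the gap (the $n=2$ computation fixes the threshold, while the $n\ge 3$ computation together with the ``more links at $i$'s right'' remark neutralizes the presence of additional neighbors). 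Hence $a$ strictly prefers to drop $b$, contradicting equilibrium, and the cut at $i_k$ carries no crossing edge.

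With the claim in hand the counting is immediate. The $m$ cuts at $i_1,\dots,i_m$ split the linearly ordered vertex set into the $m+1$ consecutive blocks $\{1,\dots,i_1\},\ \{i_1+1,\dots,i_2\},\ \dots,\ \{i_m+1,\dots,n\}$, and since connectivity (direct or indirect) cannot cross a cut that has no crossing edge, no two distinct blocks lie in the same component. Therefore the equilibrium network has at least $m+1$ components, which is exactly $C \ge m+1$; as the surrounding text observes, this is only a lower bound because the tie-breaking/ordering rule used to select an equilibrium may create further splits inside a block.

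I expect the main obstacle to be the reduction performed in the core step: verifying that the unprofitability of a single over-long link, established in Proposition \ref{res:suff_cond} for the marginal pair, genuinely transfers to an arbitrary crossing edge $a\to b$ when $a$ and $b$ may each already carry several links on both sides of the cut. The ordered structure helps here, because it guarantees the crossing link is the \emph{outermost} one on that side and that its span is bounded below by the gap $\theta_{i_k+1}-\theta_{i_k}$; the delicate point is confirming that the marginal contribution of this outermost link to the equilibrium payoff \eqref{eq:payoff_equilibrium}---through its simultaneous effect on $\mu_i$ and $\sigma^2_i$---is negative for every admissible configuration of the remaining neighbors, which is precisely the algebra already carried out at the boundary case in the proof of Proposition \ref{res:suff_cond}.
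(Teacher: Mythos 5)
Your proof is correct and follows essentially the same route as the paper: the paper justifies this corollary only with the terse observation that ``nodes across those gaps would never have an incentive to connect together,'' which is exactly your core claim that any crossing edge spans more than $\phi$ (by ordering of the $\theta_i$) and is therefore unprofitable by the computation in Proposition \ref{res:suff_cond}, after which the count of $m+1$ blocks is immediate. Your write-up merely makes explicit the cut-and-count argument and the transfer of the threshold computation to non-adjacent pairs, which the paper leaves implicit.
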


	From the above result, there exists a threshold on distance of types such that for all rules of order in the network formation stage, the network will be disconnected. This is therefore a sufficient condition. 
	
	The last line in the statement says that the number of gaps in the initial distribution of opinions imposes a minimum threshold on the number of components in equilibrium, because nodes across those gaps would never have an incentive to connect together.

	Importantly, when assuming that the initial distribution of opinions is uniformly distributed, meaning that all types are equispaced on the spectrum of opinions, we obtain a generic case of the previous result, because all adjacent players have opinions the same distance apart before any link is formed.
	Thus, apart from intermediate cases, as in Example  \ref{ex:threenodes}, where multiple equilibria arise, the network in the first period is either empty or connected. To see that, we simply need to check among all couple of adjacent agents, and determine the couple characterized by the maximum distance. Then we can check that there is always a $V,f$ pair such that this distribution can show disconnectedness.
	
	Finally we note that the threshold is convex in $f$, with a minimum for $f=1/3$. The intuition behind this result is that agents have to give enough weight to the variance term. However, there is a non monotonic effect for larger $f$ since agents will also be more influenced by neighbors, leading to lower overall variance. This also implies that in the first place, agents prefer homogeneous neighborhoods, which is why in equilibrium they are likely to occur locally, leading to segregation. 
	

\section{Simulations}\label{app:simulations}

In this appendix we use simulations to analyze aspects of our dynamical model.

 \subsection{Relations with Hegselmann and Krause (2002)} \label{app:HK}
 
  What is  the relation between our model and the mechanical model from \cite{hegselmann2002opinion}? We provide some illustrative examples.
\begin{example}
 To show the main differences between the model in our paper and the model by \cite{hegselmann2002opinion} (from now on, HK model), there two ways to proceed: i) we target the parameters to match the network in the first step, and ii) we target the long-term dynamics. Hence, we create a benchmark to compare the two models. As we show below, the HK model requires much less connectivity in the first step of the dynamics for convergence to occur in the long run, with respect to our model. The reason for this is not due to sensitivity to parameters, but to the fundamental contribution of strategic choice of links. The following exercise has a number of conditions: the parameter $f$ is set to 0.5 in both models; there are $n=101$ agents; we vary the confidence level denoted by $\varepsilon_i$ in the HK model, and we vary the parameter $V$ in our model. 	
	In the first step of our comparison of the two models, we set $\varepsilon_i=0.2$ and $V=0.02$. The latter is the parameter value used in Figure 1b in the paper. 
	\begin{figure}
		\centering
		\begin{subfigure}{0.48\textwidth}
\includegraphics[width=\textwidth]{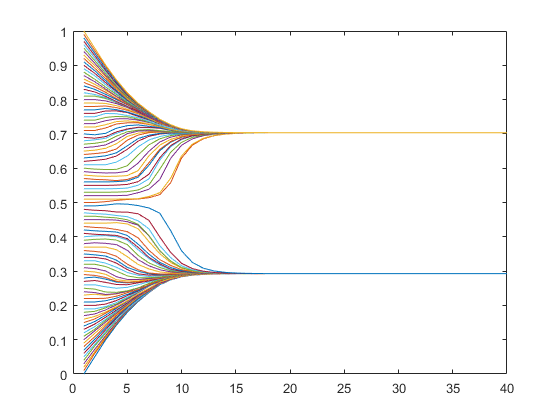}
\caption{}
		\end{subfigure}
		\begin{subfigure}{0.48\textwidth}
	\includegraphics[width=\textwidth]{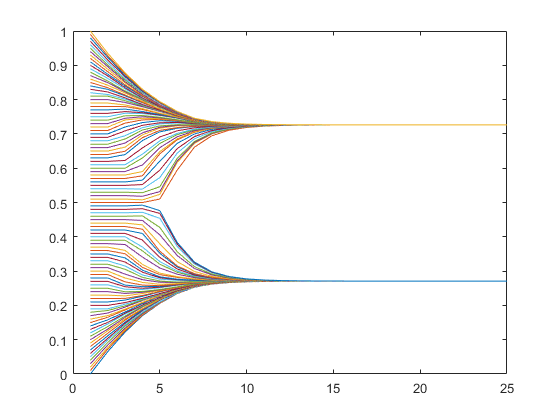}
	\caption{}
\end{subfigure}\\
\begin{subfigure}{0.48\textwidth}
\includegraphics[width=\textwidth]{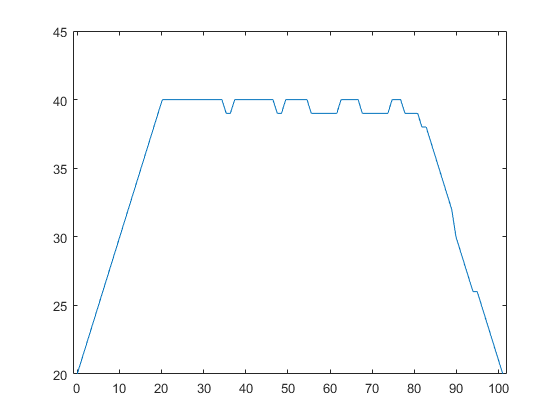}
\caption{}
\end{subfigure}
		\begin{subfigure}{0.48\textwidth}
	\includegraphics[width=\textwidth]{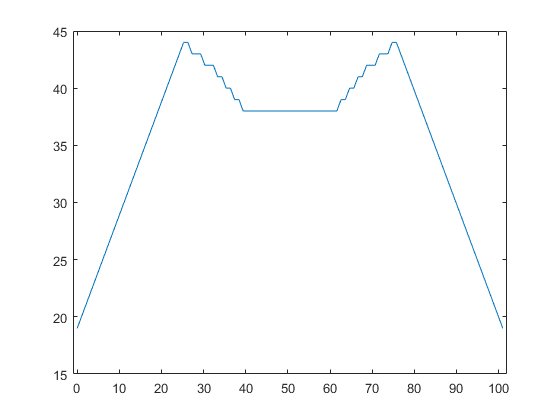}
	\caption{}
\end{subfigure}
\caption{Panel (a) shows the evolution of the HK model using $\varepsilon_i=0.02$, for all $i\in N$. The figure below, panel (c) reports the degree distribution of the network in the first step that results from the same process. Comparing it to panels (b) and (d) on the right, which correspond, respectively, to the full dynamics and the degree distribution in the first step for the model in our paper, we observe that the two models reproduce similar evolution of opinions and similar network in the first step. }
\label{fig:disconnect}
	\end{figure}

	\begin{figure}
	\centering
	\begin{subfigure}{0.48\textwidth}
		\includegraphics[width=\textwidth]{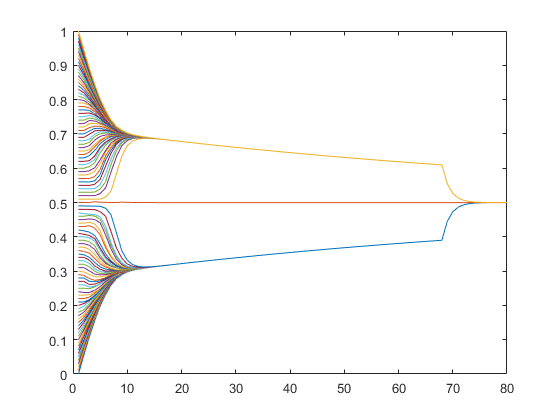}
		\caption{}
	\end{subfigure}
	\begin{subfigure}{0.48\textwidth}
		\includegraphics[width=\textwidth]{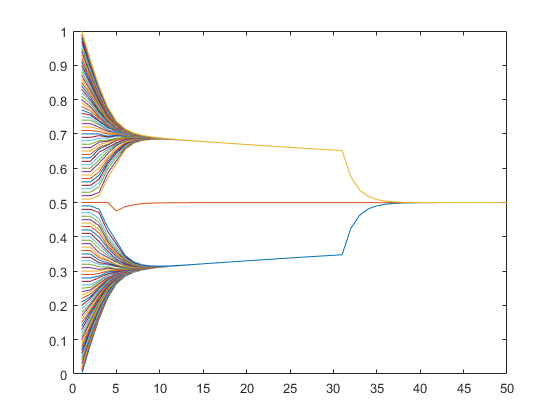}
		\caption{}
	\end{subfigure}\\
	\begin{subfigure}{0.48\textwidth}
		\includegraphics[width=\textwidth]{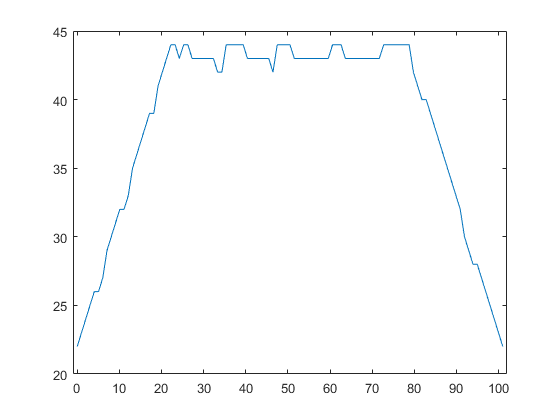}
		\caption{}
	\end{subfigure}
	\begin{subfigure}{0.48\textwidth}
		\includegraphics[width=\textwidth]{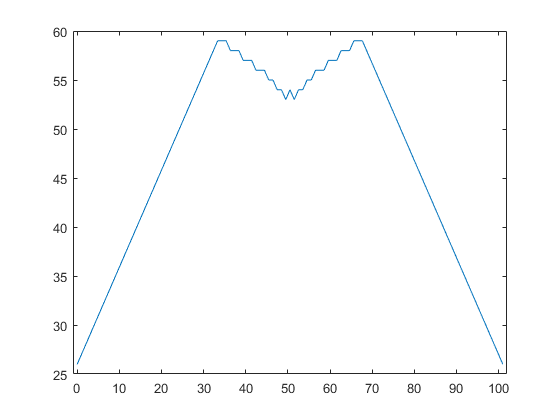}
		\caption{}
	\end{subfigure}
	\caption{Panel (a) shows the evolution of the HK model using $\varepsilon_i=0.022$, for all $i\in N$. The figure below, panel (c) reports the degree distribution of the network in the first step that results from the same process. Comparing it to panels (b) and (d) on the right, which correspond, respectively, to the full dynamics and the degree distribution in the first step for the model in our paper, we observe that the two models reproduce similar evolution of opinions, but the network in the first step is consistently more connected in our model. Since the parameter value is at the threshold for having connected network we acknowledge that our model requires more connectedness for the network to remain connected.  }
	\label{fig:connect}
\end{figure}

Looking at the top panels of figures \ref{fig:disconnect} and \ref{fig:connect}, we observe that the dynamics are very similar for the selected parameters used in the simulation. In particular, looking at figure \ref{fig:disconnect} we remark only a slightly higher polarization in the case of strategic network formation with respect to the HK model. The degree distribution is also similar: at the edges of the distribution of types, the degree is 20 in the HK model, 19 in the strategic model. The maximum degree is 40 in the HK model, 44 in the strategic model. Looking at this figure alone, we conclude that the strategic model is similar to the HK model, in the sense that both model produce similar dynamics, long run outcome and the first step degree distribution is somewhat similar. 

In Figure \ref{fig:connect} we use $\varepsilon_i=0.22$ and $V=0.03501$ for the strategic model.\footnote{The value of $V$ is the one corresponding to the simulation reported in the paper in figure \ref{fig:non-monotone}, panel \textit{b}.} We chose this values because they are close to the threshold level of the parameters before which the network is disconnected (in the long run), and after which it is connected (in the long run). \eproof
\end{example}

Firstly, we remark that the speed of convergence is much larger in the strategic model than in HK. More importantly, looking at the degree distributions we observe that the strategic model requires much more density in the first step of the dynamics. This happens for a specific reason: when agents are strategic there is a trade-off between the distance from the mean and the variance of opinions in an agent's reference group (see Equation \ref{eq:payoff_equilibrium} in the paper, referred to as \textit{payoff in equilibrium}). Deleting connections and extremization of opinions is a documented behavior in real life, and our model captures that. 

To conclude, while HK is able to produce non-monotonic behavior, this is due purely to the mechanical condensation of opinions around the extremes of the distribution of opinions. In our model, in addition to that, there is a preference-driven force that pushes agents to delete links making it more likely to disconnect the network.

\subsection{The role of the initial distribution of opinions}
	\label{distribution}
	
	Proposition \ref{res:suff_cond}  states, even for the fully rational model of network formation, that there exists a threshold for the distance between two adjacent agents' opinions above which they would not form a link in the initial period. This identifies a local condition in a general initial discrete distribution of opinions, that allows us to determine the situations when the opinions will not converge in the long-run. So, in general, the shape of the initial distribution of opinions matters. It is possible, though, to carefully design a distribution that would induce either long-run convergence or divergence under a wide range of parameters. For instance, a bimodal distribution has a natural inertia towards polarization in the long-run. In fact, when the distribution is already polarized \added{in the initial distribution}, then the conditions to reach convergence are more strict. Therefore it seems natural to think that with a Gaussian distribution we should expect the opposite.
	However, this is not necessarily true, because even under a Gaussian distribution we may have conditions under which the tails will disconnect at a certain point, forming a group with a relatively extreme opinion, and disagreeing with the rest of the population. In the example below we report simulations for respectively a normal and a bimodal distribution, varying the values of $V$. The purpose is to show that all the cases we have discussed when using a uniform distribution will reproduce changing the initial distribution. 
	\begin{example}
		In this example we explore few initial distributions with the same parameters to shed light on the impact of all the initial conditions of the model. Here we fix the number of agents to $n=101$, the value for $f=0.5$, and we change the initial distribution of opinions showing two values of $V$ for each example, $0.01$ and $0.02$ respectively. We show results for a Normal distribution, Bimodal and Uniform. 
		\bigskip
		
		\noindent	{\bf Normal distribution}: Because of the discreteness of our set of agents, we build this initial distribution by combining several uniform distributions, with higher variance towards the tails and lowest variance in the middle segment. The overall variance of the vector of opinions chosen here is $\sigma^{2}_{t_1, Normal}=0.0546$. For reference, the uniform distribution with $n=101$ agents has a variance of $\sigma^2_{t_1,Uniform}=0.0858$. Figure \ref{fig:gaussian} shows the result for two different values of $V$, keeping $f=0.5$
		(Figure \ref{fig:evolution_normal} in \ref{app:networks} shows some figures of the network during the evolution of the case in the left panel).
		
		\begin{figure}[ht]
			\centering
			\begin{subfigure}[b]{0.48\textwidth}
				\includegraphics[width=\textwidth]{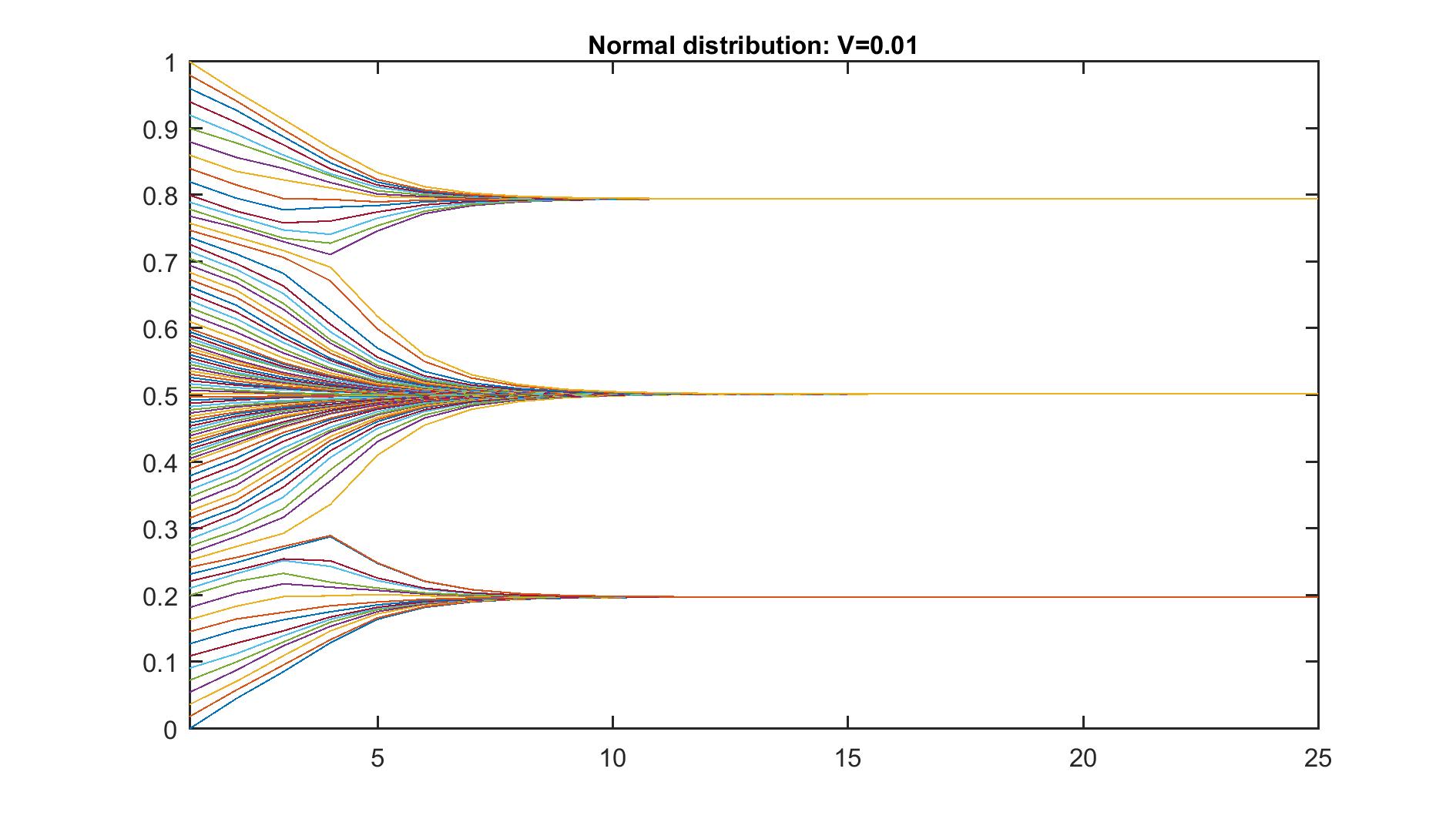}
				\caption{Low value of V ($V=0.01$). The tails separate from the large middle group, and will converge to an extreme opinion in the long-run. }
			\end{subfigure}
			~
			\begin{subfigure}[b]{0.48\textwidth}
				\centering
				\includegraphics[width=\textwidth]{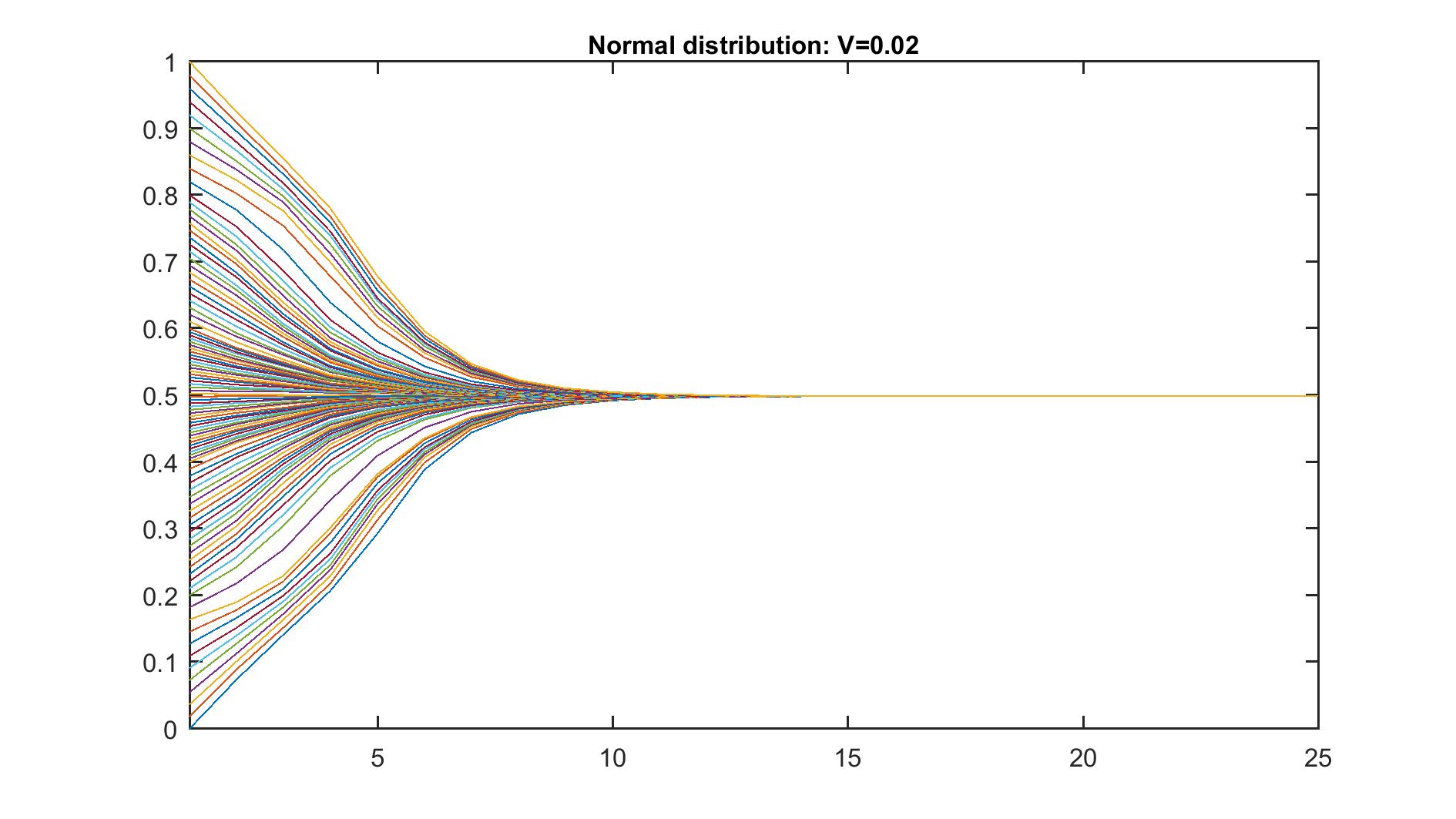}
				\caption{High value of V ($V=0.02$). The tails trade-off their initial position to access a larger number of links, favored by a high value of V. }
			\end{subfigure}
			\caption{Analysis of the dynamics starting with a Gaussian distribution.}
			\label{fig:gaussian}
		\end{figure}
		\bigskip
		\noindent {\bf Bimodal distribution}: The bimodal distribution requires more strict conditions in order to reach consensus in the long run. Intuitively, an already polarized distribution tends towards long-run polarization, and thus the region of parameters that induces consensus must be smaller with respect to other distributions. 
		Figure \ref{fig:bimodal} shows the result for two different values of $V$, keeping $f=0.5$.
		\qed
		
		\begin{figure}[ht]
			\centering
			\begin{subfigure}[t]{0.48\textwidth}
				\includegraphics[width=\textwidth]{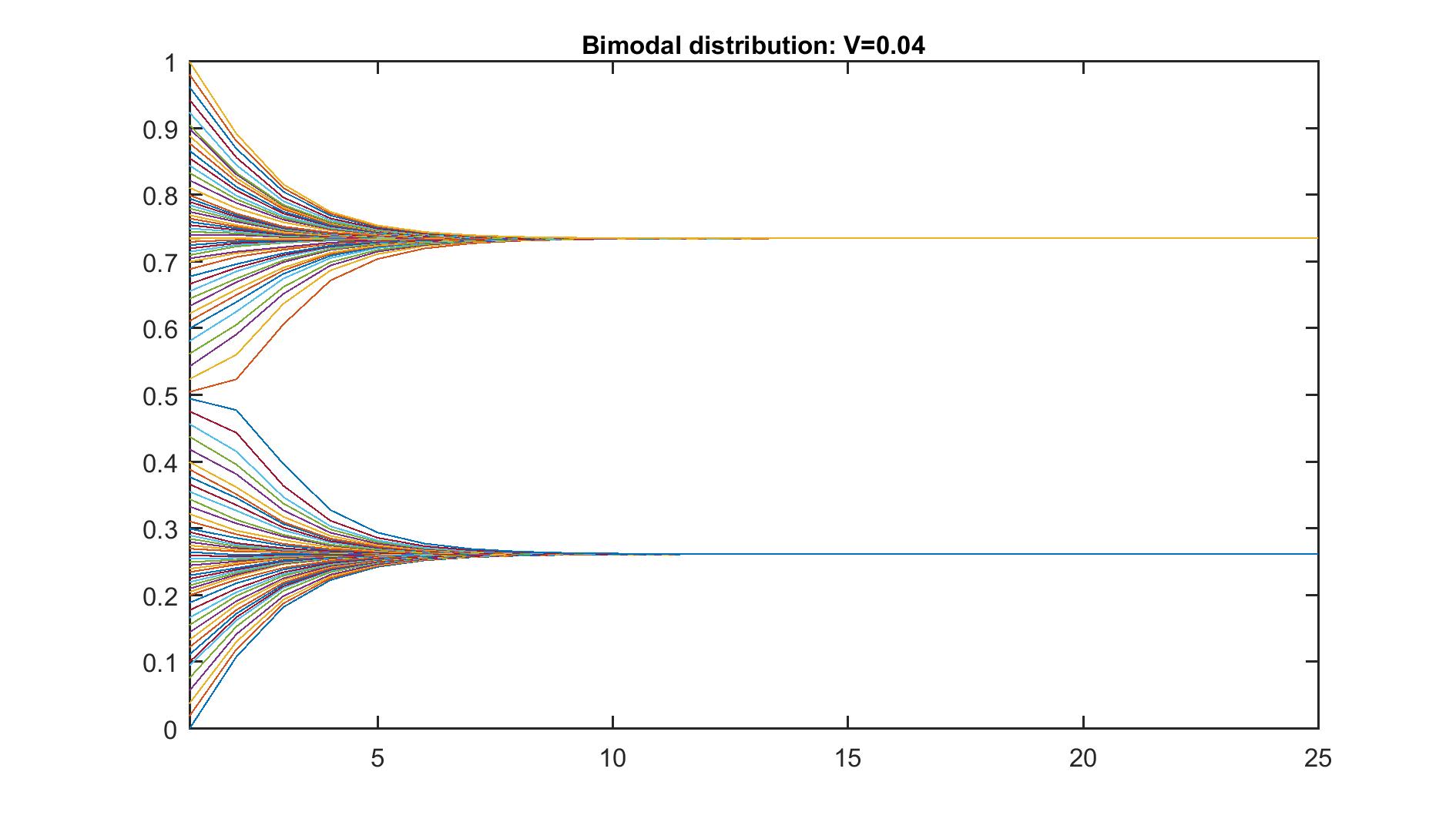}
				\caption{ $V=0.04$. Long-run disagreement arises, despite the relatively high value of $V$. }
			\end{subfigure}
			\begin{subfigure}[t]{0.48\textwidth}
				\centering
				\includegraphics[width=\textwidth]{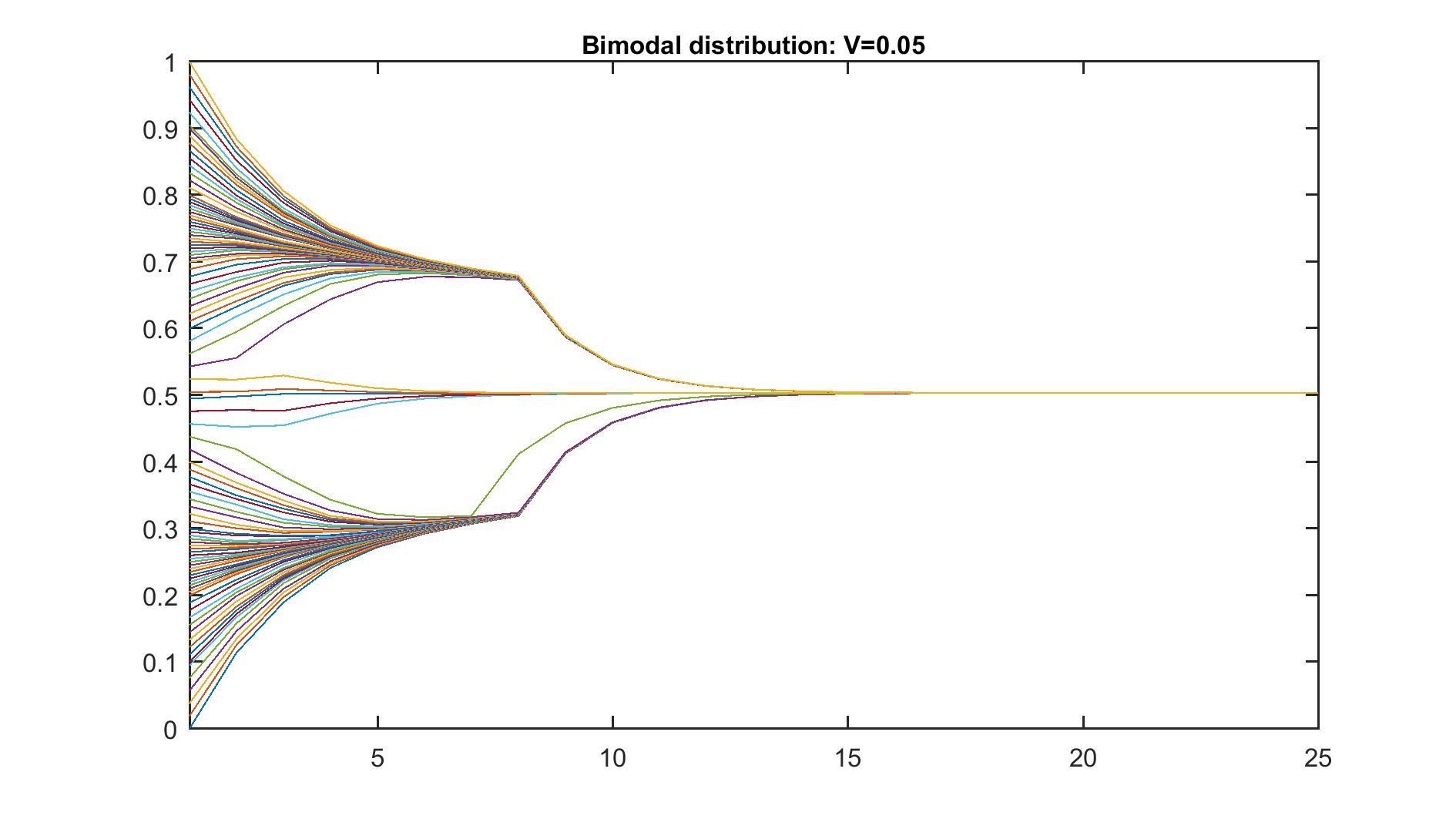}
				\caption{$V=0.05$. Increasing the value of $V$ we finally reach the threshold for consensus to arise in the long-run. }
			\end{subfigure}
			\caption{Analysis of the dynamics starting with a bimodal distribution.}
			\label{fig:bimodal}
		\end{figure}
	\end{example}
	
	The main message from last example is that it is possible to play with the initial distribution to induce disconnectedness between those nodes that are more sparsely distributed in the initial state.
	This happens in non--trivial ways, as we have discussed also for the uniform distribution in previous section.
	We have provided most of our formal proofs only for the case in which the initial distribution of opinions is uniform, because this greatly simplified the analysis.
	However, the uniform distribution is not \emph{per se} the ingredient that is driving us to rich and non easily predictable dynamics, because playing with the initial distribution can only increase the richness of possible dynamics of our model.

	\subsection{The evolution of the network}
	\label{app:networks}
	
	In this section we present the network configurations that emerge from the best response behavior of agents. Besides the purpose to further clarify the main mechanics of the model, we highlight the behavior of un-friending peers even in situations where the network is relatively dense during the initial periods. This phenomenon has been widely documented in real networks, especially during the U.S. Presidential campaign in 2016. Hence, deleting links is an engine for polarization in the model, but we believe this captures meaningful real-world examples which had and can have a significant impact on the evolution of opinions in a society.

	\subsection{Uniform distribution (Fig. \ref{fig:non-monotone})}
	
	First, we present the results of the simulations for the two cases with a uniform distribution, from Figure \ref{fig:non-monotone}.
	Figures \ref{fig:evolution_uni_disconnected1} display the first 6 steps of the evolution for the case that ends with two separated components. From period 6 on the network remains unchanged.\\
	Figures \ref{fig:evolution_uni_connected1} report steps 1-2-3-5-30-31 of the evolution for the case that ends with a single component: the first 5 steps are very similar to the first 5 of the previous case.
	However, the two \emph{groups} of nodes, in this case, remain also connected in all of the following steps  to a single central node that \emph{bridges} the two communities.
	At the end, between the steps 30 and 31 reported in the lower panel of Figure \ref{fig:evolution_uni_connected1}, from this almost bipartite configuration we reach in a single step the complete network.
	This happens because, from step 6 to step 30, the network remains the same but the opinions of nodes converge to the opinion of the bridging node which exerts a gravitational pull for the opinions of the rest of the society.
	
	\begin{figure}
		\centering
		\begin{subfigure}[b]{0.48\textwidth}
			\includegraphics[width=\textwidth]{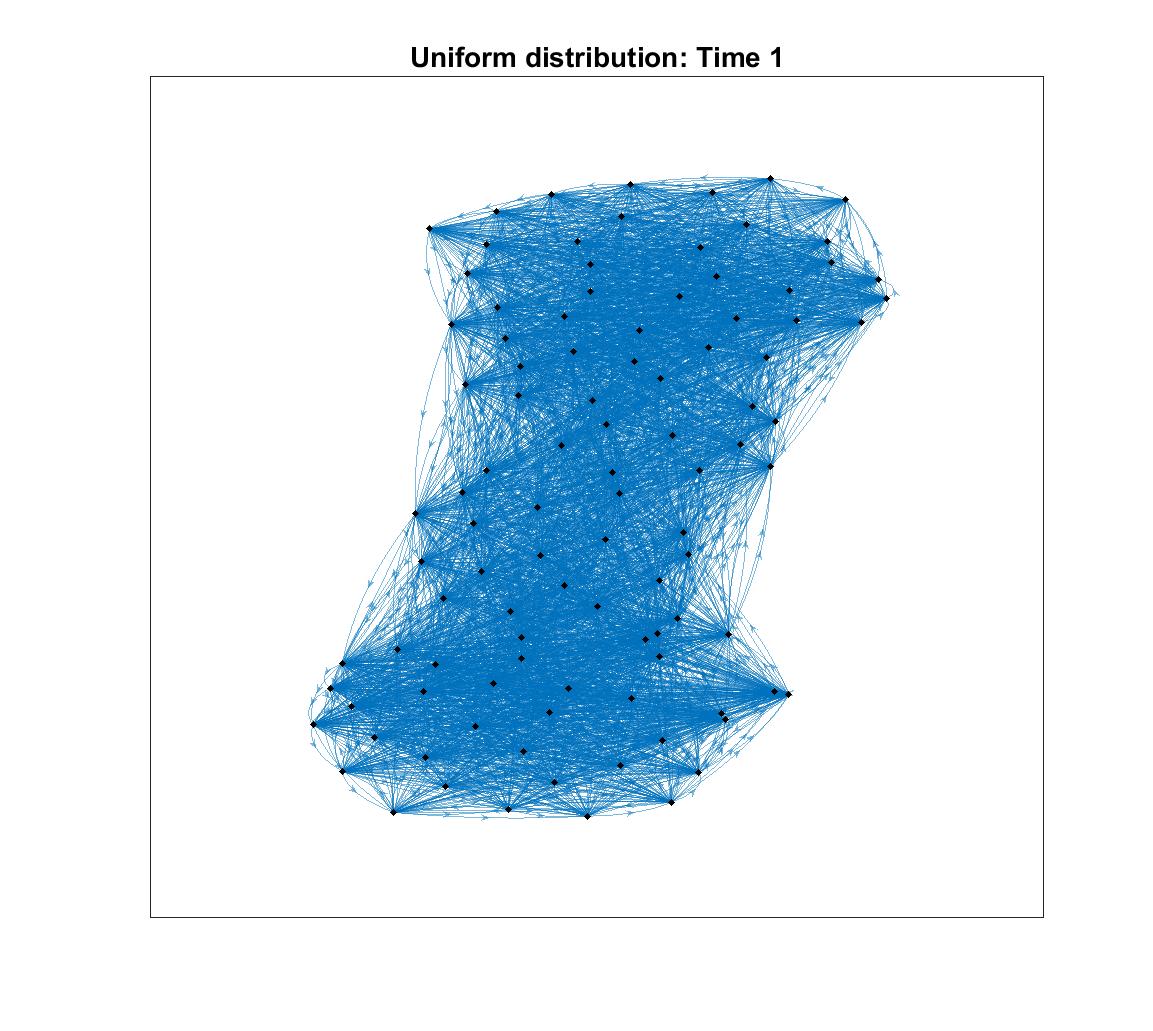}
		\end{subfigure}
		~
		\begin{subfigure}[b]{0.48\textwidth}
			\centering
			\includegraphics[width=\textwidth]{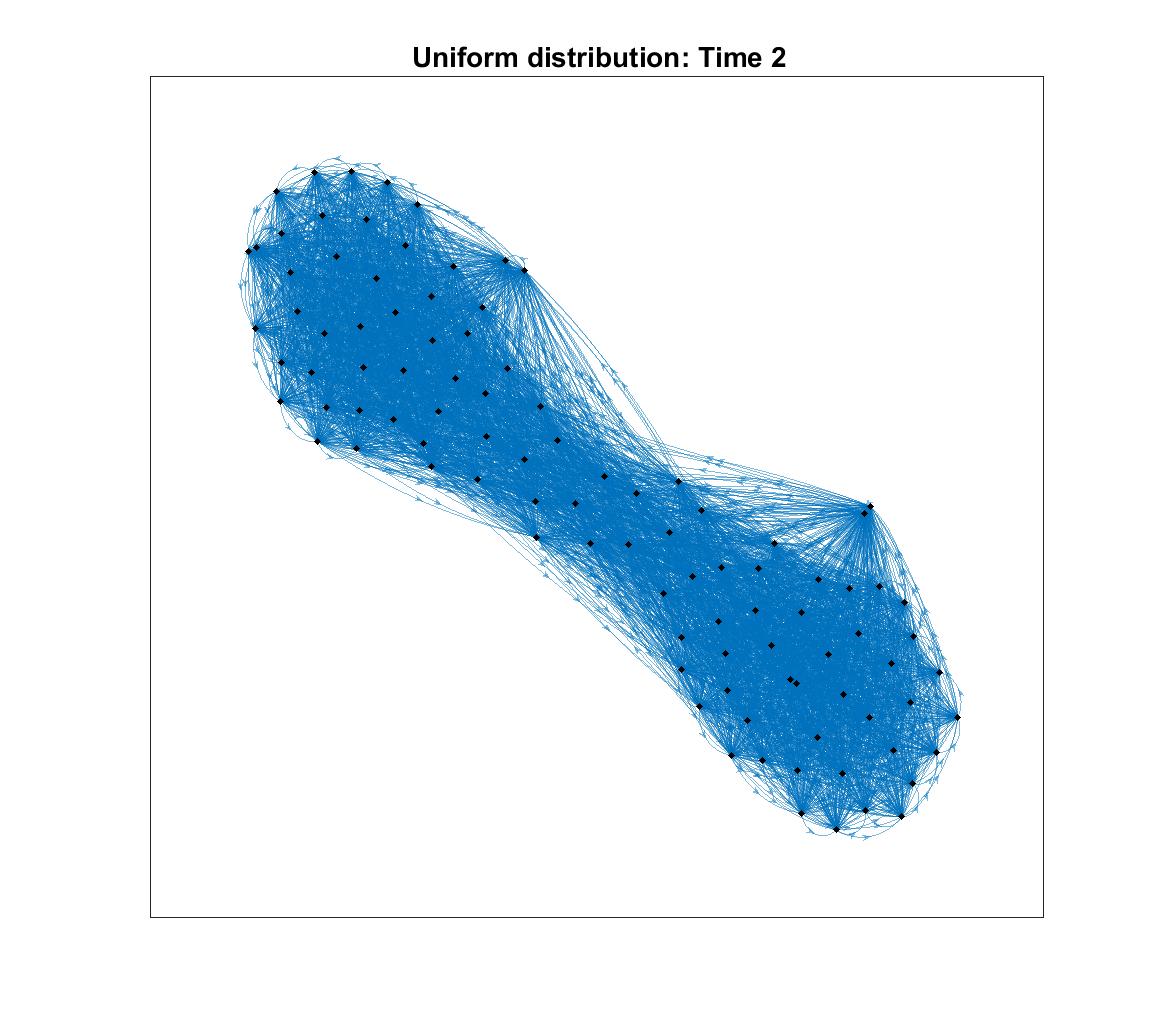}
		\end{subfigure}
		
		\begin{subfigure}[b]{0.48\textwidth}
			\includegraphics[width=\textwidth]{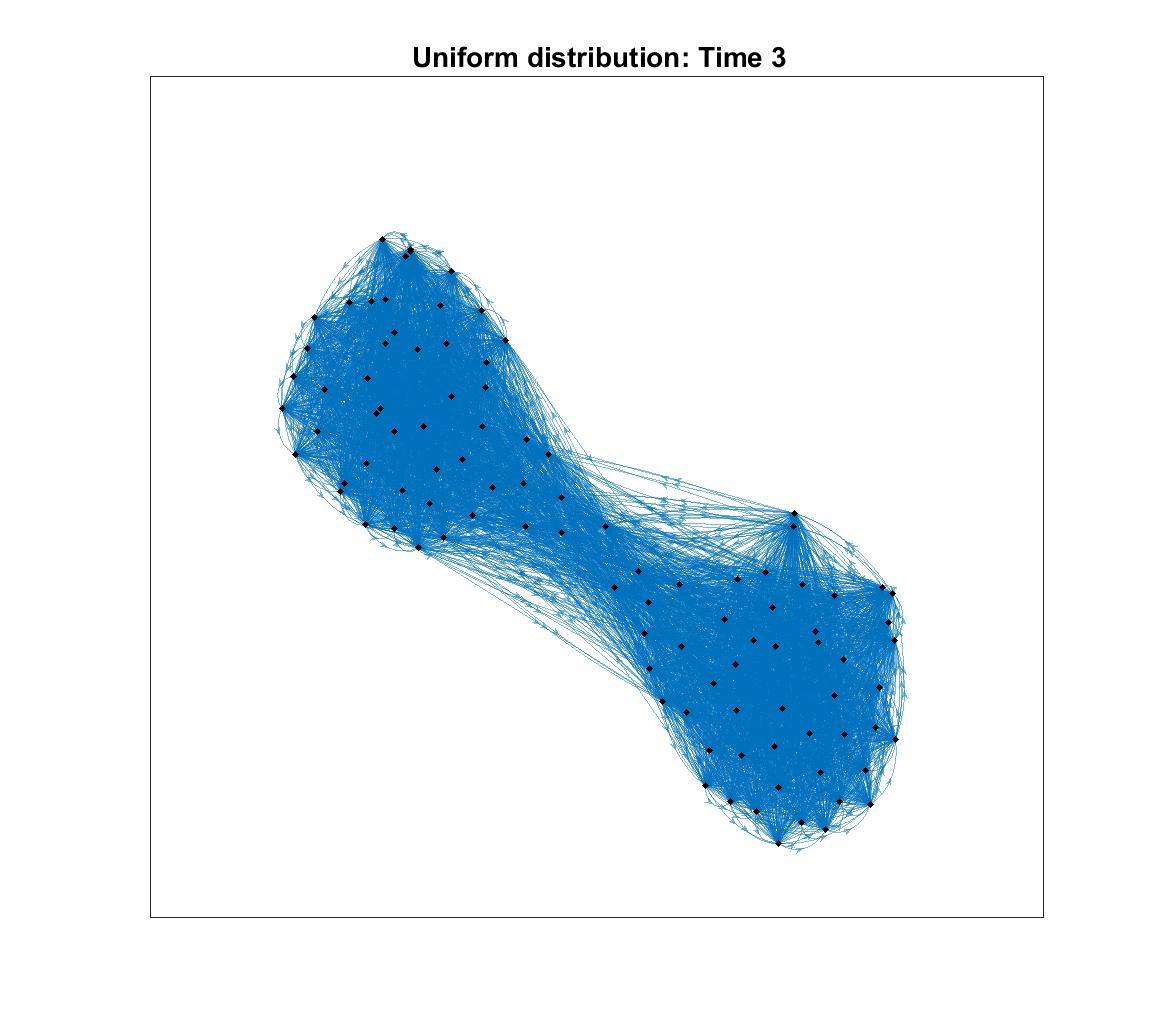}
		\end{subfigure}
		~
		\begin{subfigure}[b]{0.48\textwidth}
			\centering
			\includegraphics[width=\textwidth]{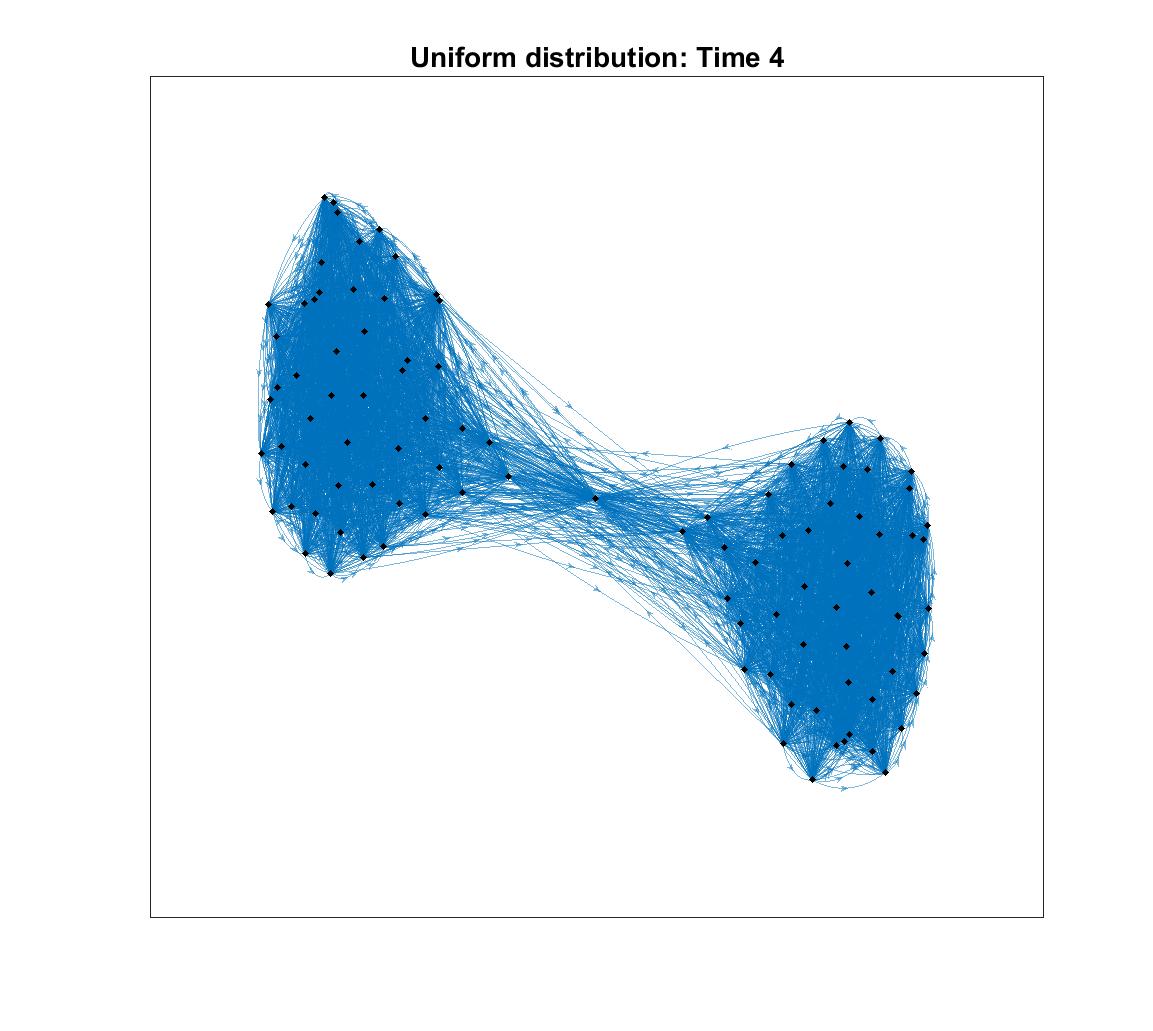}
		\end{subfigure}
		
		\begin{subfigure}[b]{0.48\textwidth}
			\includegraphics[width=\textwidth]{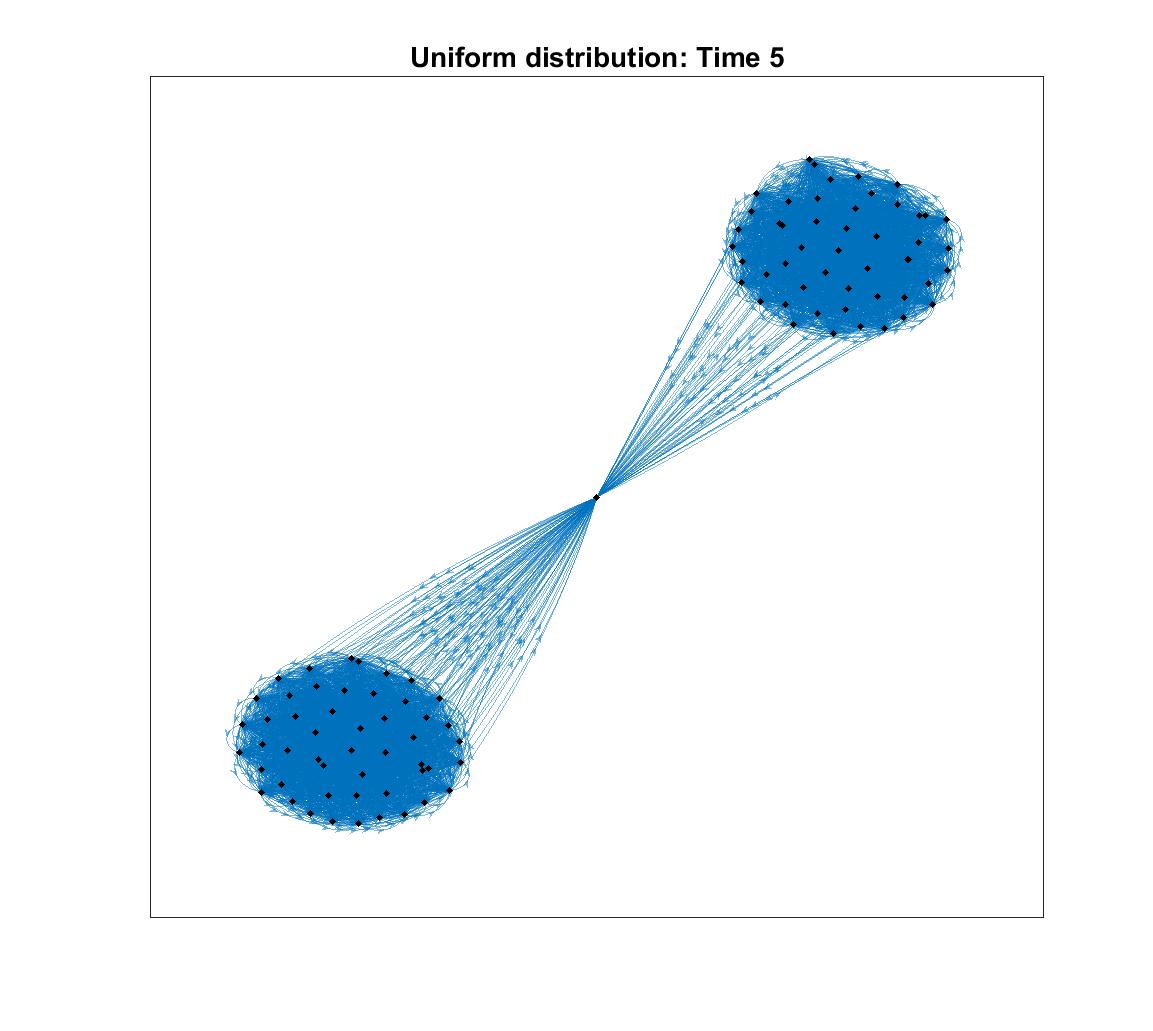}
		\end{subfigure}
		~
		\begin{subfigure}[b]{0.48\textwidth}
			\centering
			\includegraphics[width=\textwidth]{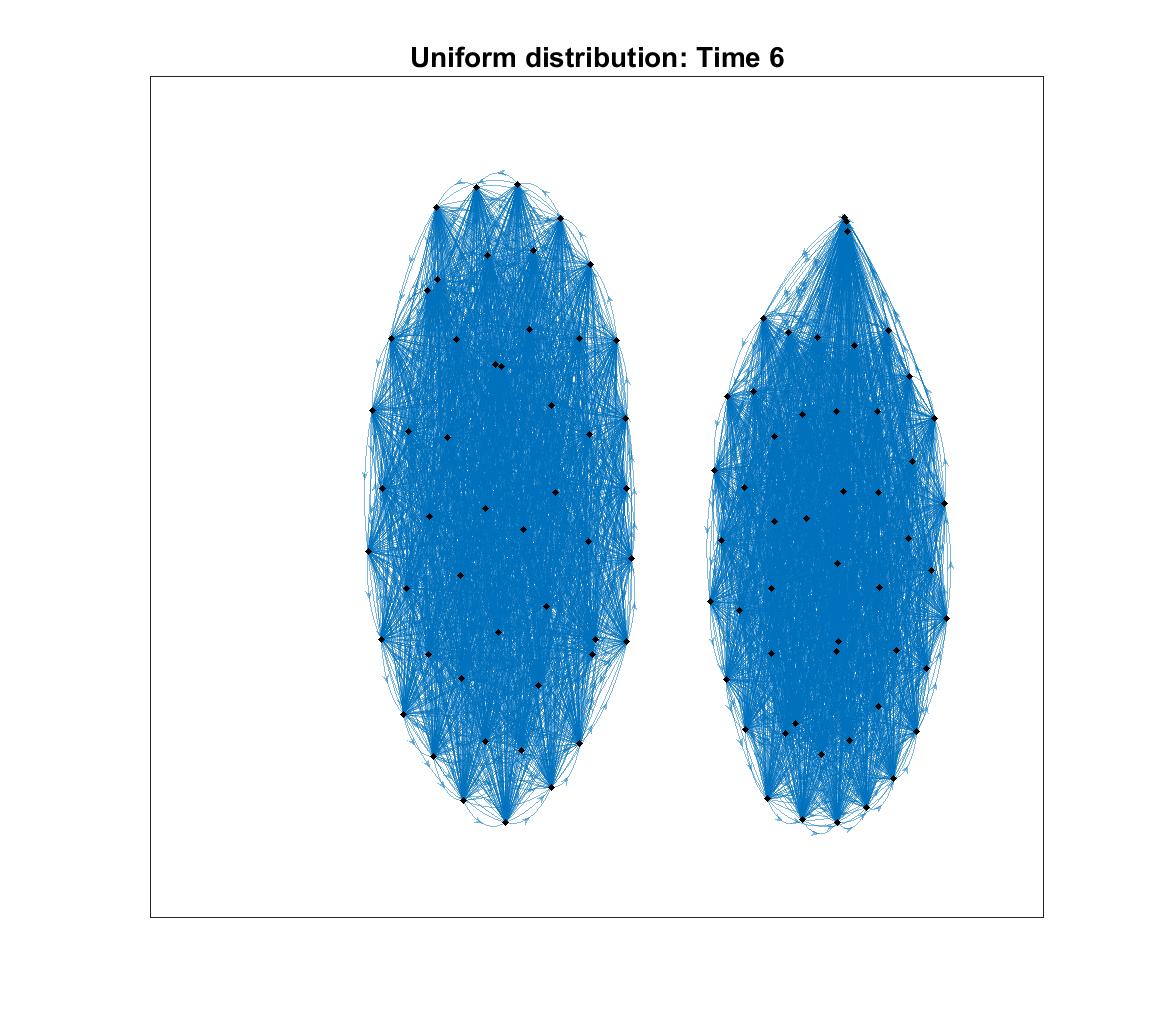}
		\end{subfigure}
		\caption{Evolution of the network: the case of a uniform distribution, $n=101$ agents, $f=0.5$ and $V=0.0350916$. These parameters are the same used in the left panel of Figure \ref{fig:non-monotone}. Periods 1-6. From period 6 on the network remains unchanged. }
		\label{fig:evolution_uni_disconnected1}
	\end{figure}

	\begin{figure}
		\begin{subfigure}[b]{0.48\textwidth}
			\centering
			\includegraphics[width=\textwidth]{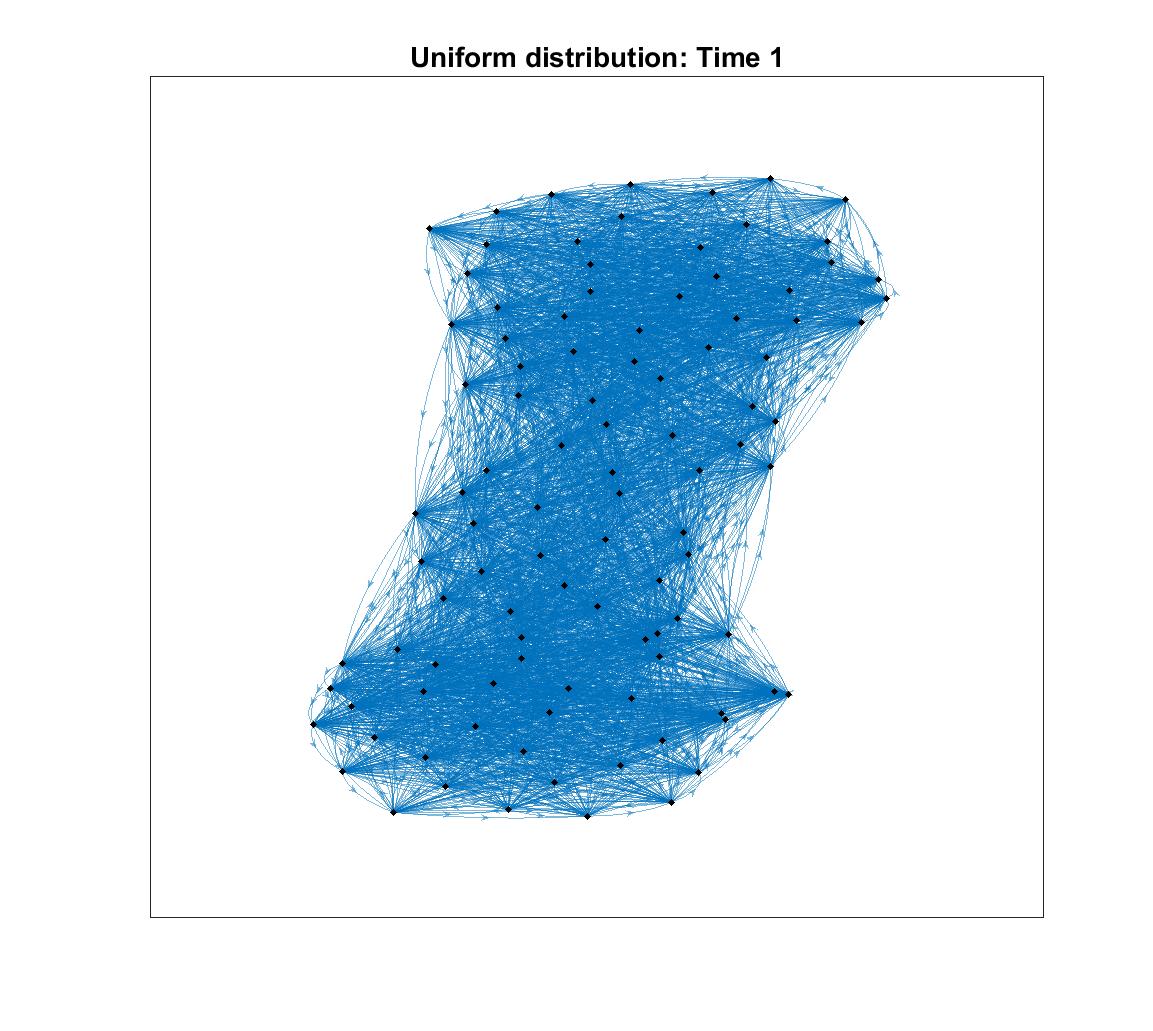}
		\end{subfigure}
		~
		\begin{subfigure}[b]{0.48\textwidth}
			\centering
			\includegraphics[width=\textwidth]{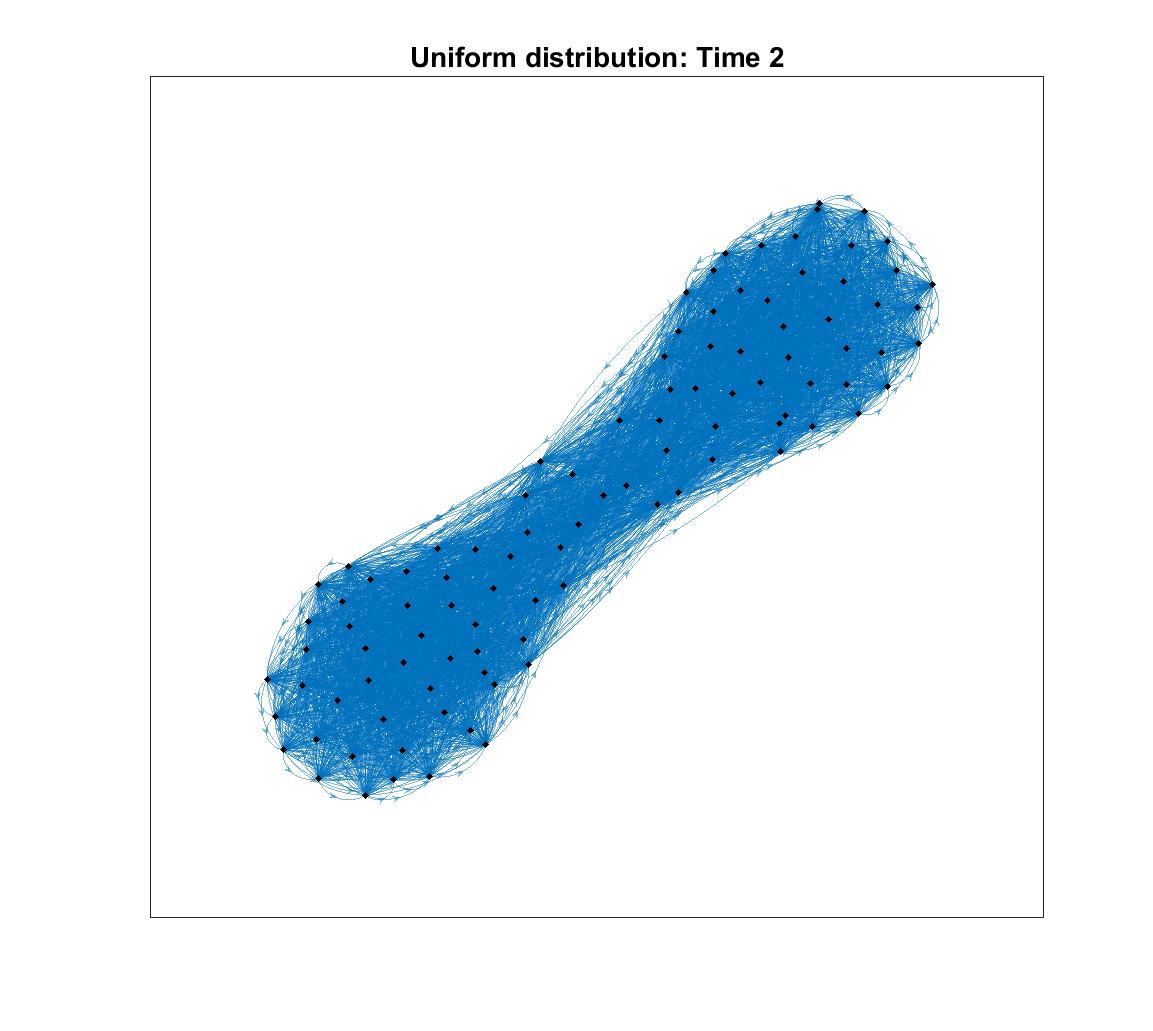}
		\end{subfigure}
		~
		\begin{subfigure}[b]{0.48\textwidth}
			\centering
			\includegraphics[width=\textwidth]{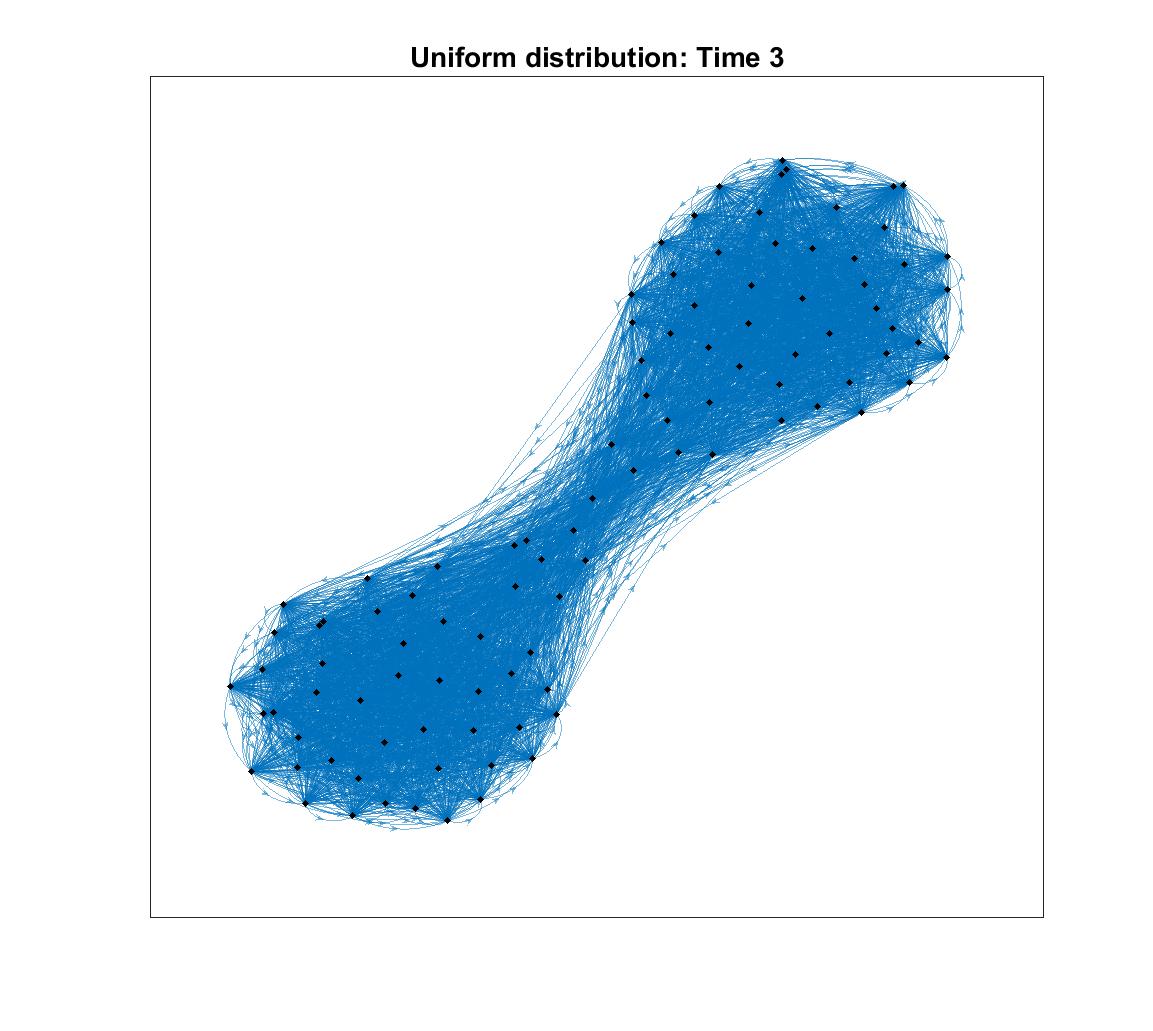}
		\end{subfigure}
		~
		\begin{subfigure}[b]{0.48\textwidth}
			\centering
			\includegraphics[width=\textwidth]{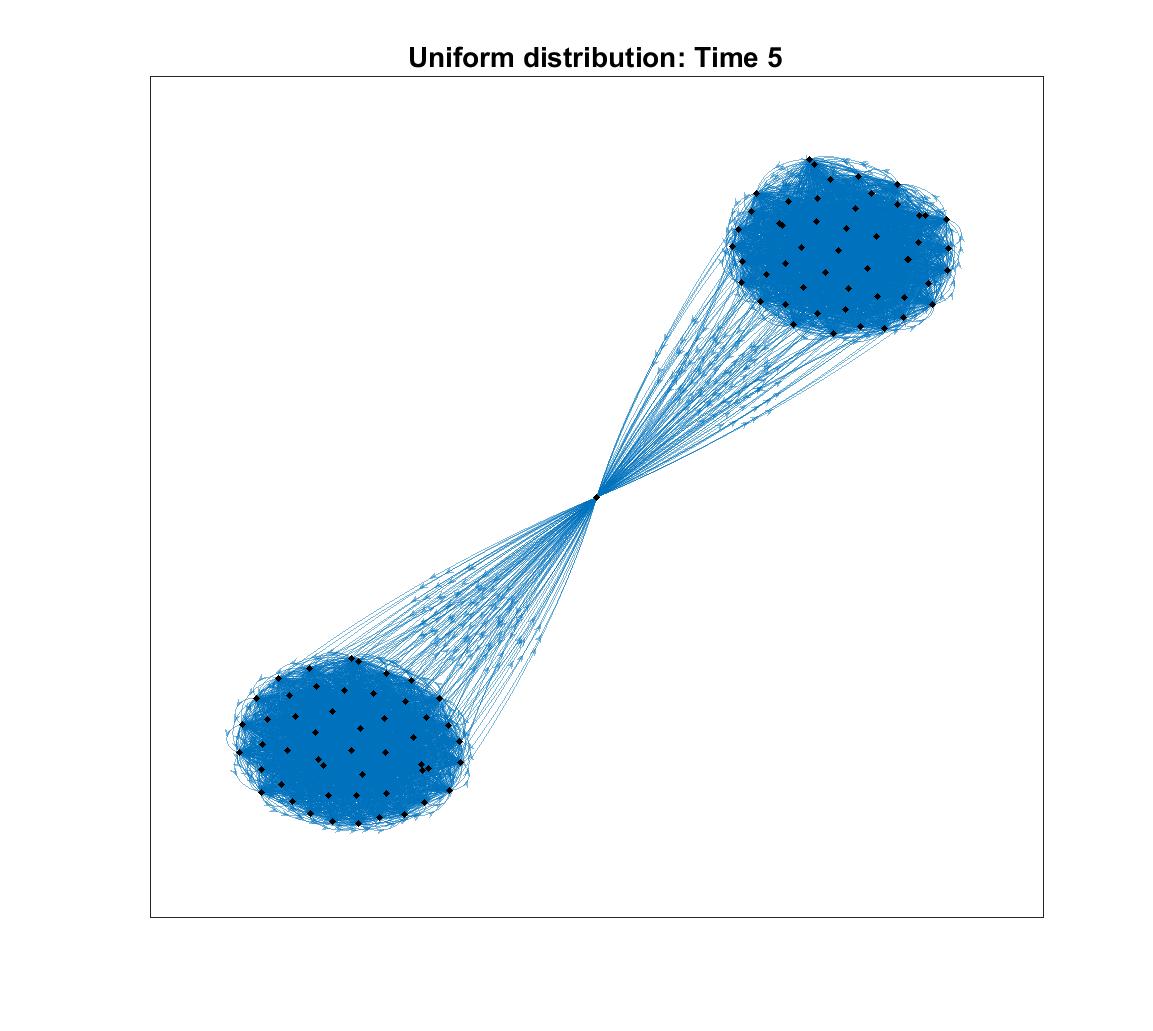}
		\end{subfigure}
		~
		\begin{subfigure}[b]{0.48\textwidth}
			\centering
			\includegraphics[width=\textwidth]{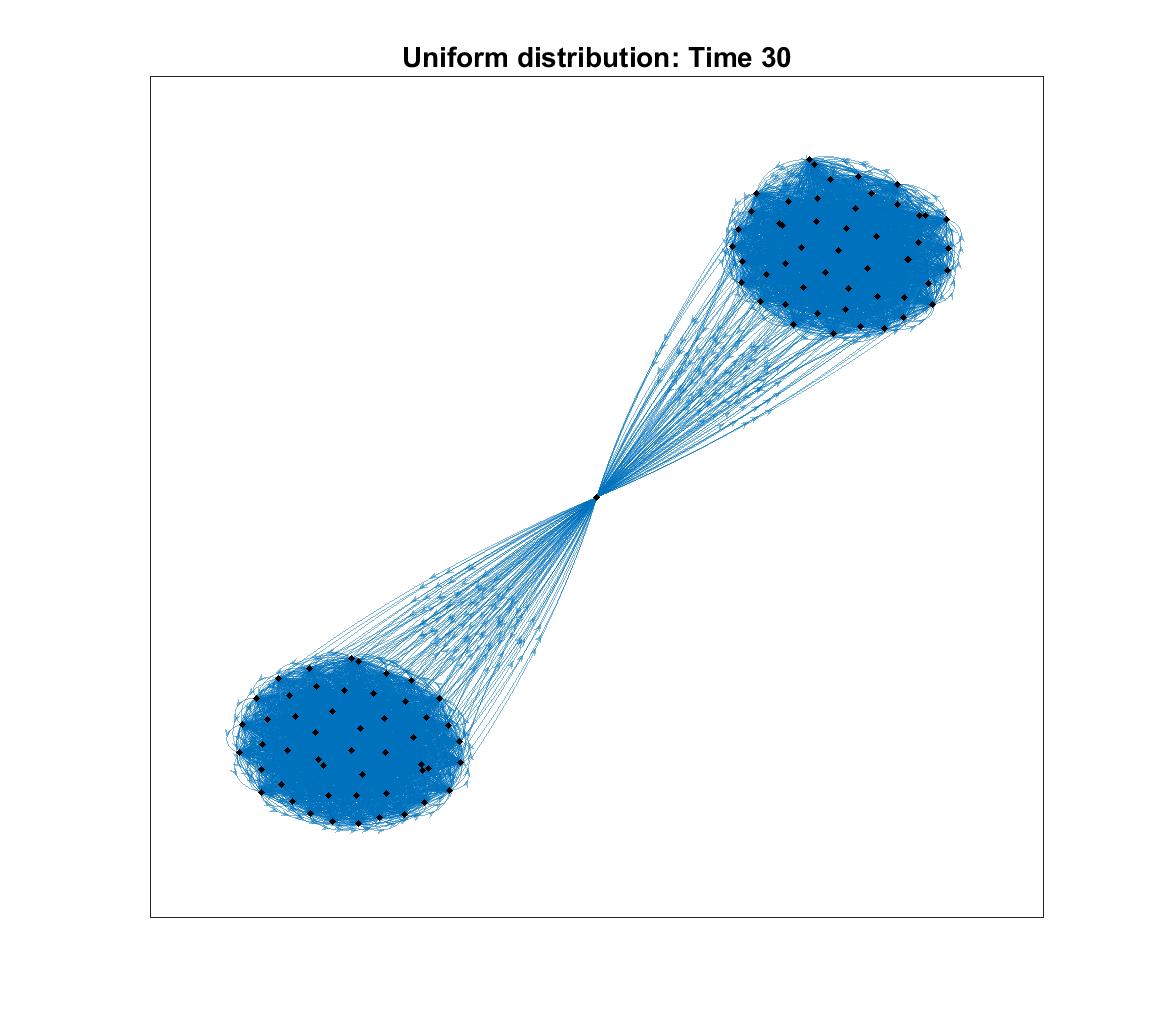}
		\end{subfigure}
		~
		\begin{subfigure}[b]{0.48\textwidth}
			\centering
			\includegraphics[width=\textwidth]{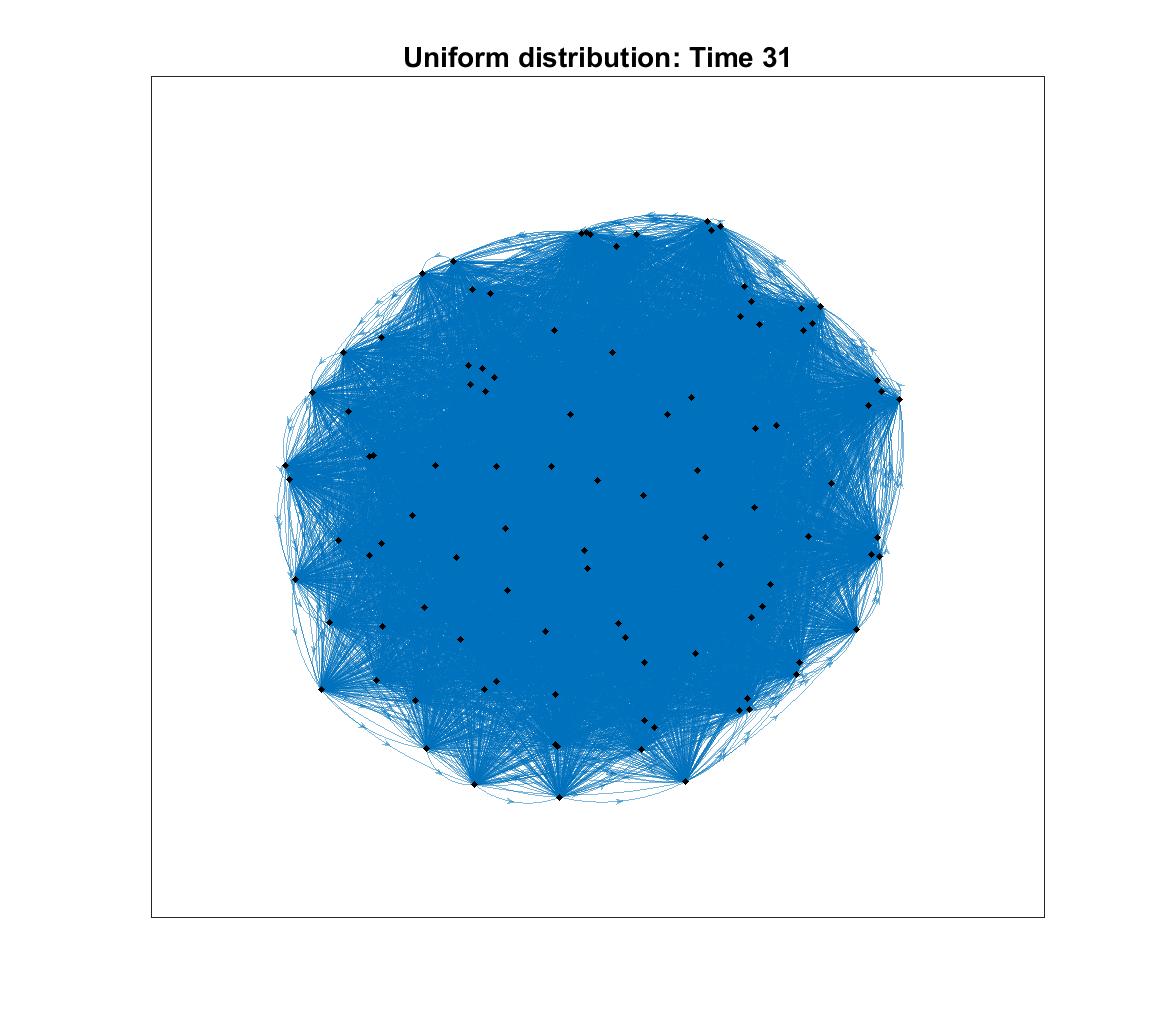}
		\end{subfigure}
		\caption{Evolution of the network: the case of a uniform distribution, $n=101$ agents, $f=0.5$ and $V=0.0350917$. These parameters are the same used in the right panel of Figure \ref{fig:non-monotone}. Periods 1-2-3-5-30-31. The network remains unchanged from period 5 to 30. In period 31 the number of links explodes and the complete network takes place.}
		\label{fig:evolution_uni_connected1}
	\end{figure}

	
	\subsection{Normal distribution (Fig. \ref{fig:gaussian})}
	Finally, we report the results of the simulations for the case with a normal distribution, from Figure \ref{fig:gaussian}. We highlight the initial high density of the network. However, over time, the central group become denser, as agents find abundance of profitable connections. On the other hand, the tails react less initially, and so this creates a discrepancy in their opinions. Eventually, The extreme groups will find themselves cutting the links with the moderate group and create a separate component. When the process stabilizes, in each component agents form the complete network. There is therefore partial consensus, as each component ends up having the same opinion. However in the overall society we have disagreement, which persists in the long-run.
 
	\begin{figure}[ht]
		\centering
		\begin{subfigure}[b]{0.48\textwidth}
			\includegraphics[width=\textwidth]{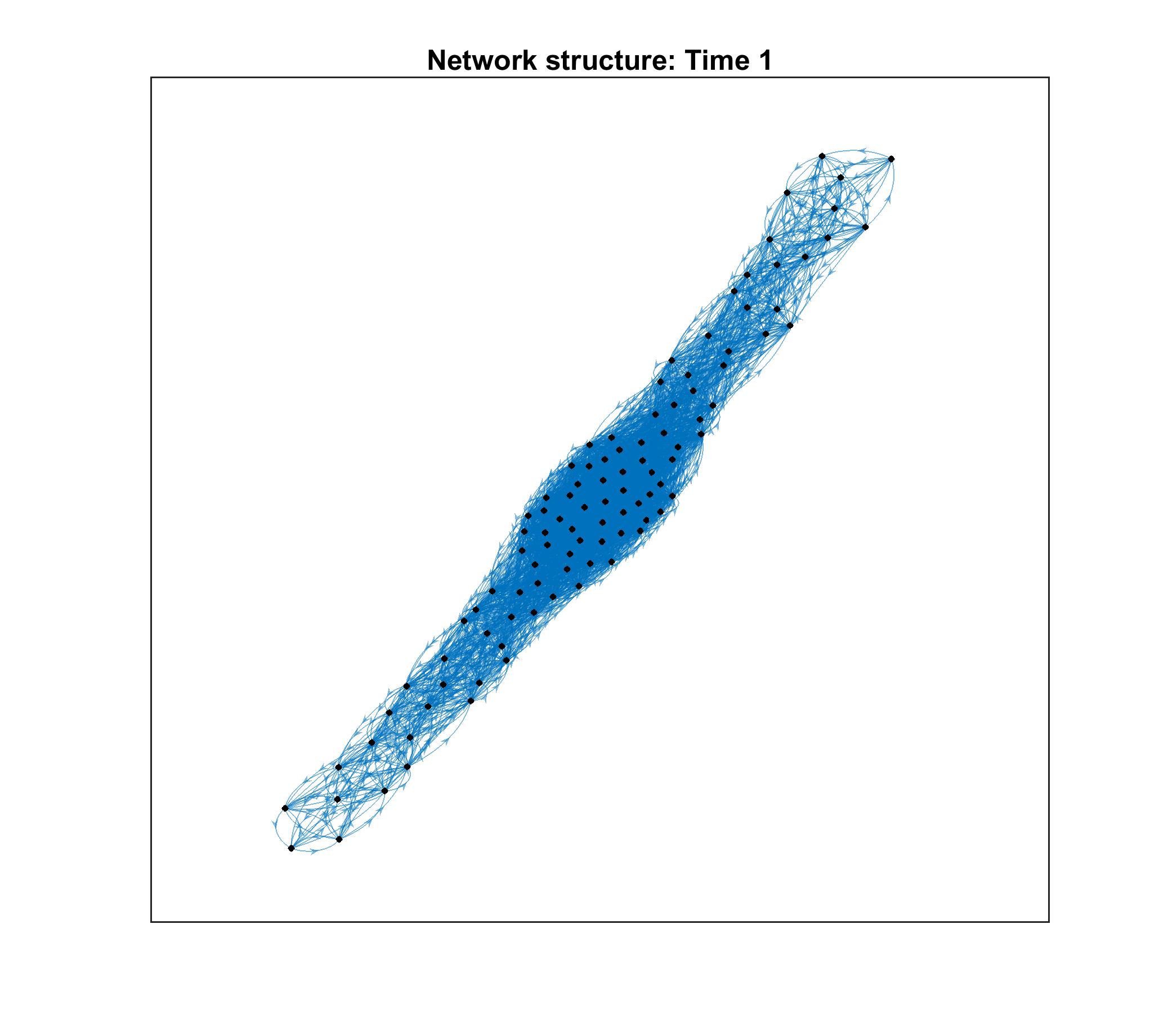}
		\end{subfigure}
		~
		\begin{subfigure}[b]{0.48\textwidth}
			\centering
			\includegraphics[width=\textwidth]{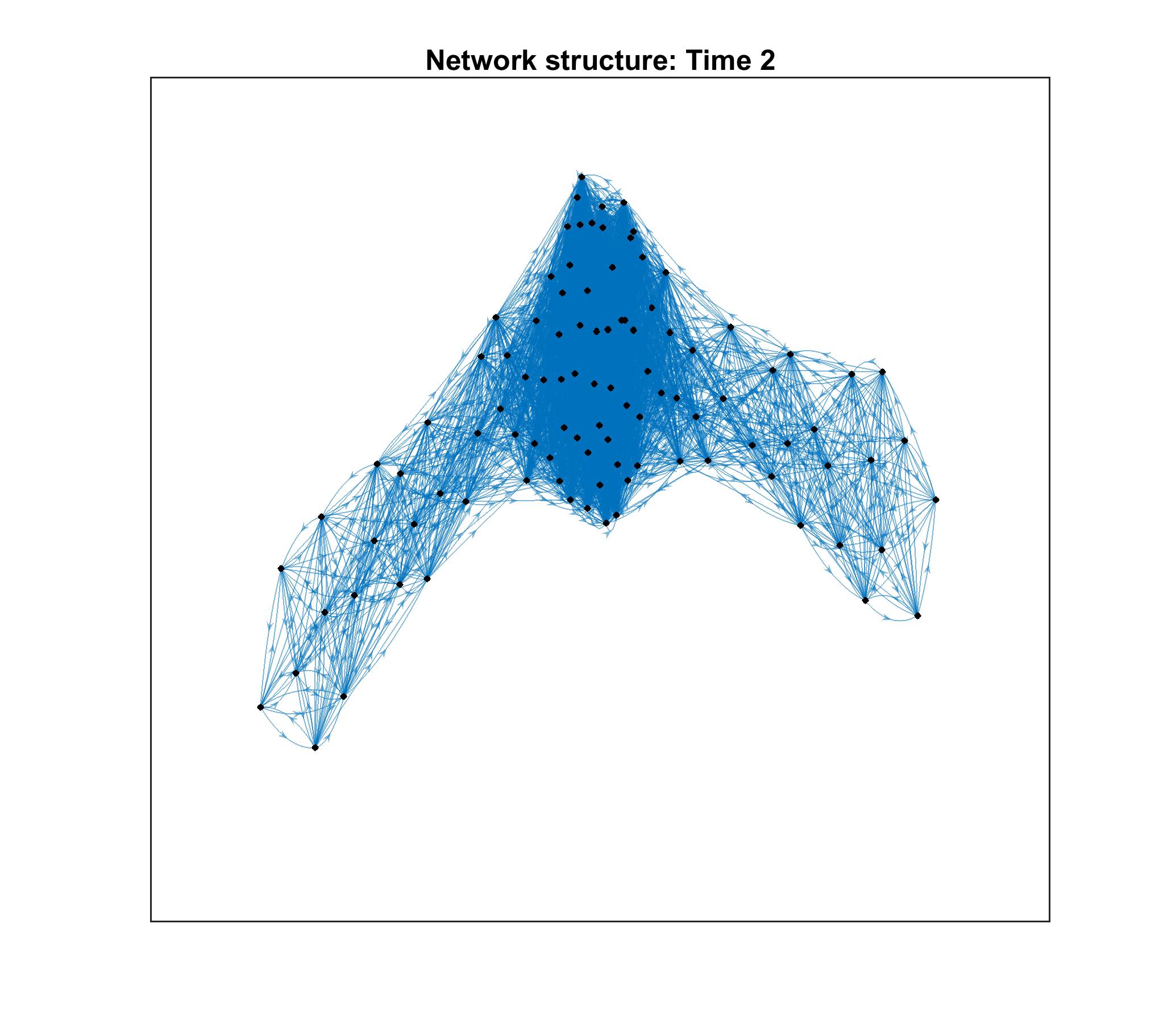}
		\end{subfigure}
		
		\begin{subfigure}[b]{0.48\textwidth}
			\includegraphics[width=\textwidth]{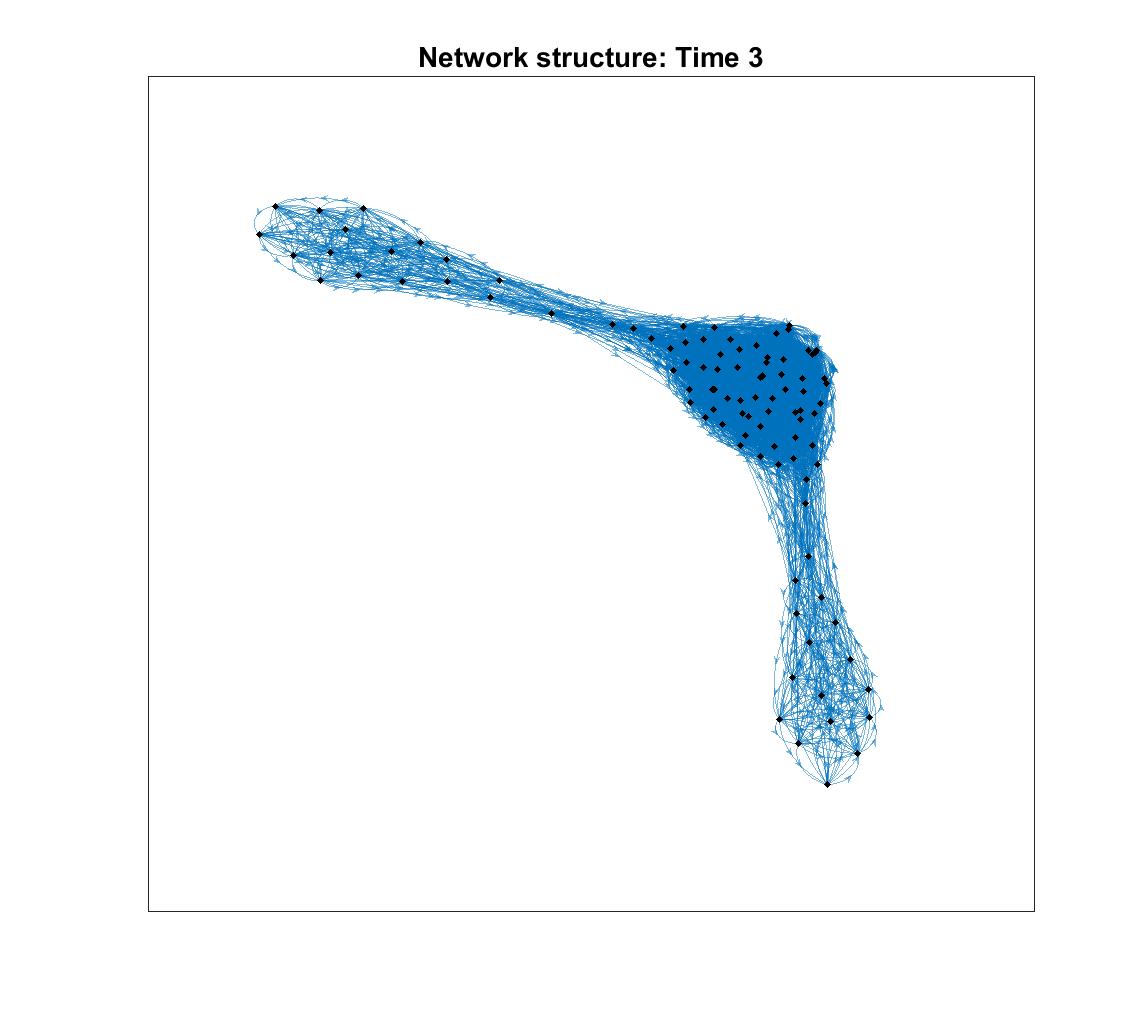}
		\end{subfigure}
		~
		\begin{subfigure}[b]{0.48\textwidth}
			\centering
			\includegraphics[width=\textwidth]{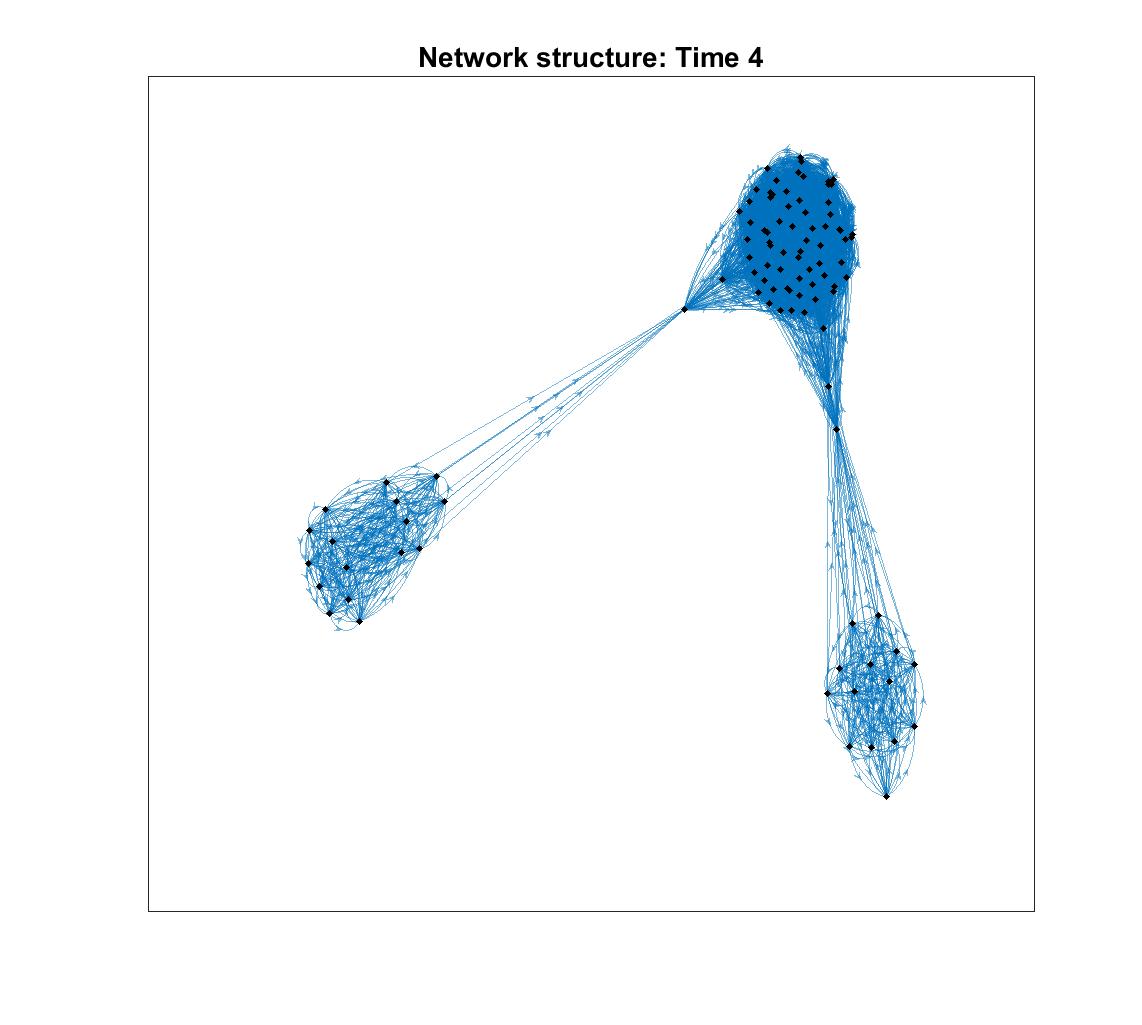}
		\end{subfigure}
		
		\begin{subfigure}[b]{0.48\textwidth}
			\includegraphics[width=\textwidth]{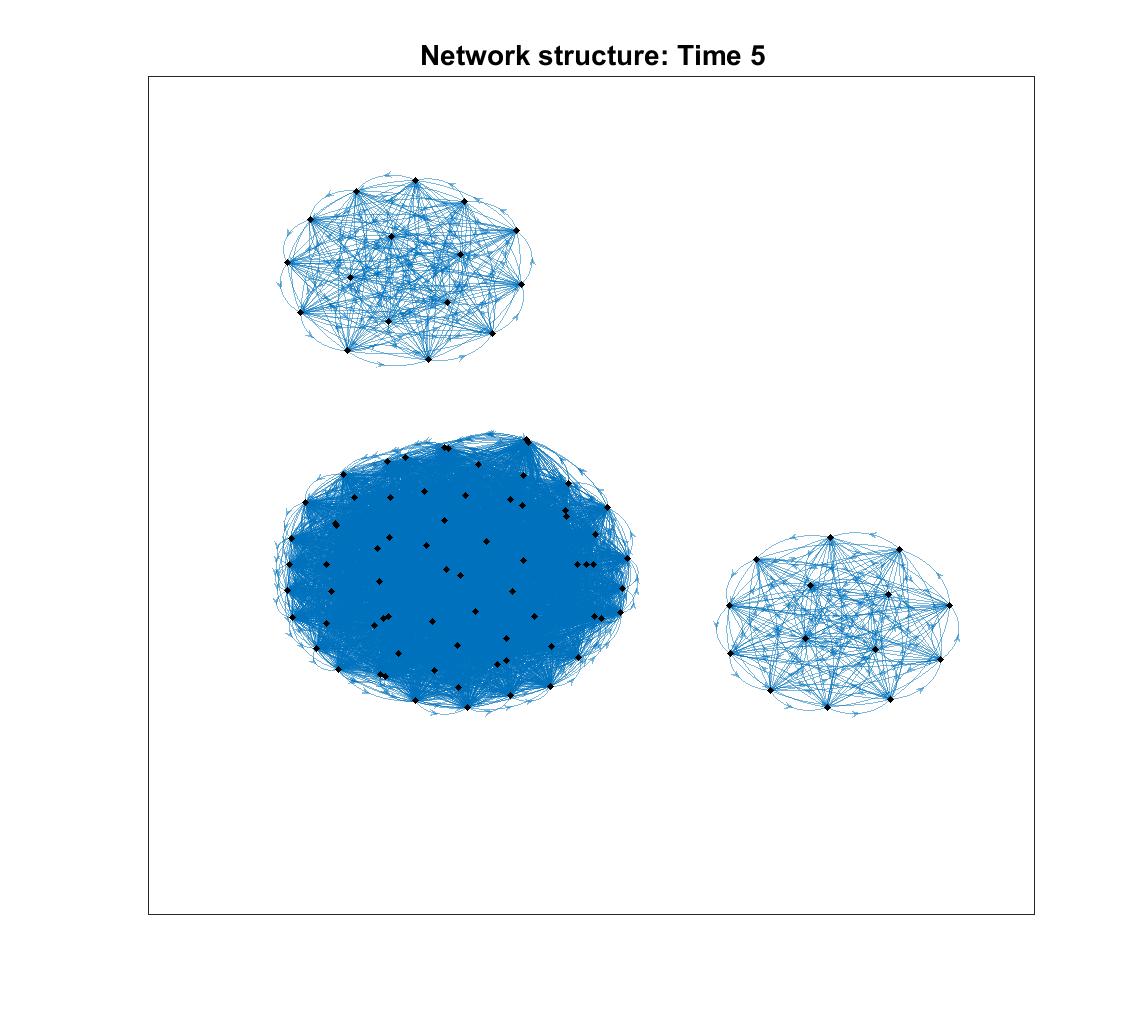}
		\end{subfigure}
		~
		\begin{subfigure}[b]{0.48\textwidth}
			\centering
			\includegraphics[width=\textwidth]{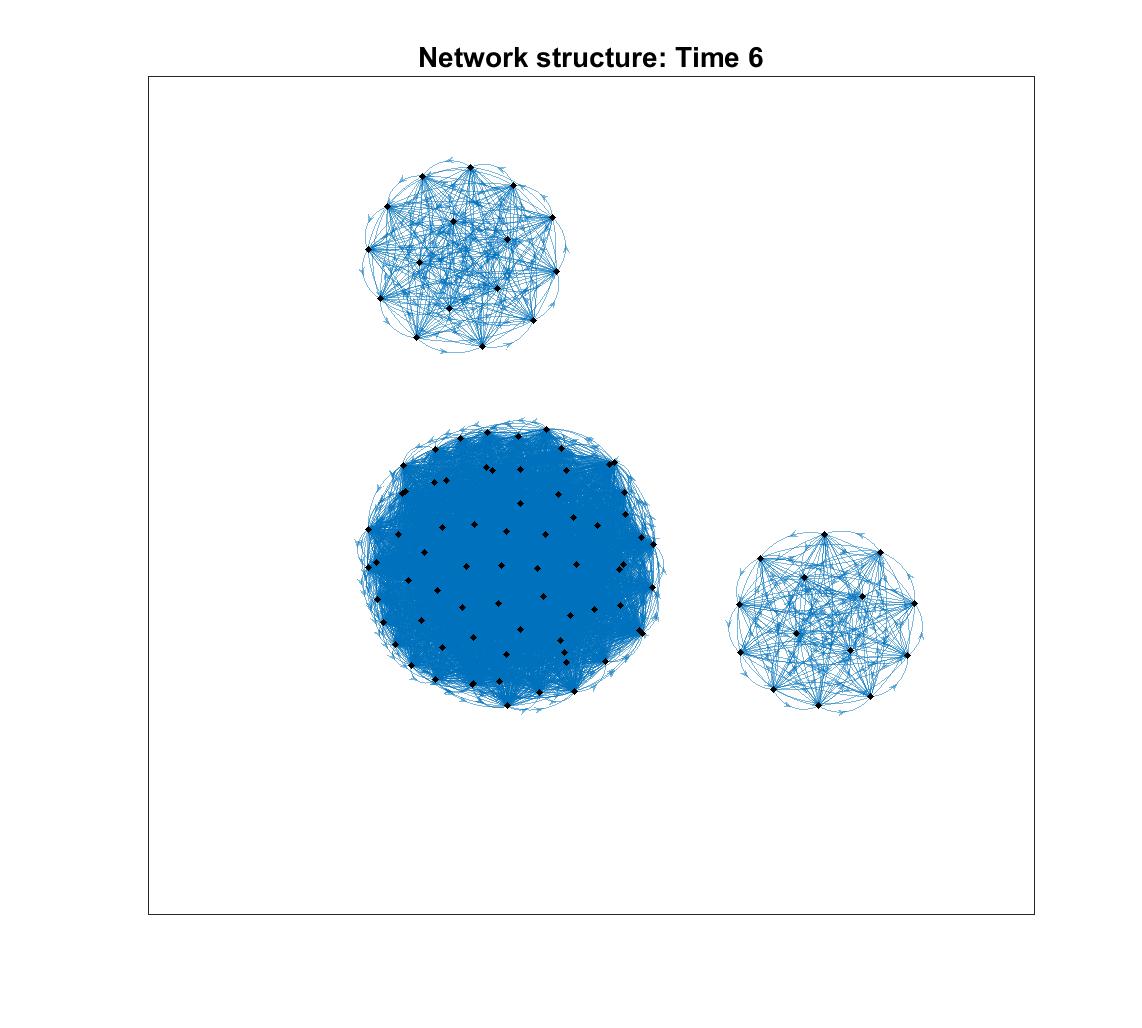}
		\end{subfigure}
		\caption{Evolution of the network: the case of a normal distribution, $n=101$ agents, $f=0.5$ and $V=0.01$. The parameters are the same of those used in the left panel of Figure \ref{fig:gaussian}.}
		\label{fig:evolution_normal}
	\end{figure}

\clearpage

 \section{About polarization}\label{app:polarization_normative}
\label{app:polarization}


To measure polarization in our framework, we partition the population into $s$ segments. Hence, we calculate the fraction of the population whose opinion lies within each segment, in a given point in time. 
Following \cite{esteban1994measurement} we measure polarization using the following formula:
	
\begin{equation}\label{eq:polarization}
	P(\rho,\textbf{x})=K\sum_{i=1}^{s}\sum_{j=1}^{s}\rho_i^{1+\alpha}\rho_j|\bar{x}_i-\bar{x}_j|
\end{equation}

where $\bar{x}_i$ with $i\in\{1,\ldots,s\}$ is the average opinion of the agents in the segment $i$ of the partition. Furthermore, the polarization measure requires two additional parameters, $K$ and $\alpha$. While $K$ is just a shifter and matters for re-scaling purposes only, the parameter $\alpha$ is what captures the essence of polarization. If $\alpha$ is equal to 0, the measure in equation \eqref{eq:polarization} coincides with the Gini index. Thus, as in \cite{esteban1994measurement}, we maintain the restriction $\alpha\in(0,1.6]$.\footnote{See Theorem 1 in \cite{esteban1994measurement}.} Using this measure of polarization, in Figure \ref{fig:non-monotone}, we show the numerical values associated with the simulations we displayed in section \ref{sec:nonmono}.

\begin{figure}[ht!]
		\centering

		\begin{subfigure}[t]{0.48\textwidth}
			\centering
			\includegraphics[width=\textwidth]{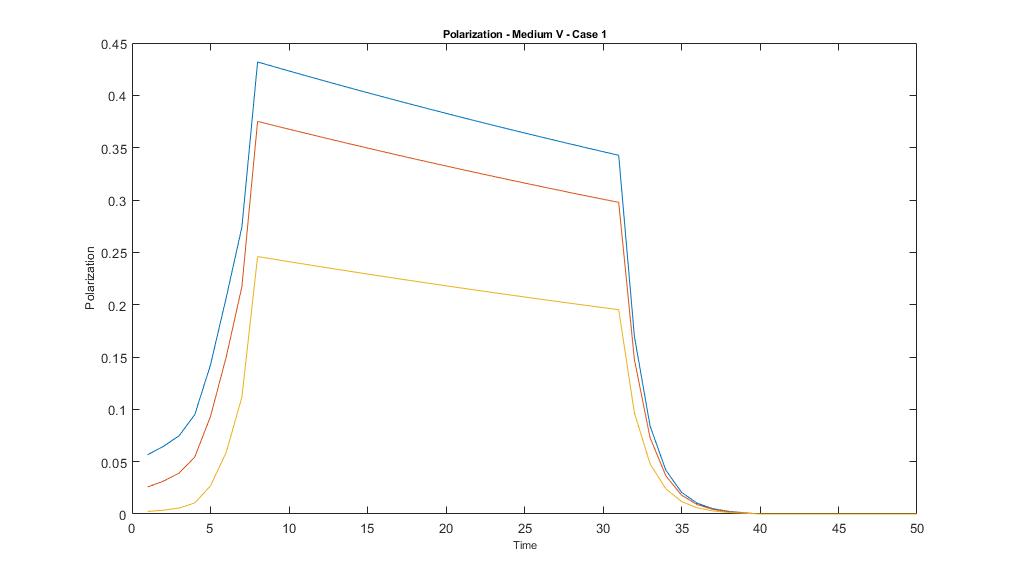}
			\caption{$f=0.5$ and $V=0.0350917$: Passed a threshold value, the opinions converge in the long run. Polarization suddenly increases, then slowly decreases until the network becomes fully connected and goes to 0, i.e. opinions have converged. Each curve is calculated with a different level of $\alpha$, respectively 0.8, 1, and 1.6.}
		\end{subfigure}
    		~
		\begin{subfigure}[t]{0.48\textwidth}
			\includegraphics[width=\textwidth]{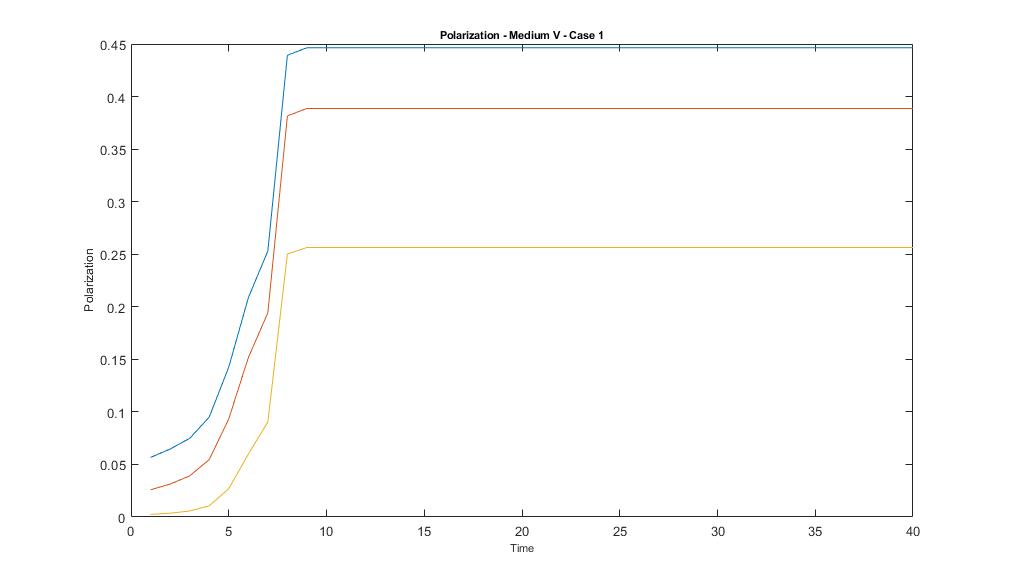}
			\caption{$f=0.5$ and $V=0.0350916$: The network disconnects. Polarization increases when agents condensate towards the extremes of the distribution of opinions. Then it reaches its maximum value. Each curve is calculated with a different level of $\alpha$, respectively 0.8, 1, and 1.6.}
		\end{subfigure}

		\caption{Plot of the dynamics starting of polarization with a uniform distribution, $n=101$ and intermediate values of $V$.}
	\end{figure}

	\bibliographystyle{chicago}
	\bibliography{biblio}

\end{document}